\newtheorem{theorem}{Theorem}
\newtheorem{lemma}{Lemma}
\theoremstyle{definition}
\theoremstyle{remark}
\newtheorem{remark}{Remark}
\begin{document}
\newenvironment{sequation}{\begin{equation}\small}{\end{equation}}
\title{Distributed Fusion Estimation for Stochastic Uncertain Systems with Network-Induced Complexity and Multiple Noise}

\author{Li Liu$^*$\\
School of Information Science\\
and Electrical Engineering,\\
Ludong University, Yantai 264025, China\\
{\tt\small liulildu@163.com}
\and
Wenju Zhou\\
School of Mechatronic Engineering\\
and Automation, Shanghai University,
\\Shanghai 200072, China\\
{\tt\small zhouwenju2004@126.com}
\and
Minrui Fei\\
School of Mechatronic Engineering\\
and Automation, Shanghai University,
\\Shanghai 200072, China\\
{\tt\small mrfei@staff.shu.edu.cn}
\and
Zhile Yang\\
Shenzhen Institute of Advanced Technology,\\
Chinese Academy of Sciences,\\
Shenzhen 518055, Guangdong, China\\
{\tt\small zyang07@qub.ac.uk}
\and
Hongyong Yang\\
School of Information Science\\
and Electrical Engineering,\\
Ludong University, Yantai 264025, China\\
{\tt\small hyyang@ldu.edu.cn}
\and
Huiyu Zhou\\
School of Informatics,\\
University of Leicester, United\\
{\tt\small hz143@leicester.ac.uk}
}

\maketitle

\begin{abstract}
This paper investigates an issue of distributed fusion estimation under network-induced complexity and stochastic parameter uncertainties. First, a novel signal selection method based on event-trigger is developed to handle network-induced packet dropouts as well as packet disorders resulting from random transmission delays, where the ${H_2}/{H_\infty }$ performance of the system is analyzed in different noise environments. In addition, a linear delay compensation strategy is further employed for solving the complexity network-induced problem, which may deteriorate the system performance. Moreover, a weighted fusion scheme is used to integrate multiple resources through an error cross-covariance matrix. Several case studies validate the proposed algorithm and demonstrate satisfactory system performance in target tracking.
\end{abstract}

\newenvironment{shrinkeq}[1]
{ \bgroup
  \addtolength\abovedisplayshortskip{#1}
  \addtolength\abovedisplayskip{#1}
  \addtolength\belowdisplayshortskip{#1}
  \addtolength\belowdisplayskip{#1}}
{\egroup\ignorespacesafterend}
\section{Introduction}

Energy Internet has been a new and advanced paradigm established by effective integration of energy and information network infrastructures involving traditional centralized generation \cite{Bib1}, distributed energy resources \cite{Bib2}, advanced communication \cite{Bib3}, smart metering \cite{Bib4}, intelligent computing \cite{Bib5}, and smart management systems \cite{Bib6}. The structure of Energy Internet has witnessed promising solutions with improved efficiency to holistic energy flow problems, while accommodating the transformation of traditional power networks. However, owing to the requirements of growing system scale, in order to address the challenges in system scalability and reliability, a new distributed networked control system (DNCS) needs to be developed.

In the existing DNCS structure, the data or signals are exchanged between three system components (e.g. sensors, controllers and actuators) using a shared communication network. Furthermore, the system component is physically distributed and interconnected with the others in order to coordinate their tasks for achieving the desired objectives \cite{Bib7}. Thus, distributed controllers are capable of coordinating their behaviors by transmitting/receiving information to/from other controllers within a certain neighboring area. The communication networks are introduced to the distributed systems \cite{Bib8,Bib9} and this inevitably produces network-induced complexity (such as random transmission delays, packet dropouts, packet disorders, and missing/fading measurements), which may significantly deteriorates the system performance \cite{Bib10}. Therefore, it is a challenging issue for Energy Internet to incorporate distributed information with network-induced complexity in controller design and time-sensitive applications \cite{Bib11}.

In engineering practice, noise can be categorized into three types: bounded and stochastic uncertainty noise, energy-bounded noise and uncertain white noise \cite{Bib34}. In ship control applications \cite{Bib12_3}, for example, we may expect different environmental changes (e.g. winds, waves and currents) during the ship navigation, leading to bounded and stochastic disturbance uncertainties for the control units. At the same time, sensor noise may be of unknown characteristics but bounded power, while the instrument input can be mixed with uncertain Gaussian white noise. To effectively establish a working system, stochastic system uncertainty is investigated using a multiplicative noise model \cite{Bib11,Bib12}. The correlated noise (i.e. auto and cross-correlated noise) \cite{Bib12,Bib14} is considered when they are handling the estimation problem over the networks. Moreover, the bounded noise \cite{Bib15,Bib16} is introduced in a realistic networked system to evaluate the measurement error and external disturbances. On the other hand, filtering, in particular ${H_\infty }$ filtering, plays an important role in the field of signal processing and communications. However, considering signal over sensor networks with transmission delays in the communication process, ${H_\infty }$ filtering \cite{Bib8,Bib17} is insensitive to the uncertainty. To solve the parametric uncertainty problem, robust filtering based on finite-horizon Kalman filter \cite{Bib19} is proposed to reduce the conservativeness and maintain the performance of the filter.

For the transmission over networks, the one-step prediction compensation is used for most of existing systems due to the bandwidth constraint to handle transmission delays and/or packet dropouts. Such that the augmented state scheme is applied to obtain optimal estimation \cite{Bib20}. A model for multi-sensor fusion system was established by using the full-rank decomposition approach and to describe the random observation delays and packet dropouts \cite{Bib21}. Moreover, to break the resource constraints (i.e. bandwidth or energy), a measurement re-organization method \cite{Bib22} has been developed to simplify the system model with delays. Another classic method is the modelling of delayed system by sequences of Bernoulli random variable distributions \cite{Bib20, Bib23}, where the measurement model for each sensor is augmented by different characteristics. Due to random delays for the involved sensors, the local estimator based on the compensation or augmentation method possesses high computation cost.

To reduce the unnecessary resource consumption, the delayed system is established by introducing the event-triggered scheme. The containment control for multi-agent system was investigated under the centralized event-triggered containment algorithm to deal with constant time delays \cite{Bib25}. Li \emph{et al.} \cite{Bib26} presented an event-triggered sampled-data stabilization for switching linear systems to obtain the average dwell time conditions. Moreover, packet disorders inevitably appear due to transmission delays. Several works on dealing with packet disorders have been developed. To store the data packets, the signal selection method of zero-order-holder (ZOH) chooses and receives the most recent arriving data packets \cite{Bib27}, whilst the signal sequence re-ordering method using the logic ZOH scheme \cite{Bib8,Bib28,Bib29} was introduced to choose the latest time-stamped data packets. It is worth pointing out that the logic ZOH based on event-trigger mechanism is widely applied in the field of networked control systems (NCSs). However, the issue of the fixed-time event-triggered consensus was studied for the multi-agent systems \cite{Bib28_2}. Therefore, handling signals with or without packet disorders, as well as estimating filter parameters are difficult problems. In addition, with the aid of the event-triggered scheme, designing the distributed fusion estimation is a complex problem.

For the sake of alleviating the negative influence during transmission, some promising estimation and filtering methods have been proposed for distributed systems. Liu \emph{et al.} \cite{Bib30} introduced a two level weighted fusion scheme using the reorganized innovation sequence as well as filtering error cross-covariance, and presented a distributed weighted fusion estimation method with transmission delay and cross-correlated noise. Different distributed fusion schemes, and event-triggered Kalman consensus filter \cite{Bib32} was used to deal with the distributed estimation. Considering reducing energy consumptions, the finite-horizon filtering \cite{Bib22,Bib33} derives the filter parameters from the upper bound for the covariance matrix of the estimation error. Furthermore, for a class of discrete stochastic systems, a finite horizon state estimation is investigated, which using the event-based modelling strategy to actively dropped the disordered packet \cite{Bib35}. To guarantee optimal estimation performance, Chen \emph{et al.} \cite{Bib34} studied the distributed fusion estimation to handle admissible parameter uncertainties. The information fusion estimation method based on bio-inspired computation was presented for the designing dynamic 3D positioning system \cite{Bib37_4}. Since the distributed estimation is converted into a convex optimization problem, the estimated state has better robustness for the transmission performance.

For a class of stochastic uncertain networked systems, this paper intends to design the local filter with packet dropouts and disorders resulting from random transmission delays, as well as multiple noise resources. Due primarily to the difficulty in handling the disordered and missing data packets, dealing with stochastic and deterministic parameter uncertainties simultaneously. Motivated by the above analysis, the purpose of this paper concentrates on establishing a new system description, analyzing system performance and designing local/fusion estimator. The main contributions of this paper are summarized as follows:

(i) Suppose that the proposed system model contains energy-bounded noise and uncertain white noise. Due to the difficulty of estimating the appropriate filter parameters, the logic ZOH scheme is widely used in NCSs \cite{Bib8,Bib28,Bib29,Bib36,Bib34_2}. Based on the signal selection method for the logic ZOH, the event-triggered condition is introduced, and a system model is established to effectively deal with the network-induced complex and improve system performance. The proposed system model synthetically describes the network-induced phenomena and simplifies the characterization of the complex network environments.

(ii) For the energy-bounded noise and uncertain white noise, studying the optimization problem is difficult as the objective function is of significant nonlinear items \cite{Bib16,Bib34,Bib42}. The assumption on the parameter uncertainty can be satisfied by analyzing the mixed ${H_2}/{H_\infty }$ estimation performance. Subsequently, different from using sequences of Bernoulli random variables distribution \cite{Bib20,Bib36_3,Bib37_3,Bib24}, a simplified local estimation scheme in the augmented state-space with the re-organized measurements is presented, aiming to consume fewer network resources and provide the appropriate filter parameters.

(iii) To handle packet dropouts and missing data, instead of adopting the augmented state approach \cite{Bib20,Bib40} of compensating every step delay, a linear delay compensation for packet dropouts is firstly presented to suppress the growing error accumulation and alleviate the computational burden by re-ordering measurement sequences. Moreover, to further improve the estimation accuracy, the missing data packets due to the packet dropouts are collected using the one-step prediction method.

The rest of this paper is organized as follows: Establish the stochastic uncertain model using the logic ZOH scheme in Section 2. Section 3 shows a local estimator and analyses the estimation performance. Section 4 elaborates the distributed fusion estimation approach. A numerical example is presented in Section 5 and Section 6 draws the conclusions.

\textbf{Notations}: Throughout this paper ${{\mathbb{R}}^{r}}$ and ${{\mathbb{R}}^{r\times r}}$ denote the $r$ and $r\times r$ -dimensional Euclidean space, respectively. $E\left(  \cdot  \right)$ represents the mathematical expectation operator, and the superscript $T$ is the transpose. $M$ is a real symmetric matrix satisfying $M>0$, while ${{M}^{-1}}$ indicates that the inverse of $M$. Moreover, $tr\left( M \right)$ is the trace of $M$, and the symmetric terms in a symmetric matrix are denoted by $*$. $I$ represents the identity matrix with an appropriate dimensions. ${l_2}\left[ {0,\;\infty } \right)$ represents the space of a square integral function on $\left[ {0,\;\infty } \right)$, and ${{\delta }_{k,l}}$ denotes the Kronecker function (i.e. ${{\delta }_{k,l}}=1$ if $k=l$, and ${{\delta }_{k,l}}=0$ if $k\ne l$).

\section{Problem formulation and analysis}
\subsection{System description}
A considered object is measured by $L$ sensors through the stochastic uncertain system (Fig.1 shows that the local estimation is solved by the spatially distributed sensors, subsequently, it is sent to the fusion center).
\begin{figure}[t]
	\begin{center}
		\includegraphics[width=0.5\linewidth]{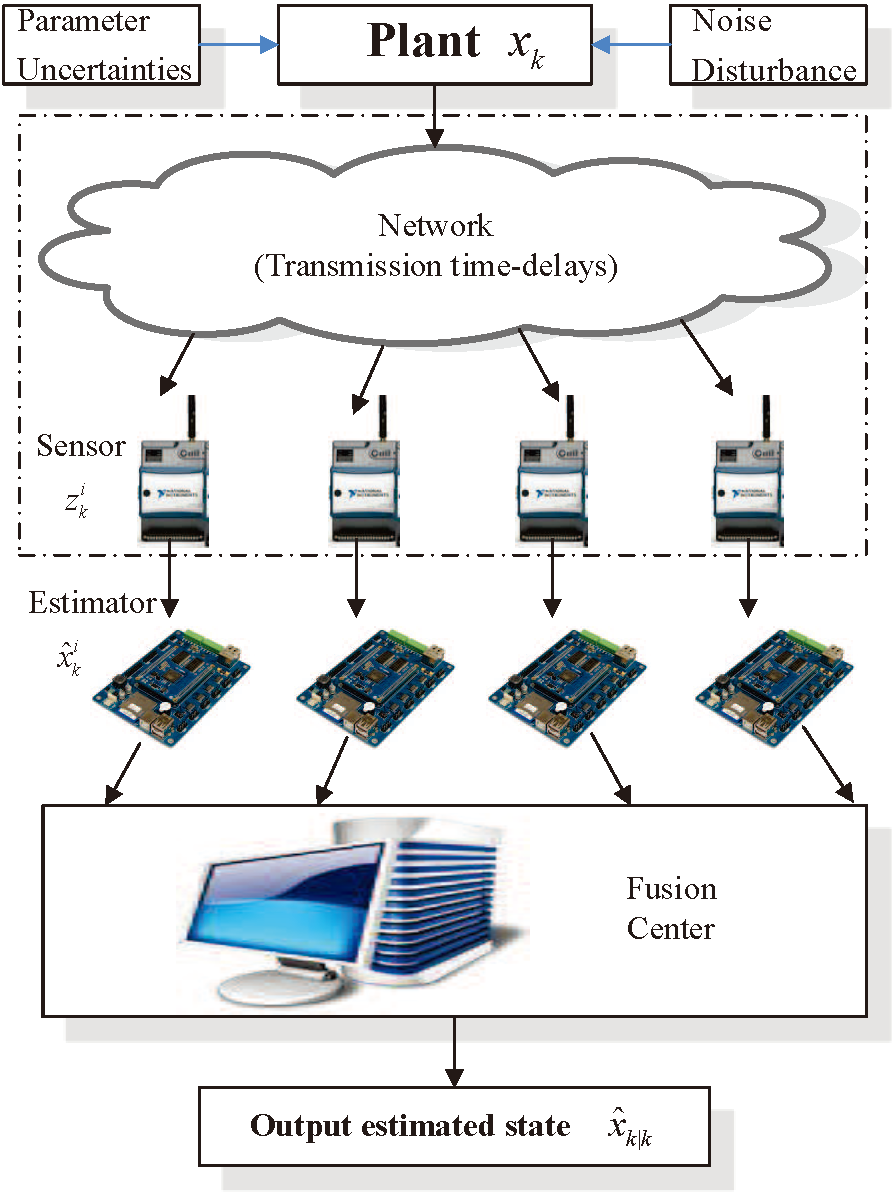}
	\end{center}
    \vspace{-4mm}
	\caption{\small Structure of distributed fusion estimation}
	\label{Fig 1}
\end{figure}

This process is described by the following linear discrete-time model:
\begin{shrinkeq}{-0.8ex}
\begin{sequation}\label{eq:1}
\begin{gathered}
  {x_{k + 1}} = \left( {{A_k} + {\mathcal{F}_k}{F_k}{E_k} + \sum\limits_{\vartheta  = 1}^\hbar  {{A_\vartheta }{\varpi _{\vartheta ,k}}} } \right){x_k} + {B_k}{w_k} \hfill \\
  \;\; + {G_k}{v_k},\;\;\;\;\;\;\;\;\;\;\;\;\;\;\;\;\;\;\;\;\;\;\;\;\;\;\;\;\;\;\;\;\;\;k = 1,2, \cdots  \hfill \\
\end{gathered}
\end{sequation}
\small
\begin{equation}\label{eq:2}
z_k^i = \left( {C_k^i + \mathcal{H}_k^i{F_k}E_k^i} \right){x_k} + B_k^i{w_k} + G_k^i{v_k}\;,\;i = 1,\; \cdots \;,\;L
\end{equation}
\end{shrinkeq}
where ${x_k} \in {\mathbb{R}^r}$ represents the system state, and $z_k^i \in {\mathbb{R}^{{m_i}}}$ refers to the measurement output from the $i^{th}$ sensor at the time instant $k$. It is assumed that the initial state ${x_0}$ with mean ${\mu _0}$ as well as covariance ${P_0}$, which is uncorrelated with other noise. ${A_k} \in {\mathbb{R}^{r \times r}}$, ${B_k} \in {\mathbb{R}^r}$, ${G_k} \in {\mathbb{R}^r}$, $C_k^i \in {\mathbb{R}^{{m_i} \times r}}$, ${\mathcal{F}_k}$, ${E_k}$, $\mathcal{H}_k^i$, $E_k^i$, $B_k^i$ and $G_k^i$ are known time-varying matrices. Note that ${F_k}$ represents the parameter uncertainty satisfying ${F_k}F_k^T \leqslant I$. ${w_k} \in {l_2}\left[ {0,\infty } \right)$ denotes the energy-bounded noise. The multiplied term ($\sum\limits_{\vartheta  = 1}^\hbar  {{A_\vartheta }{\varpi _{\vartheta ,k}}} {x_k}$) given in Eq.(\ref{eq:1}) presents the stochastic parameter uncertainty. ${\varpi _{\vartheta ,k}} \in \mathbb{R}\;\;\left( {\vartheta  = 1,\; \cdots \;,\;\hbar } \right)$ and ${v_k}$ with covariance matrices ${\theta _{\vartheta ,k}}$ and ${R_k}$ denote the Gaussian white noise. Note that they are mutually uncorrelated and independent. In practice, it is difficult to obtain accurate covariance ${\theta _{\vartheta ,k}}$. Therefore, the definition of the lower bound $\theta _\vartheta ^L$ and the upper bound $\theta _\vartheta ^U$ should be carefully determined, which satisfy the following constraint:

\begin{shrinkeq}{-1ex}
\begin{equation}\label{eq:3}
\small
\theta _\vartheta ^L \leqslant {\theta _{\vartheta ,k}} \leqslant \theta _\vartheta ^U\;,
\end{equation}
\end{shrinkeq}
where the lower and the upper bounds are known.

In this paper, for the given bounds $\theta _\vartheta ^L$ and $\theta _\vartheta ^U\;\left( {\vartheta {\text{ = }}1,\; \cdots \;,\;\hbar } \right)$, the following three conditions are simultaneously satisfied \cite{Bib16,Bib34}:

\textbf{Condition 1.} When ${w_k} \equiv 0$ and ${v_k} \equiv 0$, the error estimation system is asymptotically mean-square stable.

\textbf{Condition 2.} Satisfying the zero-initial condition and ${v_k} \equiv 0$, each local estimation error (e.g. $e_k^i$) with an arbitrary energy-bounded noise ${w_k}$ conforms to the following inequality:
\begin{shrinkeq}{-1.5ex}
\begin{equation}\label{eq:4}
\small
\sum\limits_{k = 0}^\infty  {E\left( {{{\left( {e_k^i} \right)}^T}e_k^i} \right)}  < \gamma _i^2\sum\limits_{k = 0}^\infty  {E\left( {w_k^T{w_k}} \right)} ,
\end{equation}
where ${\gamma _i}$ denotes the ${H_\infty }$ disturbance attenuation level bound (DALB) for the estimation error. Note that $e_k^i\;\left( {i = 1,\; \cdots \;,\;L} \right)$ represent the difference between the actual and estimated states, which is defined in Eqs.(\ref{eq:17}) and (\ref{eq:18}).
\end{shrinkeq}

\textbf{Condition 3.} Each local estimation error $e_k^i$ with ${v_k}$ and ${w_k} \equiv 0$ (i.e. the Gaussian white noise and energy-bounded noise) guarantees that the upper bound of the ${H_2}$ performance
\begin{shrinkeq}{-1.5ex}
\begin{equation}\label{eq:5}
\small
{J^i} = {\lim _{k \to \infty }}E\left( {{{\left( {e_k^i} \right)}^T}e_k^i} \right)
\end{equation}
is minimal. Note that the relation between noise ${w_k}$ and ${v_k}$ are mutually uncorrelated and independent.
\end{shrinkeq}

\begin{remark}\label{Remark 1}
The aforementioned performance requirements for the local estimation of each subsystem are analyzed using the unified formulation (Conditions 1-3). Thus the filter parameters for the local estimation will be separately derived. Comparing with the distributed fusion estimation reported in \cite{Bib9,Bib11,Bib32}, the proposed approach considers multiple noise (e.g. white noise and  energy-bounded noise) simultaneously. In this sense, the noise assumption and modelling in Eqs.(\ref{eq:1}) and (\ref{eq:2}) indicates that the distributed fusion estimation approach can be implemented in practice, and will be made more generalised in the future development.
\end{remark}

\subsection{System modelling for signal selection scheme}
\begin{figure*}[t]
	\begin{center}
		\includegraphics[width=0.6\linewidth]{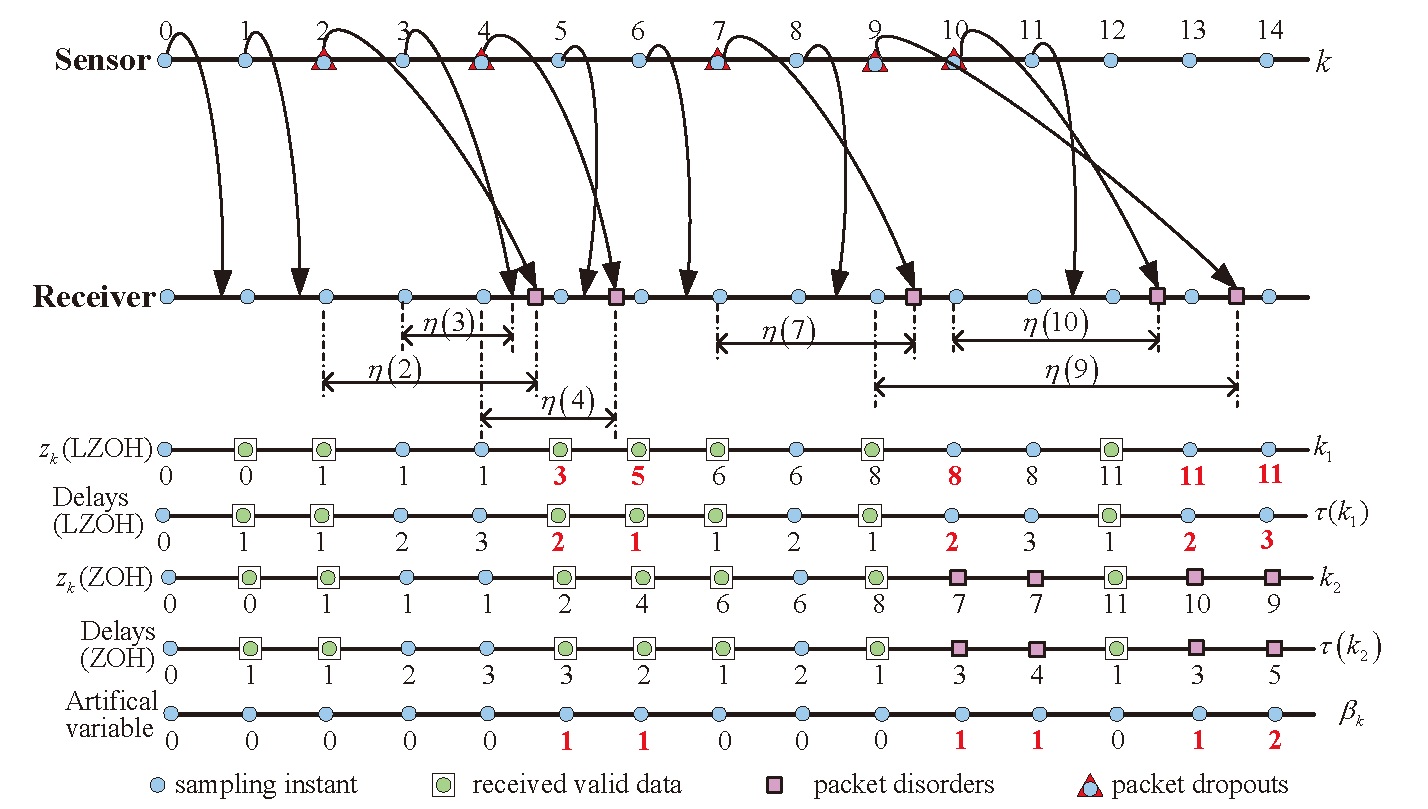}
	\end{center}
    \vspace{-4mm}
	\caption{Sorting procedure of the packet sequence for logic ZOH and ZOH}
	\label{Fig 2}
\end{figure*}

Due to the resource constraints, the network congestion unavoidably interrupts the system evaluation. During the time-stamped data packets transmission, the signal selection scheme is introduced based on the logic ZOH, and the latest data packet is chosen before being transmitted. Therefore, for the processor, the stored signals on the receiver will be updated until it receives the latest data packet, while other data packets are discarded. Furthermore, the packet disorders during the transmission are dropped via the logic ZOH \cite{Bib28}. A typical scenario describing the network-induced complexity is shown in Fig.2, and the detailed analysis is shown in Appendix A. Meanwhile, due to the limited service quality over the network, data packets are transmitted from the sensor to the estimator through a communication network in a single packet manner. Therefore, it is possible that an early leaving packet may arrive at the processor late, which leads to packet disorders. Since the disordered packets usually contain incorrect temporal signals, they should be discarded \cite{Bib36}.

Fig.2 illustrates the relationship between the states and variables. From the previous discussion, to formulate the relationship of current time $k$ and time-stamp, we define time-stamps ${k_1}$ and ${k_2}$ as those when we have the received data packets and the most recent transmission signal, which satisfies:
\begin{shrinkeq}{-1ex}
\begin{sequation}\label{eq:6}
k = \tau \left( {{k_1}} \right) + {k_1} = \tau \left( {{k_2}} \right) + {k_2}\;.
\end{sequation}
\end{shrinkeq}
\begin{remark}\label{Remark 2}
Suppose that the transmission delays are represented as $\eta \left( {{k_1}} \right)$ and $\eta \left( {{k_2}} \right)$ from the sensor to the processor with and without logic ZOH respectively, while $0 \leqslant \eta \left( {{k_1}} \right) \leqslant \eta \left( {{k_2}} \right) \leqslant N$ ($N$ denotes the maximum of the transmission delay) is satisfied. Subsequently, the transmission delays for the received data packets are denoted as $\tau \left( {{k_1}} \right) \in \mathbb{N}$ and $\tau \left( {{k_2}} \right) \in \mathbb{N}$ at the sampling time instant $k$. For the given Eq.(\ref{eq:6}), the sampling time instant for the received data packets consists of two components: the time-stamp and the transmission delay.
\end{remark}

Note that the received latest data packets approximate the current signal, and variable ${\beta _k}$ is used for expressing the relationship between ${k_1}$ and ${k_2}$, that is:
\begin{shrinkeq}{-1ex}
\begin{equation}\label{eq:7}
\small
{k_1} = {k_2} + {\beta _k}\;.
\end{equation}
In which ${\beta _k}$ is determined according to the time-stamp before being transmitted. Since the time-stamp ${k_2}$ for receiving the most recent data packet is not proceeding the latest one ${k_1}$, ${\beta _k} \geqslant 0$ is satisfied. In terms of the transmission delays obtained from Eqs.(\ref{eq:6}) and (\ref{eq:7}), $\tau \left( {{k_1}} \right) = \tau \left( {{k_2}} \right) - {\beta _k}$ is rewritten, and $\tau \left( {{k_1}} \right) \leqslant \tau \left( {{k_2}} \right)$ is satisfied.
\end{shrinkeq}

According to the definition of symbol ${{k}_{2}}$ and the ZOH scheme, it should be pointed out that if no new signal input the ZOH, its output keeps as the constant. For instance, at a specific time instant $k'$ (e.g. $k' = \left\{ {9,\;12,\;13} \right\}$ in Fig.2) as well as time-stamp $t\left( {{k}_{2}} \right)$, if
\begin{shrinkeq}{-1ex}
\begin{equation}\label{eq:8}
\small
t\left( {{k_2} + 1} \right) = k' + 1 - \tau \left( {{k_2} + 1} \right) < k' - \tau \left( {{k_2}} \right) = t\left( {{k_2}} \right)\;,
\end{equation}
the received signal at time instant $k' + 1$ is based on the previous signals with time-stamp $t\left( {{k}_{2}} \right)$, and more updated signals with time-stamp $t\left( {{k}_{2}}+1 \right)$ are not used even if they are available by the ZOH signal selection scheme. Therefore, the most recent transmission signal is kept until the latest packet is used no matter whether it is a packet disorder or not \cite{Bib28}.
\end{shrinkeq}

Then, using the logic ZOH in order to detect the packet disorders, the random transmission delay satisfies:
\begin{shrinkeq}{-1ex}
\begin{equation}\label{eq:9}
\small
\tau \left( {{k}_{1}}+1 \right)\le \tau \left( {{k}_{1}} \right)+1
\end{equation}
and
\begin{equation}\label{eq:10}
\small
t\left( {{k_1} + 1} \right) = k' + 1 - \tau \left( {{k_1} + 1} \right) \ge k' - \tau \left( {{k_1}} \right) = t\left( {{k_1}} \right)\;.
\end{equation}
\end{shrinkeq}
This implies that the transmission signal at time $k' + 1$ is based on the outdated data packet at time-stamp $t\left( {{k}_{1}} \right)$, and a more updated data packet at the time-stamp $t\left( {{k}_{1}}+1 \right)$ is used if it is available at the processor. The estimation accuracy of system is further improved, and the computational burden is also reduced due to the data packets with less transmission delays \cite{Bib36}.

As the logic ZOH is event-driven, a function $f\left( \centerdot ,\ \centerdot  \right)$ for the event generator is defined to determine the event-triggered condition as follows:
\begin{shrinkeq}{-1ex}
\begin{equation}\label{eq:11}
\small
t\left( {{k}_{1}}+1 \right)=t\left( {{k}_{1}} \right)+{{\min }_{j>0}}\left\{ j|f\left( {{\sigma }_{k}},\delta  \right)>0,\ j\in \mathbb{N} \right\}\ ,
\end{equation}
in which
\begin{equation}\label{eq:12}
\small
f\left( {{\sigma _k},\delta } \right) = \sigma _k^T\Omega {\sigma _k} - \delta \hat x_{{k_1}}^T\Omega {\hat x_{{k_1}}}
\end{equation}
and
\begin{equation}\label{eq:13}
\small
{\sigma _k} \buildrel \Delta \over = {\hat x_{{k_1} + j}} - {\hat x_{{k_1}}}\;.
\end{equation}
\end{shrinkeq}
Note that ${\hat x_{{k_1} + j}}$ represents the estimation at the latest event time ${k_1} + j$, and ${\hat x_{{k_1}}}$ is the current filter with the time-stamp ${k_1}$. Furthermore, $\Omega $ is a weighted matrix, which is symmetric positive-definite, and $\delta \in \left[ 0,\ 1 \right)$ is a scalar threshold.

\begin{remark}\label{Remark 3}
In \cite{Bib35_2}, the event-triggered condition is defined in Eq.(\ref{eq:11}), and the range of the threshold parameter $\delta $ is set to be $\left[ {0,\;1} \right)$ in this paper, which satisfies the following relationship:
\begin{shrinkeq}{-1.5ex}
\begin{equation}\label{eq:14}
\small
f\left( {{\sigma _k},\delta } \right) = \;\left[ {\hat x_{{k_1} + j}^T\;\;\;\;\hat x_{{k_1}}^T} \right]\mathbb{Q}\left[ \begin{gathered}
  {{\hat x}_{{k_1} + j}} \hfill \\
  {{\hat x}_{{k_1}}} \hfill \\
\end{gathered}  \right]
\end{equation}
\end{shrinkeq}
where $\small\mathbb{Q} \triangleq \left[ \begin{gathered}
  \Omega \;\;\;\; - \Omega  \hfill \\
   - \Omega \;\;\;\left( {1 - \delta } \right)\Omega \; \hfill \\
\end{gathered}  \right]$. Note that $\delta \hat x_{{k_1}}^T\Omega {\hat x_{{k_1}}}$ is a positive scalar owing to $\Omega  > 0$. On the one hand, if $\delta  \geqslant 1$, $\mathbb{Q}$ is non-positive definite. That is $f\left( {{\sigma _k},\delta } \right) \leqslant 0$ for any ${\hat x_{{k_1} + j}}$ and ${\hat x_{{k_1}}}$, the event-triggered condition given in Eq.(\ref{eq:12}) is never triggered. On the other hand, if $\delta  < 0$, $\mathbb{Q}$ is positive definite, namely, the event-triggered condition in Eq.(\ref{eq:12}) is always triggered. Consequently, the scalar threshold parameter satisfies $\delta  \in \left[ {0,\;1} \right)$.
\end{remark}

Next, the measurement re-organization approach is investigated to simplify the system description as well as alleviate the computational burden. For the $i^{th}$ subsystem using the logic ZOH, when the processor receives the data packet $z_{{t}}^i$ with time-stamp $t$ and transmission delay ${\tau ^i}\left( {{k_1}} \right)$, the stored signal $y_k^i$ is re-organized as
\begin{shrinkeq}{-1ex}
\begin{equation}\label{eq:15}
\small
y_k^i = z_t^i\;,
\end{equation}
\end{shrinkeq}
where $t = k - {\tau ^i}\left( {{k_1}} \right)$ in Eq.(\ref{eq:6}). Note that the processor (i.e. estimator) receiving the time-stamped data packets may accompany the information with packet delays and dropouts at each sampling time \cite{Bib22}. It suggests that using the logic ZOH based on event-trigger, the signals with packet dropouts and packet disorders are translated into random transmission delays using the time-stamped data packets.

\begin{remark}\label{Remark 4}
For the delay systems, the well-known innovation re-organization approach was proposed in \cite{Bib36_2,Bib37_2}, which designed $l+1$ different standard Kalman filtering with the $l$-step time delayed measure. To further reduce the computation cost, the proposed measurement re-organization approach given in Eq.(\ref{eq:15}) is used to handle the random transmission delays, which is translated into a unified form considering the current time instant as well as the time-stamp before being transmitted. Therefore, we design the optimal state estimation for each subsystem to develop a unified finite horizon filtering. More importantly, the network-induced events such as packet dropouts and packet disorders resulting from the random transmission delays are translated into the re-organized measurement sequence, and a signal selection method based on event-trigger is applied to further simplifying the temporal process.
\end{remark}

Taking into account the stored measurement sequence with network-induced complexity, the issue of distributed fusion estimation is transferred into probing the optimal state estimation ${\hat x_{k|k}}$, which is compensated and fused by local estimation $\hat x_{t|t}^i$ with transmission delay $t = k - {\tau ^i}\left( {{k_1}} \right)$.

\subsection{Local estimation design}
In this section, considering the system noise in Eq.(\ref{eq:1}) and the measurement noise in Eq.(\ref{eq:2}) are different, the augmentation method for the state-space model is involved \cite{Bib34}. Meanwhile, the local estimator $\hat x_{k|k}^i\left( {i = 1,\; \cdots \;,\;L} \right)$ is derived by using the Schur complement lemma to analyze the estimation performance \cite{Bib16}.

\subsection{Augmentation method for state vector}
Firstly, the received data packet $y_k^i$ is obtained from Eqs.(\ref{eq:2}) and (\ref{eq:15}), that is
\begin{shrinkeq}{-1ex}
\begin{equation}\label{eq:16}
\small
y_k^i = \left( {C_t^i + \mathcal{H}_t^i{F_t}E_t^i} \right){x_t} + B_t^i{w_t} + G_t^i{v_t}\;,\;i = 1,\; \cdots \;,\;L\;.
\end{equation}
\end{shrinkeq}
It is assumed that the processor has a sufficient capability to find the optimal estimated state $\hat x_{t|t}^i$ from the stored valid signals $\left\{ {y_0^i\;,\; \cdots \;,\;y_{k - 1}^i\;,\;y_k^i} \right\}$.

The objective of the distributed fusion estimation is achieving the optimal estimation $\hat x_{k|k}^i$ from the upper bound of the local error covariance matrix. Thus, the local state is estimated employing the re-organized data packets:
\begin{shrinkeq}{-1ex}
\begin{equation}\label{eq:17}
\small
\hat x_{t|t}^i = \hat x_{t|t - 1}^i + K_t^i\left( {z_t^i - \hat C_t^i\hat x_{t|t - 1}^i} \right)\;,
\end{equation}
\begin{equation}\label{eq:18}
\small
\hat x_{t + 1|t}^i = \hat A_t^i\hat x_{t|t - 1}^i + L_t^i\left( {z_t^i - \hat C_t^i\hat x_{t|t - 1}^i} \right)\;,
\end{equation}
\end{shrinkeq}
where $\hat x_{t|t}^i$ and $\hat x_{t + 1|t}^i$ denote the filter and predictor, respectively. Then the local estimation errors are defined by $\tilde e_t^i \triangleq {x_t} - \hat x_{t|t - 1}^i$ and $e_t^i \triangleq {x_t} - \hat x_{t|t}^i$. Time-varying matrices $\hat C_t^i$, $K_t^i$, $\hat A_t^i$ and $L_t^i$ are the solved filter parameters.

In order to derive the upper bounds for the covariance matrices of the estimation errors, the augmentation method for state vectors including error and estimated state are defined as:
\begin{shrinkeq}{-1.5ex}
\begin{equation}\label{eq:19}
\small
\tilde \Psi _t^i \triangleq \left[ \begin{gathered}
  \tilde e_t^i \hfill \\
  \hat x_{t|t - 1}^i \hfill \\
\end{gathered}  \right]\;,\;\Psi _t^i \triangleq \left[ \begin{gathered}
  e_t^i \hfill \\
  \hat x_{t|t}^i \hfill \\
\end{gathered}  \right]\;.
\end{equation}
\end{shrinkeq}

Subsequently, the augmented system model for the state-space shown in Eqs.(\ref{eq:1}) and (\ref{eq:17})-(\ref{eq:19}) is further discussed in Appendix B. Note that the deterministic uncertainty ${F_t}$, energy-bounded noise ${w_t}$ as well as Gaussian white noise ${v_t}$ and ${\varpi _{\vartheta ,t}}$ appear in Eqs.(B.4) and (B.5). It is difficult to solve the accurate estimation using the covariance matrices $\tilde \Theta _t^i$ and $\tilde \Sigma _{t + 1}^i$. Therefore, this paper investigates an alternative method to solve the upper bounds for the errors. Furthermore, the filter parameters are optimized by minimizing the error covariance.

\subsection{Performance analysis}
To analyse the system performance, the estimation error from the local estimation $\hat x_{k|k}^i$ can be rewritten as:
\begin{shrinkeq}{-1.5ex}
\begin{equation}\label{eq:20}
\small
\begin{gathered}
  \chi _{k + 1}^i = \left( {A_{t3}^i + \Delta A_{t3}^i} \right)\chi _k^i + \left( {\sum\limits_{\vartheta  = 1}^\hbar  {\left( {{\varpi _{\vartheta ,k}}{A_{\vartheta ,t3}}} \right)} } \right)\chi _k^i \hfill \\
  \;\; + B_{t3}^i{w_k} + G_{t3}^i{v_k}\;, \hfill \\
  \tilde e_{k}^i{\text{ = }}D_{t3}^i\chi _k^i\;, \hfill \\
\end{gathered}
\end{equation}
\end{shrinkeq}
where $\chi _{k}^i \triangleq \left[ \begin{gathered}
  {x_{k}} \hfill \\
  \tilde e_{k}^i \hfill \\
\end{gathered}  \right]$, and the parameters are defined as:
\begin{shrinkeq}{-1.5ex}
\[A_{t3}^i = \left[ \begin{gathered}
  {A_k}\;\;\;\;\;\;\;\;\;\;\;\;\;\;\;\;\;\;\;\;\;\;\;\;\;\;\;\;\;\;\;\;\;\;\;\;\;\;\;\;\;\;\;\;\;0 \hfill \\
  {A_k} - \hat A_k^i + L_k^i\left( {\hat C_k^i - C_k^i} \right)\;\;\;\;\hat A_k^i - L_k^i\hat C_k^i \hfill \\
\end{gathered}  \right],\]
\end{shrinkeq}

\begin{shrinkeq}{-1.5ex}
\small
\begin{equation}\label{eq:21}
\begin{gathered}
  \Delta A_{t3}^i = \left[ \begin{gathered}
  {\mathcal{F}_k}{F_k}{E_k}\;\;\;\;\;\;\;\;\;\;\;\;\;\;\;\;\;\;\;\;\;\;0 \hfill \\
  \left( {{\mathcal{F}_k} - L_k^i\mathcal{H}_k^i} \right){F_k}{E_k}\;\;\;\;0 \hfill \\
\end{gathered}  \right],\;{A_{\vartheta ,t3}} = \left[ \begin{gathered}
  {A_\vartheta }\;\;0 \hfill \\
  {A_\vartheta }\;\;0 \hfill \\
\end{gathered}  \right], \hfill \\
  B_{t3}^i = \left[ \begin{gathered}
  {B_k} \hfill \\
  {B_k} - L_k^iB_k^i \hfill \\
\end{gathered}  \right],\;G_{t3}^i = \left[ \begin{gathered}
  {G_k} \hfill \\
  {G_k} - L_k^iG_k^i \hfill \\
\end{gathered}  \right],\;D_{t3}^i = \left[ {0\;\;\;I} \right] \hfill \\
\end{gathered}
\end{equation}
\end{shrinkeq}

First of all, the following bounded real lemmas for the discrete-time stochastic system $\chi _k^i$ must be satisfied.

\begin{lemma}\label{Lemma 1}
\cite{Bib34} For the stochastic uncertain system, the discrete-time model is established as follows:
\begin{sequation}\label{eq:22}
\begin{gathered}
  x{'_{k + 1}} = \left( {A' + \Delta A'v{'_k}} \right)x{'_k} + B'w{'_k}\;, \hfill \\
  z{'_k} = C'x{'_k}, \hfill \\
\end{gathered}
\end{sequation}

\noindent
in which $v{'_k}$ is the Gaussian white noise with covariance matrices $R{'_k}$, and $w{'_k} \in {l_2}\left[ {0,\infty } \right)$ denotes an energy-bounded noise. For a given scalar $\gamma '$ and the ${H_\infty }$ performance, the following inequality
\begin{sequation}\label{eq:23}
\sum\limits_{k = 0}^\infty  {E\left( {{{\left( {z{'_k}} \right)}^T}z{'_k}} \right)}  < {\left( {\gamma '} \right)^2}\sum\limits_{k = 0}^\infty  {E\left( {{{\left( {w{'_k}} \right)}^T}w{'_k}} \right)}
\end{sequation}
\noindent
is held. With the aid of the system performance in Eqs.(\ref{eq:22}) and (\ref{eq:23}), if and only if there exists a matrix ${\rm X}' > 0$, such that the inequality
\begin{sequation}\label{eq:24}
\begin{gathered}
  {\left( {A'} \right)^T}{\rm X}'A' + {\left( {C'} \right)^T}C' + R'{}_k{\left( {\Delta A'} \right)^T}{\rm X}'\left( {\Delta A'} \right) - {\rm X}' \hfill \\
  \;\; + {\left( {A'} \right)^T}{\rm X}'B'{\left( {{{\left( {\gamma '} \right)}^2}I - {{\left( {B'} \right)}^T}{\rm X}'B'} \right)^{ - 1}}{\left( {B'} \right)^T}{\rm X}'A' < 0. \hfill \\
\end{gathered}
\end{sequation}
is satisfied.
\end{lemma}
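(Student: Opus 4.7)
The plan is to prove this stochastic bounded real lemma by a Lyapunov argument combined with a Schur-complement / completion-of-squares manipulation. I would choose the storage function $V_k = (x'_k)^T X' x'_k$ with $X' > 0$ as in the hypothesis, and first compute the conditional increment $E[V_{k+1} - V_k \mid x'_k, w'_k]$ using the system equation \eqref{eq:22}. Because $v'_k$ is zero-mean white noise with covariance $R'_k$ and is independent of $x'_k$ and $w'_k$, the cross term vanishes and the multiplicative-noise contribution collapses to $R'_k (\Delta A')^T X' (\Delta A')$; the remaining terms are the familiar $(A')^T X' A'$ plus a bilinear coupling $2 (x'_k)^T (A')^T X' B' w'_k$ and a quadratic $(w'_k)^T (B')^T X' B' w'_k$.

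Next I would form the auxiliary quantity
\begin{equation*}
\mathcal{J}_k \triangleq E\bigl[V_{k+1} - V_k + (z'_k)^T z'_k - (\gamma')^2 (w'_k)^T w'_k\bigr],
\end{equation*}
substitute the expression above together with $z'_k = C' x'_k$, and complete the square in $w'_k$ around the stationary point $w^{\star}_k = M^{-1}(B')^T X' A' x'_k$, where $M \triangleq (\gamma')^2 I - (B')^T X' B'$. The positivity of $M$ is implicit in the existence of the inverse in \eqref{eq:24}; under that condition the quadratic form $-(w'_k - w^{\star}_k)^T M (w'_k - w^{\star}_k)$ is non-positive, and the residual part that depends only on $x'_k$ is exactly $(x'_k)^T \Xi_k x'_k$, where $\Xi_k$ is precisely the left-hand side of the matrix inequality \eqref{eq:24}. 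Hypothesis \eqref{eq:24} then forces $\mathcal{J}_k \le 0$ for every $k$.

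Summing $\mathcal{J}_k$ from $k=0$ to $N$, invoking the zero-initial-condition $x'_0 = 0$, and using $X' > 0$ to drop the nonnegative tail term $E[V_{N+1}] \ge 0$ yields
\begin{equation*}
\sum_{k=0}^{N} E\bigl[(z'_k)^T z'_k\bigr] \le (\gamma')^2 \sum_{k=0}^{N} E\bigl[(w'_k)^T w'_k\bigr].
\end{equation*}
Letting $N \to \infty$ recovers \eqref{eq:23}, giving the sufficiency direction.

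For necessity I would argue by contraposition using a standard Riccati/LMI monotonicity argument: assuming the $H_\infty$ bound \eqref{eq:23} holds for all $w' \in l_2$, construct the associated stochastic Riccati-type operator $\mathcal{R}(X') \triangleq$ (the left-hand side of \eqref{eq:24}) and show that the supremum of $(x'_k)^T \mathcal{R}(X') x'_k$ over admissible disturbances is non-positive, hence a solution $X' > 0$ to the strict inequality exists by perturbation. The main technical obstacle is handling the multiplicative-noise term $\Delta A' v'_k$ cleanly, since it injects a state-dependent random variance that couples the stochastic and deterministic parts of the analysis; the key trick is exploiting the independence of $v'_k$ from $(x'_k,w'_k)$ so that the cross terms $E[(\Delta A' v'_k) \cdot (\cdot)]$ vanish, leaving only the $R'_k$-weighted quadratic that already appears in \eqref{eq:24}. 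The Schur-complement step then neatly reduces the LMI in the variables $(X', w')$ to the single matrix inequality stated in the lemma.
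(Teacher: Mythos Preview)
The paper does not actually prove Lemma~1: it is quoted verbatim from \cite{Bib34} and used as a black box (the only ``proof'' material in the paper is for Theorems~1 and~3, which \emph{apply} this lemma). So there is no paper proof to compare against; I can only assess your argument on its own merits.

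Your sufficiency argument is the standard one and is essentially correct. A couple of points deserve tightening. First, you write that positivity of $M=(\gamma')^2I-(B')^TX'B'$ is ``implicit in the existence of the inverse in \eqref{eq:24}''; invertibility alone is not enough for the completion-of-squares bound to go the right way --- you need $M>0$, and this is usually obtained either as an explicit side condition or by reading it off from the Schur-complement form of \eqref{eq:24} (equivalently, from the $(3,3)$-block of the LMI in Eq.~(27) of the paper). Second, your telescoping gives $\sum E[(z'_k)^Tz'_k]\le(\gamma')^2\sum E[(w'_k)^Tw'_k]$ with a non-strict inequality, whereas the lemma asserts a strict one; to recover strictness you must exploit the strictness of \eqref{eq:24}, e.g.\ by noting that $\Xi_k\le -\varepsilon I$ for some $\varepsilon>0$, which for any nonzero $w'\in l_2$ forces a nonzero state trajectory and hence a strictly negative residual term.

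Your necessity sketch is much thinner. The phrase ``a solution $X'>0$ to the strict inequality exists by perturbation'' hides the real work: one typically builds $X'$ as the stabilizing solution of the associated stochastic algebraic Riccati equation (or as the limit of finite-horizon value functions for the worst-case $l_2$-gain game), proves $X'>0$ from mean-square stability, and then shows the Riccati identity implies the strict inequality after an $\varepsilon$-perturbation of $\gamma'$. The multiplicative-noise term does not cause any additional difficulty beyond what you already identified, but the existence and positivity of $X'$ do require a genuine argument that your outline does not supply.
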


\begin{lemma}\label{Lemma 2}
\cite{Bib34} Let $\Gamma _1^T = {\Gamma _1}$, ${\Gamma _2}$ and ${\Gamma _3}$ be real matrices with appropriate dimensions, and ${\lambda _k}$ satisfying $\lambda _k^T{\lambda _k} \leqslant I$. Then
\begin{equation}\label{eq:25}
{\Gamma _1} + {\Gamma _3}{\lambda _k}{\Gamma _2} + \Gamma _2^T\lambda _k^T\Gamma _3^T < 0
\end{equation}
is held, if and only if there exists a positive scalar $\sigma  > 0$, such that the following inequality is satisfied:
\begin{equation}\label{eq:26}
\left[ \begin{gathered}
   - \sigma I\;\;\;\sigma {\Gamma _2}\;\;\;0 \hfill \\
  *\;\;\;\;\;\;\;\;{\Gamma _1}\;\;\;\;\;{\Gamma _3} \hfill \\
  *\;\;\;\;\;\;\;\;*\;\;\;\;\;\; - \sigma I \hfill \\
\end{gathered}  \right] < 0\;.
\end{equation}
\end{lemma}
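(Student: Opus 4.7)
The plan is to reduce the stated equivalence to the classical Petersen's lemma (matrix S-procedure for norm-bounded uncertainty) via two successive applications of the Schur complement. The block LMI in Eq.(\ref{eq:26}) is really an LMI reformulation of the scalar inequality
\begin{equation*}
\Gamma_{1} + \sigma^{-1}\Gamma_{3}\Gamma_{3}^{T} + \sigma\,\Gamma_{2}^{T}\Gamma_{2} < 0,
\end{equation*}
and Petersen's lemma then connects this to the parameterised inequality in Eq.(\ref{eq:25}).

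First I would handle the LMI-to-scalar reduction. Starting from the $3\times 3$ block matrix in Eq.(\ref{eq:26}), I would eliminate the $(1,1)$ block $-\sigma I$ using the Schur complement. Since $-\sigma I < 0$, the full matrix is negative definite iff the $2\times 2$ Schur complement
\begin{equation*}
\begin{bmatrix} \Gamma_{1} & \Gamma_{3} \\ \Gamma_{3}^{T} & -\sigma I \end{bmatrix} - \begin{bmatrix} \sigma\Gamma_{2}^{T} \\ 0 \end{bmatrix}(-\sigma I)^{-1}\begin{bmatrix} \sigma\Gamma_{2} & 0 \end{bmatrix}
= \begin{bmatrix} \Gamma_{1} + \sigma\Gamma_{2}^{T}\Gamma_{2} & \Gamma_{3} \\ \Gamma_{3}^{T} & -\sigma I \end{bmatrix}
\end{equation*}
is negative definite. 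A second Schur complement on the $(2,2)$ block $-\sigma I$ then yields exactly $\Gamma_{1} + \sigma\Gamma_{2}^{T}\Gamma_{2} + \sigma^{-1}\Gamma_{3}\Gamma_{3}^{T} < 0$. Both steps are reversible, so the block LMI in Eq.(\ref{eq:26}) is equivalent to this scalar-parameter Riccati-type inequality.

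Second, I would close the loop between this Riccati inequality and the uncertainty-parameterised inequality Eq.(\ref{eq:25}). For sufficiency, given $\lambda_{k}^{T}\lambda_{k}\le I$, the matrix Young inequality with parameter $\sigma>0$ yields
\begin{equation*}
\Gamma_{3}\lambda_{k}\Gamma_{2} + \Gamma_{2}^{T}\lambda_{k}^{T}\Gamma_{3}^{T} \le \sigma^{-1}\Gamma_{3}\Gamma_{3}^{T} + \sigma\,\Gamma_{2}^{T}\lambda_{k}^{T}\lambda_{k}\Gamma_{2} \le \sigma^{-1}\Gamma_{3}\Gamma_{3}^{T} + \sigma\,\Gamma_{2}^{T}\Gamma_{2},
\end{equation*}
so adding $\Gamma_{1}$ and invoking the Riccati inequality proves Eq.(\ref{eq:25}) uniformly in $\lambda_{k}$. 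This direction is routine.

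The main obstacle is the necessity direction, which is the non-trivial content of Petersen's lemma. Here I would argue by a lossless S-procedure: viewing $u = \Gamma_{2}x$ and $y = \lambda_{k} u$, the hypothesis says $x^{T}\Gamma_{1}x + 2 y^{T}\Gamma_{3}^{T}x<0$ whenever $y^{T}y \le u^{T}u$. By the standard S-procedure for a single quadratic constraint (rank-one lossless case), there exists $\sigma>0$ such that
\begin{equation*}
x^{T}\Gamma_{1}x + 2 y^{T}\Gamma_{3}^{T}x + \sigma(u^{T}u - y^{T}y) < 0
\end{equation*}
for all $(x,y)$. Substituting $u=\Gamma_{2}x$ and maximising the left-hand side over $y$ (an unconstrained concave quadratic, with maximiser $y = \sigma^{-1}\Gamma_{3}^{T}x$) produces the scalar Riccati inequality above, and running the two Schur complements backwards recovers the block LMI Eq.(\ref{eq:26}). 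The only delicate point is justifying the S-procedure step, for which I would either cite Petersen's original lemma directly or verify it by a homogeneity/separation argument on the quadratic form on $\mathbb{R}^{r}\oplus\mathbb{R}^{r}$.
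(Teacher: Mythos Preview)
The paper does not actually prove Lemma~2: it is stated with a citation to \cite{Bib34} and used as a black box, so there is no ``paper's own proof'' to compare against. Your proposal therefore has to be judged on its own merits.

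Your reduction is correct and is exactly the standard route for this result. The two Schur complements collapse the $3\times 3$ block LMI to $\Gamma_{1}+\sigma\Gamma_{2}^{T}\Gamma_{2}+\sigma^{-1}\Gamma_{3}\Gamma_{3}^{T}<0$, and both steps are reversible since the pivot blocks $-\sigma I$ are strictly negative definite. The sufficiency direction via the matrix Young inequality is clean and complete as written.

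The one soft spot is precisely where you flagged it: the necessity direction. Your S-procedure sketch is morally right but slightly informal. As written, the inequality ``$x^{T}\Gamma_{1}x+2y^{T}\Gamma_{3}^{T}x+\sigma(u^{T}u-y^{T}y)<0$ for all $(x,y)$'' cannot hold at $(x,y)=(0,0)$, so you need to state it on the unit sphere or as a homogeneous quadratic form being negative semidefinite with strict inequality on the relevant subspace. More substantively, the lossless S-procedure you invoke is itself essentially Petersen's lemma in this context, so citing Petersen directly (as you suggest) is the cleaner option; otherwise you are close to circular. A self-contained argument would run: the hypothesis gives $x^{T}\Gamma_{1}x+2\|\Gamma_{3}^{T}x\|\,\|\Gamma_{2}x\|<0$ for all $x\neq 0$, and then a compactness/continuity argument on the unit sphere produces a uniform $\sigma>0$ with $2\|\Gamma_{3}^{T}x\|\,\|\Gamma_{2}x\|\le \sigma\|\Gamma_{2}x\|^{2}+\sigma^{-1}\|\Gamma_{3}^{T}x\|^{2}$ and the requisite strict slack in $\Gamma_{1}$. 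Either way, the architecture of your proof is sound.
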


On the basis of Lemmas 1 and 2, the mixed ${H_2}/{H_\infty }$ performance will be analysed in this section, while the local estimation $\hat x_{k|k}^i$ satisfies Conditions 1-3 presented above.

\begin{theorem}\label{Theorem:1}
For a given level bound ${\gamma _i}$, the requirements of Conditions 1 and 2 are held. If and only if a matrix ${{\rm X}_i} > 0$ exists, the inequality in Eq.(\ref{eq:24}) is obtained as
\begin{shrinkeq}{-1.5ex}
\begin{equation}\label{eq:27}
\small
{\Delta _i} = \left[ \begin{gathered}
   - {{\rm X}_i}\;\;\;\;\;\;{{\rm X}_i}\left( {A_{t3}^i + \Delta A_{t3}^i} \right)\;\;\;\;\;\;{{\rm X}_i}B_{t3}^i \hfill \\
  *\;\;\;\;\;\;{\left( {D_{t3}^i} \right)^T}D_{t3}^i + {\Upsilon _{t3}} - {{\rm X}_i}\;\;\;\;\;0 \hfill \\
  *\;\;\;\;\;\;\;\;\;\;\;\;\;\;\;\;\;\;\;\;\;\;\;*\;\;\;\;\;\;\;\;\;\;\;\;\;\;\;\;\; - \gamma _i^2I \hfill \\
\end{gathered}  \right] < 0\;,
\end{equation}
\end{shrinkeq}
where ${\Upsilon _{t3}} \triangleq \sum\limits_{\vartheta  = 1}^\hbar  {{\varpi _{\vartheta ,k}}A_{\vartheta ,t3}^T{{\rm X}_i}{A_{\vartheta ,t3}}} $, and parameters $A_{t3}^i$, $\Delta A_{t3}^i$, ${A_{\vartheta ,t3}}$, $B_{t3}^i$, $G_{t3}^i$ as well as $D_{t3}^i$ are obtained using Eq.(\ref{eq:21}). In this case, the upper bound of ${H_2}$ performance can be calculated by
\begin{shrinkeq}{-1.5ex}
\begin{equation}\label{eq:28}
\small
{J^i} \leqslant Tr\left\{ {{R_k}{{\left( {G_{t3}^i} \right)}^T}{{\rm X}_i}G_{t3}^i} \right\}\;,
\end{equation}
\end{shrinkeq}
where the matrix ${{\rm X}_i}$ satisfies the inequality within Eq.(\ref{eq:27}), while $G_{t3}^i$ is denoted in Eq.(\ref{eq:21}) and ${R_k}$ is the covariance of ${v_k}$.
\end{theorem}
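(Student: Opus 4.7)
The plan is to apply a stochastic Lyapunov-type analysis to the augmented error system in Eq.(\ref{eq:20}), with Lemma 1 serving as the underlying bounded real principle, and then to obtain Eq.(\ref{eq:27}) as a Schur-complement reformulation of the resulting matrix inequality. I will take the Lyapunov candidate $V_k^i = (\chi_k^i)^T {\rm X}_i \chi_k^i$ and track the supply-rate expression $E\left( V_{k+1}^i - V_k^i \right) + E\left( (\tilde e_k^i)^T \tilde e_k^i \right) - \gamma_i^2 w_k^T w_k$.

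First I would propagate $V_k^i$ along the dynamics in Eq.(\ref{eq:20}). Since $w_k$ is an exogenous $l_2$ signal while $\varpi_{\vartheta,k}$ and $v_k$ are zero-mean, mutually uncorrelated, and independent of $\chi_k^i$, conditional expectation annihilates every cross product that contains $\varpi_{\vartheta,k}$ or $v_k$. The surviving pieces are: a quadratic form in $\chi_k^i$ with kernel $(A_{t3}^i + \Delta A_{t3}^i)^T {\rm X}_i (A_{t3}^i + \Delta A_{t3}^i) + \Upsilon_{t3} - {\rm X}_i$, a bilinear term coupling $\chi_k^i$ and $w_k$ through $(A_{t3}^i + \Delta A_{t3}^i)^T {\rm X}_i B_{t3}^i$, a pure $w_k^T (B_{t3}^i)^T {\rm X}_i B_{t3}^i w_k$, and an additive constant $Tr\left( R_k (G_{t3}^i)^T {\rm X}_i G_{t3}^i \right)$ generated by $v_k$. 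The multiplicative noises $\varpi_{\vartheta,k}$ enter only through their second moments, which is exactly what the definition of $\Upsilon_{t3}$ in the statement encodes.

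Second, for Conditions 1 and 2 I would set $v_k \equiv 0$ and assemble the remaining terms into a single quadratic form in $\xi_k = [(\chi_k^i)^T\;\; w_k^T]^T$. Its matrix coincides with the Schur complement of Eq.(\ref{eq:27}) with respect to the $-{\rm X}_i$ block, so running the Schur complement the other way produces the equivalence between Eq.(\ref{eq:27}) and strict negativity of this form for every admissible $F_k$ hiding inside $\Delta A_{t3}^i$. Setting $w_k \equiv 0$ immediately yields $E\left( V_{k+1}^i - V_k^i \right) \leq -E\left( (\tilde e_k^i)^T \tilde e_k^i \right) \leq 0$ with strict decrease off the origin, which is the mean-square asymptotic stability required by Condition 1. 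Restoring $w_k$ and telescoping the strict inequality from $k = 0$ to $\infty$, using the zero-initial condition $V_0 = 0$ together with $E(V_\infty) \geq 0$, delivers Eq.(\ref{eq:4}) and hence Condition 2.

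Third, for the $H_2$ upper bound in Eq.(\ref{eq:28}) I would reverse the noise roles, taking $w_k \equiv 0$ and keeping $v_k$. The Lyapunov increment then reduces to $E\left( V_{k+1}^i - V_k^i \right) \leq -E\left( (\tilde e_k^i)^T \tilde e_k^i \right) + Tr\left( R_k (G_{t3}^i)^T {\rm X}_i G_{t3}^i \right)$, with the $\chi_k^i$-quadratic coefficient dominated by $-(D_{t3}^i)^T D_{t3}^i$ through the $(2,2)$ block of Eq.(\ref{eq:27}). Because Condition 1 forces $E(V_{k+1}^i) - E(V_k^i) \to 0$ in steady state, passing to the limit gives $J^i \leq Tr\left( R_k (G_{t3}^i)^T {\rm X}_i G_{t3}^i \right)$, which is Eq.(\ref{eq:28}).

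The main obstacle I anticipate is handling the mixed uncertainty cleanly. The deterministic $F_k$ sits inside $\Delta A_{t3}^i$ and appears explicitly in Eq.(\ref{eq:27}), while the stochastic $\varpi_{\vartheta,k}$ feed in only through $\Upsilon_{t3}$; both must be carried consistently through the Schur-complement chain without introducing singularities (in particular, $(B_{t3}^i)^T {\rm X}_i B_{t3}^i - \gamma_i^2 I < 0$ must be implied by the full LMI so that the inversion step is legitimate), and strict negativity must hold uniformly over all $F_k$ with $F_k F_k^T \leq I$. Converting Eq.(\ref{eq:27}) into an $F_k$-independent LMI would then require Lemma 2, but that reduction sits outside Theorem \ref{Theorem:1} itself.
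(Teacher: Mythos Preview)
Your proposal is correct and tracks the paper's argument closely. The paper's proof is terser: for Conditions 1 and 2 it simply invokes Lemma~1 on system~(\ref{eq:20}) with $v_k=0$ to obtain the Riccati-type inequality (C.1), and then applies the Schur complement lemma to reach Eq.~(\ref{eq:27}); you instead unfold the Lyapunov/dissipativity computation that underlies Lemma~1 explicitly, which is the same content written out by hand. For the $H_2$ bound the paper takes a slightly different packaging: it cites the closed-form $J^i = Tr\big(R_k (G_{t3}^i)^T \bar{\rm X}_i G_{t3}^i\big)$ with $\bar{\rm X}_i$ the solution of the Lyapunov \emph{equation} (C.3), observes that Eq.~(\ref{eq:27}) forces the corresponding strict Lyapunov \emph{inequality} (C.4), and concludes $\bar{\rm X}_i \le {\rm X}_i$ by the standard comparison principle, whence Eq.~(\ref{eq:28}). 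Your steady-state dissipation argument (letting $E(V_{k+1}^i)-E(V_k^i)\to 0$) reaches the same bound without naming $\bar{\rm X}_i$; the two routes are equivalent, with the paper's version being more citation-driven and yours more self-contained.
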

\begin{proof}[Proof]
The proof is derived and shown in Appendix C.
\end{proof}

Depending on the Gaussian white noise ${\varpi _{\vartheta ,k}}$ in Eq.(\ref{eq:1}) as well as ${\gamma _i}\;\left( {i = 1,\; \cdots \;,\;L} \right)$ given in Condition 2, the optimal filter parameters satisfying Conditions 1-3 are able to be determined from the solution of the following optimization problem:
\begin{shrinkeq}{-1.5ex}
\begin{equation}\label{eq:29}
\small
\min \;Tr\left( {{R_k}{{\left( {G_{t3}^i} \right)}^T}{{\rm X}_i}G_{t3}^i} \right)\;,
\end{equation}
\end{shrinkeq}
which is subject to the constraints of ${F_k}F_k^T \leqslant I$, Eqs.(\ref{eq:3}) and (\ref{eq:27}). However, solving the optimization problem from Eq.(\ref{eq:29}) is intractable, and the following three conditions should be analysed:

(i) For the objective function shown in Eq.(\ref{eq:29}), the matrix term ${\left( {G_{t3}^i} \right)^T}{{\rm X}_i}G_{t3}^i$ is nonlinear.

(ii) The uncertain parameters ${F_k}$ and ${\theta _{\vartheta ,k}}$ are included in the matrix inequality of Eq.(\ref{eq:27}).

(iii) ${{\rm X}_i}A_{t3}^i$ and ${{\rm X}_i}B_{t3}^i$ shown in Eq.(\ref{eq:27}) are nonlinear.

For the sake of obtaining an efficient solution of the optimization problem in Eq.(\ref{eq:29}), the convex optimization issue will be considered under a certain relaxation condition. The derivation is also shown in Appendix C.

\section{Distributed fusion estimation for logic ZOH}
For the stochastic uncertain system modelled in Eqs.(\ref{eq:1}) and (\ref{eq:16}), based on the robust finite horizon filtering, this paper studies a distributed fusion estimation approach (RFHDFE) by the aid of designing the local estimation $\hat x_{k|k}^i$. Then, a weighted robust fusion estimation is presented by investigating a convex optimization problem to improve the accuracy of the distributed estimation.

\subsection{Upper bound for estimation error}
To design the local estimation, it is necessary to examine the appropriate filter parameters, which are derived from the error covariance matrix. Since the proposed system model contains uncertain parameters and energy-bounded noise, it is difficult to obtain the accurate estimation for state. Therefore, the guaranteed upper bounds based on the augmentation method are presented by minimizing the estimation error. Then, Lemmas 3 and 4 for solving inequality constraints are introduced by transforming the upper bound of the error covariance.

\begin{lemma}\label{Lemma 3}
\cite{Bib22} Suppose that matrices $A$, $H$, $E$, $F$ and $X$ have certain dimensions, and satisfying $F{F^T} \leqslant I$. If there exists an arbitrary positive constant $\alpha  > 0$, and ${\alpha ^{ - 1}}I - EX{E^T} > 0$ is satisfied, which is a symmetric positive-definite matrix. Thus, the following inequality is held:
\begin{sequation}\label{eq:30}
\begin{gathered}
  \left( {A + HFE} \right)X{\left( {A + HFE} \right)^T} \hfill \\
   \leqslant A{\left( {{X^{ - 1}} - \alpha {E^T}E} \right)^{ - 1}}{A^T} + {\alpha ^{ - 1}}H{H^T}\;. \hfill \\
\end{gathered}
\end{sequation}

\noindent
Note that ${\left( {{X^{ - 1}} - \alpha {E^T}E} \right)^{ - 1}} = X + X{E^T}{\left( {{\alpha ^{ - 1}}I - EX{E^T}} \right)^{ - 1}}EX$ is transformed from the matrix inversion lemma.
\end{lemma}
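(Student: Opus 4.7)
The plan is to establish the inequality by a direct expansion followed by a completion-of-squares argument, showing that the difference RHS $-$ LHS decomposes as a sum of two manifestly positive semidefinite matrices. The only ingredients required are the matrix inversion identity already recorded in the statement of the lemma and the hypothesis $FF^T \leq I$; no auxiliary results are needed.

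First I would expand the left-hand side as
\[
(A+HFE)X(A+HFE)^T = AXA^T + AXE^TF^TH^T + HFEXA^T + HFEXE^TF^TH^T,
\]
and rewrite the right-hand side via $A(X^{-1} - \alpha E^T E)^{-1}A^T = AXA^T + AXE^T(\alpha^{-1}I - EXE^T)^{-1}EXA^T$ using the stated identity, so that the common $AXA^T$ pieces cancel. Introducing the shorthand $P \triangleq \alpha^{-1}I - EXE^T$ (positive definite by hypothesis, hence $P^{\pm 1/2}$ are well defined) and $S \triangleq EXA^T$, the remaining difference collapses to
\[
\text{RHS}-\text{LHS} = S^T P^{-1} S + \alpha^{-1} H H^T - S^T F^T H^T - HFS - HF(EXE^T)F^TH^T.
\]

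Next I would substitute $EXE^T = \alpha^{-1}I - P$ in the last term to split it as $-\alpha^{-1}HFF^TH^T + HFPF^TH^T$, and recombine with $\alpha^{-1}HH^T$. The key observation is that the resulting expression regroups cleanly as
\[
\text{RHS} - \text{LHS} = \alpha^{-1}H(I - FF^T)H^T + \bigl(P^{-1/2}S - P^{1/2}F^T H^T\bigr)^T \bigl(P^{-1/2}S - P^{1/2}F^T H^T\bigr).
\]
The first summand is positive semidefinite because $FF^T \leq I$, and the second has the form $Q^TQ$, so the whole difference is positive semidefinite, which is the claim.

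I do not anticipate any deep obstacle: this is the standard norm-bounded-uncertainty inequality of Xie-Petersen type. The only delicate bookkeeping lies in verifying the perfect-square expansion (ensuring the cross terms $-S^TF^TH^T - HFS$ and the quadratic $HFPF^TH^T$ match what comes out of $(P^{-1/2}S - P^{1/2}F^TH^T)^T(P^{-1/2}S - P^{1/2}F^TH^T)$), and in applying the matrix inversion identity in the correct direction so that $X^{-1} - \alpha E^TE$ inherits invertibility from the positivity of $\alpha^{-1}I - EXE^T$. Both are immediate from the hypotheses and the identity already supplied in the statement.
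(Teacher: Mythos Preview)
Your argument is correct: the completion-of-squares decomposition
\[
\text{RHS}-\text{LHS}=\alpha^{-1}H(I-FF^T)H^T+\bigl(P^{-1/2}S-P^{1/2}F^TH^T\bigr)^T\bigl(P^{-1/2}S-P^{1/2}F^TH^T\bigr)
\]
checks out line by line, and both summands are clearly positive semidefinite under the stated hypotheses. Note, however, that the paper does \emph{not} supply its own proof of this lemma---it is quoted verbatim from reference~\cite{Bib22} and used as a black box---so there is no in-paper argument to compare against. Your derivation is the standard Xie--Petersen proof of this norm-bounded-uncertainty bound and would serve as a self-contained justification where the paper simply cites the result.
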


\begin{lemma}\label{Lemma 4}
\cite{Bib22} Assuming that $X > 0$ and $Y > 0$ are symmetric positive definite matrices. At the time instant $k$, the functions satisfy the conditions ${s_t}\left( X \right) = s_t^T\left( X \right) \in {\mathbb{R}^{n \times n}}$ and ${h_t}\left( X \right) = h_t^T\left( X \right) \in {\mathbb{R}^{n \times n}}$. If there exists $Y > X$, such that the functions meet ${s_t}\left( Y \right) \geqslant {s_t}\left( X \right)$ and ${h_t}\left( Y \right) \geqslant {s_t}\left( Y \right)$, the solutions ${M_t}$ and ${N_t}$ are derived as follows:
\begin{sequation}\label{eq:31}
{M_{t + 1}} = {s_t}\left( {{M_t}} \right)\;,\;{N_{t + 1}} = {h_t}\left( {{N_t}} \right),\;{M_0} = {N_0} > 0
\end{sequation}
satisfying ${M_t} \leqslant {N_t}$.
\end{lemma}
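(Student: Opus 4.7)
The plan is to prove Lemma~\ref{Lemma 4} by straightforward induction on the discrete time index $t$, leveraging the monotonicity hypothesis on $s_t$ together with the pointwise domination $h_t \geqslant s_t$. First I would establish the base case: since $M_0 = N_0 > 0$, we trivially have $M_0 \leqslant N_0$ in the L\"owner order on symmetric positive definite matrices.

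For the inductive step, assume $M_t \leqslant N_t$ at some time $t$ and aim to derive $M_{t+1} \leqslant N_{t+1}$. I would read the hypothesis ``$s_t(Y) \geqslant s_t(X)$ whenever $Y > X$'' as the statement that $s_t$ is monotone on the positive definite cone, and apply it to the specific pair $(M_t, N_t)$ to obtain
\begin{equation*}
s_t(M_t) \leqslant s_t(N_t).
\end{equation*}
Next, the second hypothesis $h_t(Y) \geqslant s_t(Y)$ evaluated at $Y = N_t$ gives $s_t(N_t) \leqslant h_t(N_t)$. Chaining these two inequalities with the recursive definitions yields
\begin{equation*}
M_{t+1} = s_t(M_t) \leqslant s_t(N_t) \leqslant h_t(N_t) = N_{t+1},
\end{equation*}
which closes the induction and establishes $M_t \leqslant N_t$ for all $t \geqslant 0$.

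The main delicacy --- not an obstacle in the technical sense, but worth flagging --- is interpretive: the hypothesis as stated is slightly ambiguous about whether monotonicity of $s_t$ is posited for every comparable pair in the positive definite cone or only for one particular pair $(X, Y)$. For the iterative argument to close, it must be read as holding uniformly over every comparable pair, so that the monotonicity step can be re-invoked at each stage of the recursion generating $\{M_t\}$ and $\{N_t\}$. Once that reading is fixed, no analytical machinery beyond transitivity of the Löwner order is required; and because the lemma is attributed to reference~\cite{Bib22}, the write-up can reasonably be kept to a brief citation of the source plus the two-line induction above.
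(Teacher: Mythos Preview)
Your induction argument is correct and is precisely the standard proof of this comparison lemma. The paper itself does not prove Lemma~\ref{Lemma 4} at all: it is quoted verbatim from reference~\cite{Bib22} and used as a black box, so there is no ``paper's own proof'' to compare against beyond the bare citation. Your remark that the write-up can be reduced to a citation plus the two-line induction is exactly in line with how the paper treats it. The only minor point worth tightening is the one you already flagged: the hypothesis is phrased with a strict inequality $Y>X$, whereas the inductive step needs monotonicity under the non-strict relation $M_t\leqslant N_t$ (and at $t=0$ one even has equality); in practice this is handled either by reading the hypothesis as monotonicity on the full positive-semidefinite cone or by a trivial limiting argument, and the cited source intends the former.
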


\begin{theorem}\label{Theorem:2}
The unified form of $\left( {A + HFE} \right)X{\left( {A + HFE} \right)^T}$ from $\tilde \Theta _t^i$ in Eq.(B.4) and $\tilde \Sigma _{t + 1}^i$ in Eq.(B.5) based on Lemmas 3 and 4 satisfies the following case: if there exists a positive scalar ${\alpha _t} > 0$ and a symmetric positive-definite matrix $\Sigma _t^i$ both of which satisfy the inequality $\alpha _t^{ - 1}I - E_{t2}^i\Sigma _t^i{\left( {E_{t2}^i} \right)^T} > 0$, the upper bounds of the error $\tilde \Sigma _t^i \leqslant \Sigma _t^i$ and $\tilde \Theta _t^i \leqslant \Theta _t^i$ will be available. Then, the upper bounds of $\Theta _t^i$ and $\Sigma _{t + 1}^i$ are computed by the following recursive equations:
\begin{shrinkeq}{-2ex}
\begin{equation}\label{eq:32}
\small
\begin{gathered}
  \tilde \Theta _t^i \leqslant A_{t1}^i\Sigma _t^i{\left( {A_{t1}^i} \right)^T} + \;\alpha _t^{ - 1}H_{t1}^i{\left( {H_{t1}^i} \right)^T} + B_{t1}^iE\left( {{w_t}w_t^T} \right){\left( {B_{t1}^i} \right)^T} \hfill \\
  \;\; + A_{t1}^i\Sigma _t^i{\left( {E_{t1}^i} \right)^T}{\left( {\alpha _t^{ - 1}I - E_{t1}^i\Sigma _t^i{{\left( {E_{t1}^i} \right)}^T}} \right)^{ - 1}}E_{t1}^i\Sigma _t^i{\left( {A_{t1}^i} \right)^T} \hfill \\
  \;\; + G_{t1}^i{R_t}{\left( {G_{t1}^i} \right)^T} = \Theta _t^i\;, \hfill \\
\end{gathered}
\end{equation}
and
\begin{equation}\label{eq:33}
\small
\begin{gathered}
  \tilde \Sigma _{t + 1}^i \leqslant A_{t2}^i\Sigma _t^i{\left( {A_{t2}^i} \right)^T} + \alpha _t^{ - 1}H_{t2}^i{\left( {H_{t2}^i} \right)^T} + \Sigma _t^i{A_{\vartheta ,t2}}A_{\vartheta ,t2}^T{\theta _{\vartheta ,t}} \hfill \\
  \;\; + A_{t2}^i\Sigma _t^i{\left( {E_{t2}^i} \right)^T}{\left( {\alpha _t^{ - 1}I - E_{t2}^i\Sigma _t^i{{\left( {E_{t2}^i} \right)}^T}} \right)^{ - 1}}E_{t2}^i\Sigma _t^i{\left( {A_{t2}^i} \right)^T} \hfill \\
  \;\; + B_{t2}^iE\left( {{w_t}w_t^T} \right){\left( {B_{t2}^i} \right)^T} + G_{t2}^i{R_t}{\left( {G_{t2}^i} \right)^T} = \Sigma _{t + 1}^i\;. \hfill \\
\end{gathered}
\end{equation}
\end{shrinkeq}
\end{theorem}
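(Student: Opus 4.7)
The plan is to establish Theorem~2 by combining the deterministic matrix bound of Lemma~3 with the monotone-induction argument of Lemma~4, applied separately to the two recursions in (B.4) and (B.5). First I would rewrite $\tilde\Theta_t^i$ and $\tilde\Sigma_{t+1}^i$ by isolating the uncertainty-bearing quadratic block of the form $(A+HFE)\Sigma_t^i(A+HFE)^T$ from the remaining noise-only contributions. For $\tilde\Theta_t^i$ the block is identified with $A\to A_{t1}^i$, $H\to H_{t1}^i$, $E\to E_{t1}^i$, $F\to F_t$, $X\to\Sigma_t^i$; for $\tilde\Sigma_{t+1}^i$ the corresponding subscript is $t2$. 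The stochastic-parameter term $\sum_\vartheta A_{\vartheta,t2}(\cdot)A_{\vartheta,t2}^T\theta_{\vartheta,t}$ and the noise terms $B(\cdot)B^T$, $G R_t G^T$ are handled as additive components that are independent of $F_t$.

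Next I would invoke Lemma~3 under the standing hypothesis $\alpha_t^{-1}I-E_{t2}^i\Sigma_t^i(E_{t2}^i)^T>0$ (and its $t1$ analogue, obtained symmetrically) to upper-bound each uncertainty block by
\[
A\bigl((\Sigma_t^i)^{-1}-\alpha_t E^T E\bigr)^{-1}A^T+\alpha_t^{-1}HH^T.
\]
Applying the matrix inversion identity recorded at the end of Lemma~3, $(\Sigma_t^i{}^{-1}-\alpha_t E^T E)^{-1}=\Sigma_t^i+\Sigma_t^i E^T(\alpha_t^{-1}I-E\Sigma_t^i E^T)^{-1}E\Sigma_t^i$, yields exactly the right-hand sides of (32) and (33), which I would adopt as the definitions of $\Theta_t^i$ and $\Sigma_{t+1}^i$. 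This establishes the single-step domination $\tilde\Theta_t^i\le\Theta_t^i$ and $\tilde\Sigma_{t+1}^i\le\Sigma_{t+1}^i$ whenever $\tilde\Sigma_t^i\le\Sigma_t^i$.

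To promote the single-step inequality into a recursion valid for all $t$, I would cast the right-hand side of (33) as a map $s_t(\cdot)$ and the right-hand side of (32) as a map $h_t(\cdot)$, and verify that both are monotone non-decreasing in the Loewner order on the cone of positive-definite matrices compatible with the invertibility constraint. Initializing $M_0=N_0=\Sigma_0^i$ with $\Sigma_0^i\ge P_0$, Lemma~4 then propagates $\tilde\Sigma_t^i\le\Sigma_t^i$ and hence $\tilde\Theta_t^i\le\Theta_t^i$ for all $t$, giving the desired recursive upper bounds.

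The main obstacle I anticipate is the verification of Loewner monotonicity of the Schur-complement term $A\Sigma E^T(\alpha^{-1}I-E\Sigma E^T)^{-1}E\Sigma A^T$ together with preservation of the well-posedness condition $\alpha_t^{-1}I-E_{t2}^i\Sigma_t^i(E_{t2}^i)^T>0$ along the iteration; both must hold simultaneously for Lemma~3 to remain applicable at every step. A secondary subtlety is that the stochastic-uncertainty covariance $\theta_{\vartheta,t}$ is only constrained by $\theta_\vartheta^L\le\theta_{\vartheta,t}\le\theta_\vartheta^U$ (Eq.~(3)), so in order to obtain a genuinely conservative upper bound the recursion (33) should be evaluated with $\theta_\vartheta^U$; I would make this substitution explicit before closing the induction.
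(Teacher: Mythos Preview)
Your proposal is correct and follows essentially the same approach as the paper: the paper's own proof consists of the single sentence ``The derivation process is obtained from Lemmas 3 and 4,'' and your plan is precisely the natural unpacking of that appeal---apply Lemma~3 to the $(A+HFE)X(A+HFE)^T$ block in each of (B.4) and (B.5), expand via the matrix-inversion identity, and use Lemma~4 to propagate the single-step domination recursively. Your discussion of the monotonicity check and the persistence of the invertibility condition $\alpha_t^{-1}I-E_{t2}^i\Sigma_t^i(E_{t2}^i)^T>0$ in fact goes beyond what the paper spells out.
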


\begin{proof}[Proof]
The derivation process is obtained from Lammas 3 and 4.
\end{proof}

Based on Theorem 2 and the finite horizon filtering, we define the error covariance matrix in the following form:
\begin{shrinkeq}{-1ex}
\begin{equation}\label{eq:34}
\small
\Sigma _t^i = \left[ \begin{gathered}
  \bar \Sigma _t^i\;\;\;\;\;\;\;\;0 \hfill \\
  0\;\;\;\;{P_t} - \bar \Sigma _t^i \hfill \\
\end{gathered}  \right]\;,
\end{equation}
\end{shrinkeq}
where $\bar \Sigma _t^i = E\left( {\tilde e_t^i{{\left( {\tilde e_t^i} \right)}^T}} \right)$ and ${P_t} = E\left( {{x_t}x_t^T} \right)$.

Then, the upper bounds of the error from the augmented state vectors given in Eqs.(B.4), (B.5), (\ref{eq:32}) and (\ref{eq:33}) are defined as follows:
\begin{shrinkeq}{-1ex}
\begin{equation}\label{eq:35}
\small
E\left( {e_t^i{{\left( {e_t^i} \right)}^T}} \right) = \left[ {I\;\;0} \right]\tilde \Theta _t^i\left[ \begin{gathered}
  I \hfill \\
  0 \hfill \\
\end{gathered}  \right] \leqslant \left[ {I\;\;0} \right]\Theta _t^i\left[ \begin{gathered}
  I \hfill \\
  0 \hfill \\
\end{gathered}  \right] = \bar \Theta _t^i
\end{equation}
and
\begin{equation}\label{eq:36}
\small
\begin{gathered}
  E\left( {\tilde e_{t + 1}^i{{\left( {\tilde e_{t + 1}^i} \right)}^T}} \right) = \left[ {I\;\;\;0} \right]\tilde \Sigma _{t + 1}^i\left[ \begin{gathered}
  I \hfill \\
  0 \hfill \\
\end{gathered}  \right] \hfill \\
   \leqslant \left[ {I\;\;\;0} \right]\Sigma _{t + 1}^i\left[ \begin{gathered}
  I \hfill \\
  0 \hfill \\
\end{gathered}  \right] = \bar \Sigma _{t + 1}^i\;. \hfill \\
\end{gathered}
\end{equation}
\end{shrinkeq}

Now, the ${H_\infty }$ performance of Condition 2 under Lemma 1 and Eqs.(\ref{eq:35})-(\ref{eq:36}) is derived as follows:
\begin{shrinkeq}{-1ex}
\begin{equation}\label{eq:37}
\small
\sum\limits_{k = 0}^\infty  {E\left( {{{\left( {\tilde e_k^i} \right)}^T}\tilde e_k^i} \right)}  \leqslant \sum\limits_{k = 0}^\infty  {\bar \Sigma _k^i}  < \gamma _i^2\sum\limits_{k = 0}^\infty  {E\left( {w_k^T{w_k}} \right)} \;.
\end{equation}
\end{shrinkeq}
Therefore, for a given ${H_\infty }$ DALB ${\gamma _i}$, the inequality in Eq.(\ref{eq:37}) holds, if and only if a matrix $\tilde \chi _{k + 1}^i$ for the estimation error using the augmented state vector $\tilde \Psi _k^i$ in Eq.(\ref{eq:19}) can be rewritten as:
\begin{shrinkeq}{-1ex}
\begin{equation}\label{eq:38}
\small
\begin{gathered}
  \tilde \chi _{k + 1}^i = \left( {A_{t4}^i + \Delta A_{t4}^i} \right)\tilde \chi _k^i + \left( {\sum\limits_{\vartheta  = 1}^\hbar  {{A_{\vartheta ,t4}}{\varpi _{\vartheta ,k}}} } \right)\tilde \chi _k^i \hfill \\
  \;\;{\text{ + }}B_{t4}^i{w_k}{\text{ + }}G_{t4}^i{v_k}\;, \hfill \\
  \tilde e_{k}^i = D_{t4}^i\tilde \chi _k^i\;, \hfill \\
\end{gathered}
\end{equation}
\end{shrinkeq}
where $\tilde \chi _{k}^i \triangleq \left[ \begin{gathered}
  {x_{k}} \hfill \\
  \tilde \Psi _{k}^i \hfill \\
\end{gathered}  \right]$, and the parameters are defined as:
\begin{shrinkeq}{-1ex}
\begin{equation}\label{eq:39}
\small
\begin{gathered}
  A_{t4}^i{\text{ = }}diag\left\{ {{A_k},\;A_{t2}^i} \right\}, \hfill \\
  \Delta A_{t4}^i{\text{ = }}diag\left\{ {{\mathcal{F}_k}{F_k}{E_k},\;H_{t2}^i{F_k}E_{t2}^i} \right\}, \hfill \\
  {A_{\vartheta ,t4}} = diag\left\{ {{A_\vartheta },\;{A_{\vartheta ,t2}}} \right\}\;, \hfill \\
  B_{t4}^i = col\left\{ {{B_k},\;B_{t2}^i} \right\},\;G_{t4}^i = col\left\{ {{G_k},\;G_{t2}^i} \right\}, \hfill \\
  D_{t4}^i = \left[ {0\;\;\;I\;\;\;0} \right]\;. \hfill \\
\end{gathered}
\end{equation}
\end{shrinkeq}
Note that $A_{t2}^i$, $H_{t2}^i$, $E_{t2}^i$, ${A_{\vartheta ,t2}}$, $B_{t2}^i$, $G_{t2}^i$ are defined in Eq.(B.3). Then, based on Theorem 1 and Eq.(C.5), the inequality is satisfied
\begin{shrinkeq}{-1ex}
\begin{equation}\label{eq:40}
\small
\left[ \begin{gathered}
   - {{\tilde {\rm X}}_i}\;\;\;\;\;\;{{\tilde {\rm X}}_i}\left( {A_{t4}^i + \Delta A_{t4}^i} \right)\;\;\;\;\;\;{{\tilde {\rm X}}_i}B_{t4}^i \hfill \\
  *\;\;\;\;\;\;{\left( {D_{t4}^i} \right)^T}D_{t4}^i + {\Upsilon _{t4}} - {{\tilde {\rm X}}_i}\;\;\;\;\;0 \hfill \\
  *\;\;\;\;\;\;\;\;\;\;\;\;\;\;\;\;\;\;\;\;\;\;\;*\;\;\;\;\;\;\;\;\;\;\;\;\;\;\;\;\; - \gamma _i^2I \hfill \\
\end{gathered}  \right] < 0\;,
\end{equation}
\end{shrinkeq}
where ${\Upsilon _{t4}} \triangleq \sum\limits_{\vartheta  = 1}^\hbar  {{\varpi _{\vartheta ,k}}A_{\vartheta ,t4}^T{{\tilde {\rm X}}_i}{A_{\vartheta ,t4}}} $.

On the other hand, the upper bound of the ${H_2}$ performance in Condition 3 is derived as follows:
\begin{shrinkeq}{-1.5ex}
\begin{equation}\label{eq:41}
\small
{J^i} \leqslant Tr\left\{ {{R_k}{{\left( {G_{t4}^i} \right)}^T}{{\tilde {\rm X}}_i}G_{t4}^i} \right\}\;.
\end{equation}
\end{shrinkeq}
The proof is similar to that of Theorem 1 and the inequality derivation of Eq.(C.7) from Eq.(C.5).

Next, in order to obtain $\bar \Sigma _t^i$ and ${P_t}$ in Eq.(\ref{eq:34}), Theorem \ref{Theorem:3} is presented depending on the optimal filter and the predictor, as well as the filter parameters for the $i^{th}$ local estimation shown in Eqs.(\ref{eq:17}) and (\ref{eq:18}).

\begin{theorem}\label{Theorem:3}
For the $i^{th}$ subsystem, the time-stamped measurement $z_t^i$ is transmitted using the logic ZOH. Meanwhile, the received data packet is denoted as $y_k^i$ with transmission delay ${\tau ^i}\left( {{k_1}} \right)$ at the time instant $k$. Set $t = k - {\tau ^i}\left( {{k_1}} \right)$, and ${\alpha _t} > 0$ be a positive scalar. Therefore, the matrices $\bar \Sigma _t^i$ and ${P_t}$ have the positive definite solutions, which are defined as follows:
\begin{shrinkeq}{-0.5ex}
\begin{equation}\label{eq:42}
\small
\bar \Theta _t^i = \bar \Sigma _t^i\left( {I + {{\left( {E_t^i} \right)}^T}{{\left( {\tilde M_t^i} \right)}^{ - 1}}E_t^i\bar \Sigma _t^i} \right) - \Lambda _t^i{\left( {\Xi _t^i} \right)^{ - 1}}{\left( {\Lambda _t^i} \right)^T}\;,
\end{equation}
\begin{equation}\label{eq:43}
\small
\bar \Sigma _{t + 1}^i = \tilde \Delta _t^i - L_t^i\tilde \nabla _t^i\;,
\end{equation}
\begin{equation}\label{eq:44}
\small
\begin{gathered}
  {P_{t + 1}} = {A_t}{\left( {P_t^{ - 1} - {\alpha _t}E_t^T{E_t}} \right)^{ - 1}}A_t^T + \alpha _t^{ - 1}{\mathcal{F}_t}\mathcal{F}_t^T \hfill \\
  \;\; + {\theta _{\vartheta ,k}}\sum\limits_{\vartheta  = 1}^\hbar  {{A_\vartheta }{P_t}A_\vartheta ^T}  + {B_t}E\left( {{w_t}w_t^T} \right)B_t^T + {G_t}{R_t}G_t^T\;. \hfill \\
\end{gathered}
\end{equation}
\end{shrinkeq}
In which
\begin{small}
\begin{shrinkeq}{-1ex}
\[\Lambda _t^i = \left( {I + \bar \Sigma _t^i{{\left( {E_t^i} \right)}^T}{{\left( {M_t^i} \right)}^{ - 1}}E_t^i} \right)\bar \Sigma _t^i{\left( {C_t^i} \right)^T}\;,\]
\[\begin{gathered}
  \Xi _t^i = C_t^i\bar \Sigma _t^i\;\left( {I + {{\left( {E_t^i} \right)}^T}{{\left( {M_t^i} \right)}^{ - 1}}E_t^i\bar \Sigma _t^i} \right){\left( {C_t^i} \right)^T} \hfill \\
  \;\; + \alpha _t^{ - 1}\mathcal{H}_t^i{\left( {\mathcal{H}_t^i} \right)^T} + B_t^iE\left( {{w_t}w_t^T} \right){\left( {B_t^i} \right)^T} + G_t^i{R_t}{\left( {G_t^i} \right)^T}\;, \hfill \\
\end{gathered} \]

\[\begin{gathered}
  \Delta _t^i = {A_t}\left( {I + \bar \Sigma _t^i{{\left( {E_t^i} \right)}^T}{{\left( {M_t^i} \right)}^{ - 1}}E_t^i} \right)\bar \Sigma _t^i{\left( {C_t^i} \right)^T} \hfill \\
  \;\; + \alpha _t^{ - 1}{\mathcal{F}_t}{\left( {\mathcal{H}_t^i} \right)^T} + {B_t}E\left( {{w_t}w_t^T} \right){\left( {B_t^i} \right)^T} + {G_t}{R_t}{\left( {G_t^i} \right)^T}\;, \hfill \\
\end{gathered} \]

\[\begin{gathered}
  \nabla _t^i = C_t^i\left( {I + \bar \Sigma _t^i{{\left( {E_t^i} \right)}^T}{{\left( {\tilde M_t^i} \right)}^{ - 1}}E_t^i} \right)\bar \Sigma _t^i{\left( {C_t^i} \right)^T} \hfill \\
  \;\; + \alpha _t^{ - 1}\mathcal{H}_t^i{\left( {\mathcal{H}_t^i} \right)^T} + B_t^iE\left( {{w_t}w_t^T} \right){\left( {B_t^i} \right)^T} + G_t^i{R_t}{\left( {G_t^i} \right)^T}\;, \hfill \\
\end{gathered} \]

\[\begin{gathered}
  \tilde \Delta _t^i = {A_t}\left( {I + \bar \Sigma _t^i{{\left( {E_t^i} \right)}^T}{{\left( {M_t^i} \right)}^{ - 1}}E_t^i} \right)\bar \Sigma _t^iA_t^T + \alpha _t^{ - 1}{\mathcal{F}_t}\mathcal{F}_t^T \hfill \\
  \;\; + {B_t}E\left( {{w_t}w_t^T} \right)B_t^T + {G_t}{R_t}G_t^T + \sum\limits_{\vartheta  = 1}^\hbar  {{A_\vartheta }} {P_t}\left( {A_\vartheta ^T} \right){\theta _{\vartheta ,t}}\;, \hfill \\
\end{gathered} \]
\end{shrinkeq}
and
\[\begin{gathered}
  \tilde \nabla _t^i = C_t^i\left( {I + \bar \Sigma _t^i{{\left( {E_t^i} \right)}^T}{{\left( {M_t^i} \right)}^{ - 1}}E_t^i} \right)\bar \Sigma _t^iA_t^T \hfill \\
  \;\; + \alpha _t^{ - 1}\mathcal{H}_t^i\mathcal{F}_t^T + B_t^iE\left( {{w_t}w_t^T} \right)B_t^T + G_t^i{R_t}G_t^T\;. \hfill \\
\end{gathered} \]
Meanwhile, it satisfies $P_t^{ - 1} - {\alpha _t}{\left( {E_t^i} \right)^T}E_t^i > 0$, $\tilde M_t^i = \alpha _t^{ - 1}I - E_t^i{P_t}{\left( {E_t^i} \right)^T}$ and $M_t^i = \alpha _t^{ - 1}I - E_t^i\bar \Sigma _t^i{\left( {E_t^i} \right)^T} > 0$, respectively.
\end{small}

Furthermore, the re-organized local estimation given in Eqs.(\ref{eq:17})-(\ref{eq:18}) is facilitated by using the following filter parameters:
\begin{shrinkeq}{-1ex}
\begin{equation}\label{eq:45}
\small
\hat C_t^i = C_t^i\left( {I + \bar \Sigma _t^i{{\left( {E_t^i} \right)}^T}{{\left( {M_t^i} \right)}^{ - 1}}E_t^i} \right)\;,
\end{equation}
\begin{equation}\label{eq:46}
\small
K_t^i = \Lambda _t^i{\left( {\Xi _t^i} \right)^{ - 1}}\;,
\end{equation}
\begin{equation}\label{eq:47}
\small
\hat A_t^i = {A_t}\left( {I + \bar \Sigma _t^i{{\left( {E_t^i} \right)}^T}{{\left( {M_t^i} \right)}^{ - 1}}E_t^i} \right)\;,
\end{equation}
\begin{equation}\label{eq:48}
\small
L_t^i = \Delta _t^i{\left( {\nabla _t^i} \right)^{ - 1}}\;.
\end{equation}
\end{shrinkeq}
\end{theorem}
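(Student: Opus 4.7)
The plan is to build the proof as a two-part induction: first show that the block-diagonal structure of $\Sigma_t^i$ in Eq.(\ref{eq:34}) is preserved by the recursions of Theorem \ref{Theorem:2}, and then, within that structure, choose the free filter parameters $K_t^i,\,L_t^i,\,\hat C_t^i,\,\hat A_t^i$ to minimise the diagonal blocks, reading off Eqs.(\ref{eq:42})--(\ref{eq:48}) as the stationary conditions. The base case $t=0$ is immediate from the initial covariance $P_0$ and the fact that $\bar\Sigma_0^i=0$ when $\hat x_{0|-1}^i=\mu_0$.

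For the inductive step on $\bar\Theta_t^i$, I would start from Eq.(\ref{eq:32}) with the block-diagonal $\Sigma_t^i$ substituted in. The matrices $A_{t1}^i,B_{t1}^i,G_{t1}^i,H_{t1}^i,E_{t1}^i$ from Eq.(B.3) split naturally into two column blocks, one acting on $\bar\Sigma_t^i$ and one on ${P_t}-\bar\Sigma_t^i$. Applying Lemma \ref{Lemma 3} with the scalar $\alpha_t$ converts the uncertain term $(A+HFE)X(A+HFE)^T$ into the two summands involving $\mathcal{H}_t^i(\mathcal{H}_t^i)^T/\alpha_t$ and the Schur-type factor $(\alpha_t^{-1}I-E_t^i\bar\Sigma_t^i(E_t^i)^T)^{-1}=(M_t^i)^{-1}$. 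The $(1,1)$-block of the resulting upper bound is quadratic in the filter gain $K_t^i$ through the identity $e_t^i=\tilde e_t^i-K_t^i(z_t^i-\hat C_t^i\hat x_{t|t-1}^i)$; completing the square in $K_t^i$ yields the minimiser $K_t^i=\Lambda_t^i(\Xi_t^i)^{-1}$ in Eq.(\ref{eq:46}), the minimal residual $\bar\Theta_t^i$ in Eq.(\ref{eq:42}), and forces $\hat C_t^i$ to be the coefficient in Eq.(\ref{eq:45}) so that the off-diagonal blocks vanish.

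The predictor step for $\bar\Sigma_{t+1}^i$ follows the same template on Eq.(\ref{eq:33}): apply Lemma \ref{Lemma 3} to absorb the deterministic uncertainty $F_t$, use the stochastic-multiplier identity $E(\varpi_{\vartheta,t}^2)=\theta_{\vartheta,t}$ on the $A_{\vartheta,t2}$ terms, and expand in blocks. The cross term between $x_t$ and $\tilde e_t^i$ is linear in $L_t^i$, so completing the square identifies $L_t^i=\Delta_t^i(\nabla_t^i)^{-1}$ as in Eq.(\ref{eq:48}), produces the residual Eq.(\ref{eq:43}), and selects $\hat A_t^i$ as Eq.(\ref{eq:47}) so that the predicted state has the correct conditional mean. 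The recursion for ${P_t}$ in Eq.(\ref{eq:44}) is then obtained by applying Lemma \ref{Lemma 3} directly to the state equation (\ref{eq:1}), using independence of ${\varpi_{\vartheta,k}},{w_k},{v_k}$ to get the additive contributions $\sum_\vartheta A_\vartheta P_t A_\vartheta^T\theta_{\vartheta,t}$, $B_tE(w_tw_t^T)B_t^T$ and $G_tR_tG_t^T$.

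I expect the main obstacle to be bookkeeping rather than ideas: propagating the two $(E_t^i)^T(M_t^i)^{-1}E_t^i$-type inserts through both the filter and the predictor blocks without inconsistency, and verifying that the quadratic-in-$K_t^i$ (respectively quadratic-in-$L_t^i$) expressions really have positive-definite Hessian $\Xi_t^i$ (respectively $\nabla_t^i$) so that the stationary points are true minima. Once that is in place, Lemma \ref{Lemma 4} supplies the final comparison: monotonicity of the maps in Eqs.(\ref{eq:32})--(\ref{eq:33}) with respect to the covariance argument guarantees that the recursions for $\bar\Sigma_t^i$ and $P_t$ remain above the true error covariances $E(e_t^i(e_t^i)^T)$ and $E(x_tx_t^T)$ at every step, closing the induction and completing the proof.
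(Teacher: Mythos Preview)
Your proposal is correct and follows essentially the same route as the paper's proof in Appendix~D: both exploit the block-diagonal form of $\Sigma_t^i$ in Eq.~(\ref{eq:34}), apply Lemma~\ref{Lemma 3} to bound the uncertain $(A+HFE)X(A+HFE)^T$ terms, optimise the resulting quadratic upper bounds over the filter parameters, and invoke Lemma~\ref{Lemma 4} for the comparison with the true covariances. The only minor difference is that the paper first derives $\hat C_t^i$ (and analogously $\hat A_t^i$) by minimising the upper bound of the \emph{innovation} covariance $E\bigl(\tilde y_k^i(\tilde y_k^i)^T\bigr)$ via $\partial\bar\Pi_t^i/\partial\hat C_t^i=0$, whereas you obtain the same formula by requiring the off-diagonal blocks of $\Theta_t^i$ and $\Sigma_{t+1}^i$ to vanish; both lead to Eqs.~(\ref{eq:45}) and (\ref{eq:47}) and the subsequent completion of the square in $K_t^i$ and $L_t^i$ is identical.
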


\begin{proof}[Proof]
This theorem comes from the upper bounds for minimizing the estimation error in the covariance matrices. Appendix D presents the procedure of proof.
\end{proof}

\subsection{Linear compensation for packet dropouts}
To solve the local state estimation $\hat x_{k|k}^i$, this section applies the optimal estimation $\hat x_{t|t}^i$ to deriving the state $\hat x_{k|t}^i$ at the current time instant $k$, which is obtained from the maximum of the transmission delay $N$ as well as the current received valid signal. So the predictor $\hat x_{t + 1|t}^i$ is introduced to solve the linear compensation, that is:
\begin{shrinkeq}{-1ex}
\begin{equation}\label{eq:49}
\small
\hat x_{k|t}^i = \left( {1 - \frac{{{\tau ^i}\left( {{k}} \right) - 1}}{N}} \right)\hat x_{t + 1|t}^i\;,
\end{equation}
\end{shrinkeq}
where ${\tau ^i}\left( k \right)$ denotes the transmission delay of the received valid signal. The acknowledgment (ACK) time-stamped signal $z_t^i$ is assigned the highest transmission priority due to the timing, whilst the transmission delay is negligible to obtain the estimated state $\hat x_{k|t}^i$ \cite{Bib38}. Thus, the filter $\hat x_{t|t}^i$ and predictor $\hat x_{t + 1|t}^i$ are used for compensating the estimated state $\hat x_{k|t}^i$ given in Eq.(\ref{eq:49}).

\begin{remark}\label{Remark 5}
Note that it is an approximate estimation using the linear compensation method for the estimated state $\hat x_{k|t}^i$ with delay-free in Eq.(\ref{eq:49}). The filter $\hat x_{t|t}^i$ derived from the latest data packet with time-stamp is used to estimate the current state $\hat x_{k|k}^i$ from the compensated value $\hat x_{k|t}^i$. To deal with the packet dropouts, the data transmission delays for each step will be ignored until the next received data packet. Due to the growing error accumulation and computational burden using the traditional one-step predication, we propose the linear compensation for packet dropouts, which has the advantage of simplifying the state estimation and highlighting the valid transmission signals. Therefore, the proposed method is able to reduce the computational complexity, and then alleviate the negative effect owing to the random transmission delays. More importantly, define $s = k + 1 - {\tau ^i}\left( {k + 1} \right)$ with transmission delay ${\tau ^i}\left( {k + 1} \right)$ at the next sampling time $k+1$, when $s > t + 1$, the missing data packets are produced due to the signal selection scheme. Finally, the estimated state $\hat x_{s|s}^i$ is calculated by using the one-step prediction, which meets the artificial delay $1 < {\tau ^{st}}\left( k \right) = s - t \leqslant N$. For the missing data packets, the re-organized state estimation sequence will be compensated, and the estimated values are listed as follows:
\begin{shrinkeq}{-1ex}
\begin{equation}\label{eq:50}
\left\{ {\hat x_{t + 1|t + 1}^i,\; \cdots \;,\;\hat x_{t + {\tau ^{st}}\left( k \right)|t + {\tau ^{st}}\left( k \right)}^i} \right\}\;.
\end{equation}
\end{shrinkeq}
\noindent
Therefore, the objective of the linear compensation with the random transmission delays, as well as the one-step prediction for the missing packets, is designed the fusion estimation of each local subsystem subsequently.
\end{remark}

\subsection{Distributed fusion estimation based on local attributes}
Theorem \ref{Theorem:3} presents a local state estimation, meanwhile, the current state is estimated by using the proposed linear compensation for packet dropouts. In addition, to improve the state estimation accuracy from each subsystem, this section investigates the distributed fusion estimation method with the aid of the weighted fusion criterion.

\begin{lemma}\label{Lemma 5}
For the linear discrete-time systems described with Eqs.(\ref{eq:1}) and (\ref{eq:2}) at time instant $k$, based on the established signal selection model and linear compensation for packet dropouts, the upper bounds for cross-covariance matrix $\bar \Theta _t^{i,j}$ of the filtering error and cross-covariance matrix $\bar \Sigma _{t + 1}^{i,j}$ for the prediction error between the $i^{th}$ and $j^{th}$ subsystems have the following recursive expression:
\begin{shrinkeq}{-2ex}
\begin{sequation}\label{eq:51}
\begin{gathered}
  \bar \Theta _t^{i,j} = \left( {I - K_t^iC_t^i} \right)\bar \Sigma _t^{i,j}\;{\left( {I - K_t^jC_t^j} \right)^T} \hfill \\
  \;\; + K_t^i\left( {\hat C_t^i - C_t^i} \right)\left( {{P_t} - \bar \Sigma _t^{i,j}} \right){\left( {K_t^j\left( {\hat C_t^j - C_t^j} \right)} \right)^T} \hfill \\
  \;\; + \;\alpha _t^{ - 1}K_t^i\mathcal{H}_t^i{\left( {K_t^j\mathcal{H}_t^j} \right)^T} + K_t^iG_t^i{R_t}{\left( {K_t^jG_t^j} \right)^T} \hfill \\
  \;\; + K_t^iB_t^iE\left( {{w_t}w_t^T} \right){\left( {K_t^jB_t^j} \right)^T} \hfill \\
  \;\; + \left( {\bar \Sigma _t^{i,j} + K_t^i\left( {\hat C_t^i\left( {{P_t} - \bar \Sigma _t^{i,j}} \right) - C_t^i{P_t}} \right)} \right) \hfill \\
  \;\; \times {\left( {E_t^i\;} \right)^T}{\left( {\tilde M_t^{i,j}} \right)^{ - 1}}E_t^j \hfill \\
  \;\; \times {\left( {\bar \Sigma _t^{i,j} + K_t^j\left( {\hat C_t^j\left( {{P_t} - \bar \Sigma _t^{i,j}} \right) - C_t^j{P_t}} \right)} \right)^T}\;, \hfill \\
\end{gathered}
\end{sequation}

and
\[\begin{gathered}
  \bar \Sigma _{t + 1}^{i,j} = \left( {{A_t} - L_t^iC_t^i} \right)\bar \Sigma _t^{i,j}{\left( {{A_t} - L_t^jC_t^j} \right)^T} \hfill \\
  \;\; + \left( {{A_t} - \hat A_t^i + L_t^i\left( {\hat C_t^i - C_t^i} \right)} \right)\left( {{P_t} - \bar \Sigma _t^{i,j}} \right) \hfill \\
  \;\; \times {\left( {{A_t} - \hat A_t^j + L_t^j\left( {\hat C_t^j - C_t^j} \right)} \right)^T} \hfill \\
  \;\; + \alpha _t^{ - 1}\left( {{\mathcal{F}_t} - L_t^i\mathcal{H}_t^i} \right){\left( {{\mathcal{F}_t} - L_t^j\mathcal{H}_t^j} \right)^T} \hfill \\
  \;\; + \sum\limits_{\vartheta  = 1}^\hbar  {{A_\vartheta }} {P_t}\left( {A_\vartheta ^T} \right){\theta _{\vartheta ,t}} \hfill \\
\end{gathered} \]

\begin{sequation}\label{eq:52}
\begin{gathered}
  \;\; + \left( {{B_t} - L_t^iB_t^i} \right)E\left( {{w_t}w_t^T} \right){\left( {{B_t} - L_t^jB_t^j} \right)^T} \hfill \\
  \;\; + \left( {{G_t} - L_t^iG_t^i} \right){R_t}{\left( {{G_t} - L_t^jG_t^j} \right)^T} \hfill \\
  \;\; + \left( {\left( {{A_t} - L_t^iC_t^i} \right)\bar \Sigma _t^{i,j}} \right. \hfill \\
  \;\;\;\;\left. { + \left( {{A_t} - \hat A_t^i + L_t^i\left( {\hat C_t^i - C_t^i} \right)} \right)\left( {{P_t} - \bar \Sigma _t^{i,j}} \right)} \right) \hfill \\
  \;\; \times {\left( {E_t^i} \right)^T}{\left( {\tilde M_t^{i,j}} \right)^{ - 1}}E_t^j \hfill \\
  \;\; \times \left( {\left( {{A_t} - L_t^iC_t^i} \right)\bar \Sigma _t^{i,j}} \right. \hfill \\
  \;\;\;\;{\left. { + \left( {{A_t} - \hat A_t^j + L_t^j\left( {\hat C_t^j - C_t^j} \right)} \right)\left( {{P_t} - \bar \Sigma _t^{i,j}} \right)} \right)^T}\; \hfill \\
\end{gathered}
\end{sequation}
\end{shrinkeq}

\noindent
where $\tilde M_t^{i,j} = \alpha _t^{ - 1}I - E_t^i{P_t}{\left( {E_t^j} \right)^T}$ and $M_t^{i,j} = \alpha _t^{ - 1}I - E_t^i\bar \Sigma _t^{i,j}{\left( {E_t^j} \right)^T}$.
\end{lemma}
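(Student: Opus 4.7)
The plan is to mirror the derivation of Theorem~2 (and the expressions for $\Theta_t^i$, $\Sigma_{t+1}^i$), but now applied to the cross-quantities $E(e_t^i(e_t^j)^T)$ and $E(\tilde e_{t+1}^i(\tilde e_{t+1}^j)^T)$ between two distinct subsystems $i\neq j$. Because $e_t^i$ and $e_t^j$ are both driven by the same state $x_t$, the same energy-bounded noise $w_t$, the same Gaussian noise $v_t$ and the same parameter uncertainty $F_t$, a cross-covariance version of Lemma~3 is needed: the symmetric bound $(A+HFE)X(A+HFE)^T\le A(X^{-1}-\alpha E^T E)^{-1}A^T+\alpha^{-1}HH^T$ has to be replaced by its cross-sided analogue $(A_1+H_1FE_1)X(A_2+H_2FE_2)^T$, which, together with the matrix inversion identity already invoked in Lemma~3, will be the vehicle for producing the $\tilde M_t^{i,j}$ and $M_t^{i,j}$ terms that appear in the statement.

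Step one: plug the measurement $z_t^i=(C_t^i+\mathcal{H}_t^i F_t E_t^i)x_t+B_t^i w_t+G_t^i v_t$ into the filter/predictor recursions (\ref{eq:17})--(\ref{eq:18}) and subtract to get linear expressions for $e_t^i$ and $\tilde e_{t+1}^i$, and analogously for subsystem $j$. A clean way is to write $\hat x_{t|t-1}^i=x_t-\tilde e_t^i$ so that each error splits into (a) a term in $\tilde e_t^i$, (b) a term in $x_t$ carrying the mismatch $(\hat C_t^i-C_t^i)$ and the uncertainty factor $\mathcal{H}_t^i F_t E_t^i$, and (c) independent terms in $w_t$ and $v_t$. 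The corresponding decomposition for $j$ has the same shape, which is essential for the next step.

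Step two: form the products $e_t^i(e_t^j)^T$ and $\tilde e_{t+1}^i(\tilde e_{t+1}^j)^T$, take expectations, and exploit independence: cross terms between $\tilde e_t^\cdot$ and the noises vanish, the $w_t$--$w_t$, $v_t$--$v_t$ and $\varpi_{\vartheta,t}$--$\varpi_{\vartheta,t}$ blocks give the noise contributions in (\ref{eq:51})--(\ref{eq:52}), and the $x_t$--$x_t$, $\tilde e_t^i$--$\tilde e_t^j$, and $x_t$--$\tilde e_t^j$ cross blocks are collected using $E(x_t\tilde e_t^j{}^T)$ identities. Inductively assume the upper bound $\bar\Sigma_t^{i,j}$ is already known; apply the two-sided analogue of Lemma~3 to each block containing $F_t$: for instance $(A_1+H_1F_tE_1)Y(A_2+H_2F_tE_2)^T\le A_1(Y^{-1}-\alpha_t E_2^T E_2)^{-1}A_2^T+\alpha_t^{-1}H_1H_2^T$ whenever $\alpha_t^{-1}I-E_2YE_2^T>0$, invoked both with $Y=\bar\Sigma_t^{i,j}$ (via $M_t^{i,j}$) and with $Y=P_t-\bar\Sigma_t^{i,j}$ (via $\tilde M_t^{i,j}$). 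Substituting the filter parameters $\hat C_t^i$, $K_t^i$, $\hat A_t^i$, $L_t^i$ from Theorem~3 then collapses the remaining algebra into the precise shape of (\ref{eq:51}) and (\ref{eq:52}). Finally, Lemma~4 is used to transfer the pointwise bound on the one-step map into a bound on the whole recursion, so that $\bar\Sigma_{t+1}^{i,j}$ is a genuine upper bound for $E(\tilde e_{t+1}^i(\tilde e_{t+1}^j)^T)$ at every time $t$.

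The main obstacle will be the last part of step two: adapting Lemma~3 from the symmetric $(A+HFE)X(A+HFE)^T$ form to the asymmetric $(A_1+H_1FE_1)Y(A_2+H_2FE_2)^T$ form needed for cross-covariance, and justifying its use simultaneously for $Y=\bar\Sigma_t^{i,j}$ and $Y=P_t-\bar\Sigma_t^{i,j}$ under a \emph{single} scalar $\alpha_t$. In particular, one must check that the compatibility conditions $\alpha_t^{-1}I-E_t^i\bar\Sigma_t^{i,j}(E_t^j)^T>0$ and $\alpha_t^{-1}I-E_t^iP_t(E_t^j)^T>0$ (encoded in the definitions of $M_t^{i,j}$ and $\tilde M_t^{i,j}$) can be enforced with the same $\alpha_t$ already used for the diagonal bounds in Theorem~2; once this is verified, the rest is a bookkeeping exercise grouping terms by their source (state mismatch, parametric uncertainty, stochastic parameter, process noise, measurement noise) to recover the statement verbatim.
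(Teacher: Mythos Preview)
Your plan is essentially what the paper does implicitly: Lemma~5 is stated without a separate proof, and its formulas (\ref{eq:51})--(\ref{eq:52}) are obtained exactly as you describe, by replaying the diagonal derivation of $\bar\Theta_t^i$ and $\bar\Sigma_{t+1}^i$ in Appendix~D (equations (D.6)--(D.7)) with the right-hand factors carrying index $j$ instead of $i$, using the same augmented representation (B.1)--(B.3) and the same block structure (\ref{eq:34})/(D.1). The paper does not articulate the asymmetric analogue of Lemma~3 that you single out as the main obstacle; it simply transplants the $\alpha_t^{-1}$ and $(\tilde M_t^{i,j})^{-1}$ terms by formal analogy with the $i=i$ case, so your proposal is in fact more explicit about the one nontrivial step than the paper itself.
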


To obtain the distributed fusion estimation ${\hat x_{k|k}}$, the considered system develops the linear minimum covariance, as well as the optimal weighted fusion from the local estimation. Therefore, the designed estimator from Eq.(\ref{eq:1}) is described as the following convex optimization form:
\begin{shrinkeq}{-1.5ex}
\begin{equation}\label{eq:53}
\small
{\hat x_{k|k}} = \sum\limits_{i = 1}^L {\Omega _k^i\hat x_{k|k}^i} \;.
\end{equation}
\end{shrinkeq}

To have the optimal value for the distributed fusion estimation \cite{Bib11,Bib40,Bib39_2}, if and only if the weighted matrix in Eq.(\ref{eq:53}) is composed by
\begin{shrinkeq}{-1.5ex}
\begin{equation}\label{eq:54}
\small
{\Omega _k} = \left[ {\Omega _k^1,\; \cdots ,\;\Omega _k^L} \right] = {\left( {I_0^T\Pi _k^{ - 1}{I_0}} \right)^{ - 1}}I_0^T\Pi _k^{ - 1}\;,
\end{equation}
\end{shrinkeq}
where ${I_0} = {\underbrace {\left[ {{I_r}\;,\; \cdots ,\;{I_r}} \right]}_L}^T$ is an $rL \times r$-dimensional matrix, and $\sum\limits_{i = 1}^L {\Omega _k^i}  = {I_r}$. Moreover, the covariance matrix based on the linear compensation is defined as $\Pi _k^{i,j} \triangleq E\left( {e_k^i{{\left( {e_k^j} \right)}^T}} \right)$, and
\begin{shrinkeq}{-1.5ex}
\begin{equation}\label{eq:55}
\small
{\Pi _k} = \left[ \begin{gathered}
  \Pi _k^{1,1}\;\; \cdots \;\;\Pi _k^{1,L} \hfill \\
  \;\;\;\;\;\;\;\; \ddots \;\;\; \hfill \\
  \Pi _k^{L,1}\;\; \cdots \;\;\Pi _k^{L,L} \hfill \\
\end{gathered}  \right] = {\left( {\Pi _k^{i,j}} \right)_{rL \times rL}}
\end{equation}
\end{shrinkeq}
is a symmetrical positive-definite matrix. Set ${\tilde \Pi _k} = {\left( {\Pi _k^{{{ - 1} \mathord{\left/
 {\vphantom {{ - 1} 2}} \right.
 \kern-\nulldelimiterspace} 2}}{I_0}} \right)^T}\left( {\Pi _k^{{1 \mathord{\left/
 {\vphantom {1 2}} \right.
 \kern-\nulldelimiterspace} 2}}I_0^i} \right)$ and $I_0^i = {\underbrace {\left[ {0, \cdots ,{I_r}, \cdots ,0} \right]}_L^T}$ be the $i^{th}$ element in ${I_0}$. It is worth noting that on the basis of the cross-covariance of ${\tilde \Pi _{k|k}} \triangleq E\left( {\left( {{x_k} - {{\hat x}_{k|k}}} \right){{\left( {{x_k} - {{\hat x}_{k|k}}} \right)}^T}} \right)$, the corresponding optimal fusion estimation is designed by ${\tilde \Pi _{k|k}} = {\left( {I_0^T\Pi _k^{ - 1}{I_0}} \right)^{ - 1}}$, which satisfies ${\tilde \Pi _{k|k}} \leqslant \Pi _k^i \leqslant \bar \Theta _k^i$ under Eq.(\ref{eq:35}), if $i = j$; otherwise, we have ${\tilde \Pi _{k|k}} = \Pi _k^i \leqslant \bar \Theta _k^i$ by setting $\Omega _k^i = {I_r}$ and $\Omega _k^j = 0$ ($j = 1,\; \cdots \;,\;L,\;j \ne i$).

\begin{remark}\label{Remark 6}
The stochastic uncertain system presents the signal selection scheme to handle network-induced complexity simultaneously. Meanwhile, since the received data packets include the latest signals, the packet disorders are actively avoided. That is, a linear compensation method for the packet dropouts is obtained from each local estimation $\hat x_{k|k}^i$. Furthermore, to achieve information exchange between any two subsystems, we propose the convex optimization problem based on the weighted fusion criterion, and the optimal estimated state is available by using the linear minimum covariance matrix. Thus, the fusion estimation possesses higher estimation accuracy than each local subsystem.
\end{remark}

Summarizing the throughout calculation process of the distributed fusion estimation, the solution approach is illustrated in Fig.3.

\begin{figure*}[t]
	\begin{center}
		\includegraphics[width=0.5\linewidth]{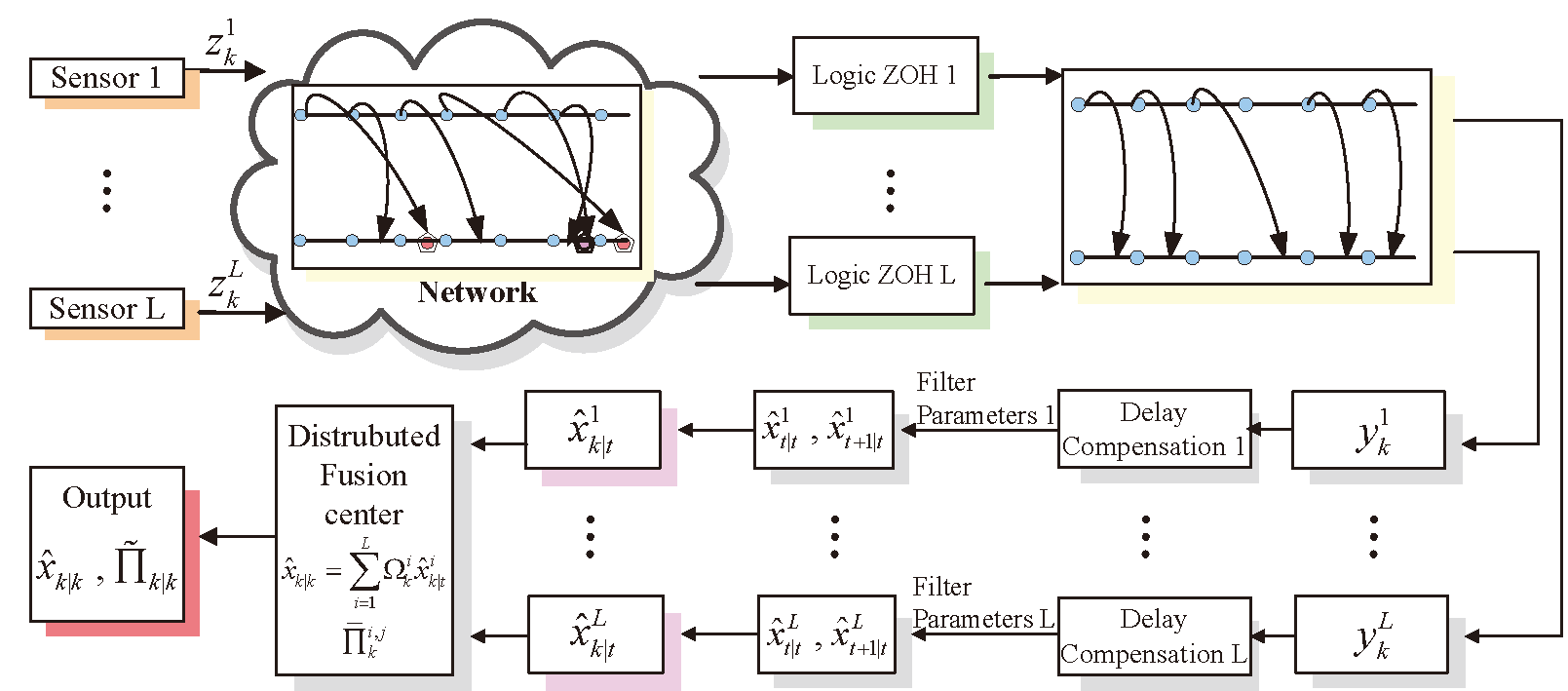}
	\end{center}
    \vspace{-4mm}
	\caption{Computational procedure of RFHDFE}
	\label{Fig 3}
\end{figure*}

\subsection{Stability analysis}
To analysis the steady-state local estimators, the definition and formalism of Lemma 6 is shown.

\begin{lemma}\label{Lemma 6}
\cite{Bib24,Bib40,Bib47} For the re-organized system shown in Eqs.(\ref{eq:1}) and (\ref{eq:16}), the ${H_2}$ performance is represented in Eq.(\ref{eq:41}), and ${\tilde {\rm X}_i} \leqslant {{\rm X}_i}$ is satisfied under Eq.(\ref{eq:40}). Given the covariance matrix $E\left( {\tilde e_{k + 1}^i{{\left( {\tilde e_{k + 1}^i} \right)}^T}} \right)$ by Eq.(\ref{eq:36}) with any initial value $E\left( {\tilde e_0^i{{\left( {\tilde e_0^i} \right)}^T}} \right) \geqslant 0$ due to the external disturbance, the error covariance will converge to the positive semi-definite solution ${\overset{\lower0.5em\hbox{$\smash{\scriptscriptstyle\smile}$}}{\Sigma^i}}$. According to the upper bound for the covariance matrix of the estimation error, the inequality $E\left( {\tilde e_{k + 1}^i{{\left( {\tilde e_{k + 1}^i} \right)}^T}} \right) \leqslant \bar \Sigma _{k + 1}^i$ is satisfied due to Eq.(\ref{eq:43}), as well as $Tr\left( {E\left( {\tilde e_{k + 1}^i{{\left( {\tilde e_{k + 1}^i} \right)}^T}} \right)} \right) \leqslant Tr\left( {\bar \Sigma _{k + 1}^i} \right)$, which is rewritten as follows:
\begin{sequation}\label{eq:56}
\begin{gathered}
  \bar \Sigma _{k + 1}^i = \left( {{A_k} - L_k^iC_k^i} \right)\left( {I + \bar \Sigma _k^i{{\left( {E_k^i} \right)}^T}{{\left( {M_k^i} \right)}^{ - 1}}E_k^i} \right)\bar \Sigma _k^iA_k^T \hfill \\
  \;\; + \alpha _k^{ - 1}\left( {{\mathcal{F}_k} - L_k^i\mathcal{H}_k^i} \right)\mathcal{F}_k^T \hfill \\
  \;\; + \left( {{B_k} - L_k^iB_k^i} \right)E\left( {{w_k}w_k^T} \right)B_k^T \hfill \\
  \;\; + \left( {{G_k} - L_k^iG_k^i} \right){R_k}G_k^T + \sum\limits_{\vartheta  = 1}^\hbar  {{A_\vartheta }} {P_k}\left( {A_\vartheta ^T} \right){\theta _{\vartheta ,k}}\;. \hfill \\
\end{gathered}
\end{sequation}
Let the set filter parameters be ${\overset{\lower0.5em\hbox{$\smash{\scriptscriptstyle\smile}$}}{L^i}} = {\lim _{k \to \infty }}L_k^i$ given in Eq.(\ref{eq:48}), ${\overset{\lower0.5em\hbox{$\smash{\scriptscriptstyle\smile}$}}{\Delta ^i}} = {\lim _{k \to \infty }}\tilde \Delta _k^i$ and ${\overset{\lower0.5em\hbox{$\smash{\scriptscriptstyle\smile}$}}{\nabla ^i}} = {\lim _{k \to \infty }}\tilde \nabla _k^i$ in Eq.(D.9). Therefore, ${\overset{\lower0.5em\hbox{$\smash{\scriptscriptstyle\smile}$}}{\Sigma ^i}} = {\lim _{k \to \infty }}E\left( {\tilde e_{k + 1}^i{{\left( {\tilde e_{k + 1}^i} \right)}^T}} \right)$ is determined by $\bar \Sigma _k^i$. We know that matrix ${A_k}$ is stable, and $\rho \left( {{A_k}} \right)$ denotes the spectrum radius of matrix ${A_k}$. In this situation, the estimated state $\hat x_{k + 1|k}^i$ in Eq.(\ref{eq:18}) is asymptotically stable.
\end{lemma}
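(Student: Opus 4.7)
The plan is to establish that the predictor $\hat{x}_{k+1|k}^i$ in Eq.~(\ref{eq:18}) is asymptotically stable in three stages: (i) prove convergence of the upper-bound covariance sequence $\bar{\Sigma}_k^i$ generated by Eq.~(\ref{eq:56}); (ii) pass to the limit in the gain formulas~(\ref{eq:45})--(\ref{eq:48}) to obtain time-invariant filter parameters; (iii) deduce a spectral-radius bound on the resulting closed-loop matrix $A_\infty^i - L_\infty^i C_\infty^i$ via a discrete Lyapunov argument. Once the gains stabilize, Eq.~(\ref{eq:18}) becomes an LTI system driven by bounded inputs, so this spectral-radius bound is exactly what asymptotic stability of the estimator requires.

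For step~(i) I would apply Lemma~\ref{Lemma 4} to Eq.~(\ref{eq:56}), viewing the recursion as $\bar{\Sigma}_{k+1}^i = h_k(\bar{\Sigma}_k^i)$ on the cone of matrices satisfying the feasibility condition $\alpha_k^{-1}I - E_k^i \bar{\Sigma}_k^i (E_k^i)^T > 0$. Monotonicity of $h_k$ on this cone, combined with monotone convergence of the separately-evolving $P_k$ in~(\ref{eq:44}) under the bounds~(\ref{eq:3}), gives monotonicity of the iterates. Boundedness follows from two complementary estimates: the $H_\infty$-type inequality~(\ref{eq:37}) together with the hypothesis $\tilde{X}_i \leqslant X_i$ under~(\ref{eq:40}) controls the response to the energy-bounded component, while the $H_2$ bound~(\ref{eq:41}) caps $Tr(\bar{\Sigma}_k^i)$ under the Gaussian component. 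Monotone plus bounded yields a limit $\Sigma_\infty^i \geqslant 0$ satisfying the steady-state version of~(\ref{eq:56}); continuity of~(\ref{eq:45})--(\ref{eq:48}) then delivers the limiting gains $C_\infty^i$, $K_\infty^i$, $A_\infty^i$, $L_\infty^i$. Substituting into~(\ref{eq:18}) gives the LTI predictor $\hat{x}_{k+1|k}^i = (A_\infty^i - L_\infty^i C_\infty^i)\hat{x}_{k|k-1}^i + L_\infty^i z_k^i$, and rearranging the fixed-point form of~(\ref{eq:56}) yields a discrete Lyapunov equation
\[
\Sigma_\infty^i = (A_\infty^i - L_\infty^i C_\infty^i)\,\Sigma_\infty^i\,(A_\infty^i - L_\infty^i C_\infty^i)^T + Q,
\]
where $Q \geqslant 0$ absorbs the noise, uncertainty and multiplicative-noise contributions. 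Because $A_k$ is stable, $Q$ is strictly positive along the reachable directions, so standard Lyapunov theory forces $\rho(A_\infty^i - L_\infty^i C_\infty^i) < 1$, which is precisely asymptotic stability of $\hat{x}_{k+1|k}^i$.

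The main obstacle is step~(i). The Riccati-type operator $h_k$ in~(\ref{eq:56}) contains the nonlinear factor $(M_k^i)^{-1} = (\alpha_k^{-1}I - E_k^i \bar{\Sigma}_k^i (E_k^i)^T)^{-1}$, so monotonicity and well-posedness only hold inside the feasibility region; I must verify that the iterates never leave it, which in turn constrains how $\alpha_k$ may be chosen along the trajectory. The multiplicative-noise term $\sum_{\vartheta=1}^{\hbar} A_\vartheta P_k A_\vartheta^T \theta_{\vartheta,k}$ couples $\bar{\Sigma}_k^i$ to $P_k$, so I would first establish convergence of $P_k$ as a separate uncertain Riccati recursion using Lemma~\ref{Lemma 4} with the bounds from~(\ref{eq:3}), and then substitute its limit when closing the argument for $\bar{\Sigma}_k^i$. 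The convergence results for uncertain discrete Riccati recursions cited in~\cite{Bib24,Bib40,Bib47} are exactly what is needed to make these monotonicity-plus-boundedness bookkeeping steps rigorous and to identify $\Sigma_\infty^i$ uniquely.
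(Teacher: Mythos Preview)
The paper does not supply its own proof of Lemma~\ref{Lemma 6}. The lemma is stated with the citation \cite{Bib24,Bib40,Bib47} and is treated as a known result imported from those references; the surrounding text (the paragraph beginning ``Note that the stability of matrix $A_k$ for the steady-state filter is necessary\ldots'') offers only a brief informal commentary, not a proof. So there is nothing in the paper to compare your argument against line by line.

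That said, your three-stage plan (monotone convergence of $\bar\Sigma_k^i$, passage to the limit in the gain formulas, discrete Lyapunov argument for the closed-loop spectral radius) is exactly the standard route used in the literature that the paper cites, and the final Lyapunov step is the right way to close. One point to tighten: your boundedness argument in step~(i) appeals to Eqs.~(\ref{eq:37}) and~(\ref{eq:41}), but neither gives what you need. Inequality~(\ref{eq:37}) is a summability statement for the trace sequence, not a uniform bound on the individual $\bar\Sigma_k^i$, and~(\ref{eq:41}) bounds only the limiting $H_2$ cost. The correct source of a uniform upper bound is the assumed stability of $A_k$ (stated explicitly in the lemma): with $\rho(A_k)<1$ one constructs, in the usual way, a fixed supersolution of the steady-state version of~(\ref{eq:56}) that dominates every iterate, and Lemma~\ref{Lemma 4} then traps the sequence. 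Once you make that substitution, your outline matches the argument in the cited references and is complete.
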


Note that the stability of matrix ${A_k}$ for the steady-state filter is necessary. In addition, the error covariance matrix $E\left( {\tilde e_{k + 1}^i{{\left( {\tilde e_{k + 1}^i} \right)}^T}} \right)$ is influenced by energy-bounded noise and uncertain white noise. If matrix ${A_k}$ is unstable, the error covariance matrices of the estimation errors will be unbounded. Meanwhile, the covariance matrices of the filtering errors are solved by $E\left( {e_k^i{{\left( {e_k^i} \right)}^T}} \right) \leqslant \bar \Theta _k^i$ shown in Eq.(\ref{eq:35}), and $\bar \Theta _k^i$ is determined by $\bar \Sigma _k^i$ according to Eq.(\ref{eq:42}). Therefore, the steady-state cross-covariance matrices for the filtering errors are converging.

Finally, the proposed distributed fusion estimation based on local estimation is the steady-state filter under Eqs.(\ref{eq:53}) and (\ref{eq:54}). Note that $\hat x_{k|k}^i = \left[ {0\;\;{I_r}} \right]\Psi _k^i$ in Eq.(\ref{eq:19}) is the local steady-state filter for the state ${x_k}$, and ${\Omega _k}$ is the steady-state weighted matrix. Moreover, ${\Pi _k}$ in Eq.(\ref{eq:55}) is the steady-state covariance matrix for the filtering error. Therefore, the variance matrix ${\left( {I_0^T\Pi _k^{ - 1}{I_0}} \right)^{ - 1}}$ is the distributed fusion estimator.

\begin{remark}\label{Remark 7}
The proposed distributed fusion estimation (i.e. RFHDFE) approach is not required to be calculated at each step, so that it is easy to be implemented in engineering application. In addition, the proposed RFHDFE alleviates the computational burden. Since it has the computational cost with the order of magnitude $O\left( {\sum\limits_{i = 1}^L {{{\left( {\frac{{{\tau ^i}\left( k \right)}}{N}\left( {{m_i} + r} \right)} \right)}^2}} } \right)$, where $r$ expresses the dimension of the state, ${m_i}$ represents the measurement dimension and $L$ denotes the number of the subsystems. Therefore, this paper investigates the distributed fusion estimation based on the local estimation possesses less computation complexity and reduces the energy consumption.
\end{remark}

\section{Numerical simulation}
We use simulations to illustrate the effectiveness of the proposed RFHDFE approach. Considering that a moving target tracking system is measured by three sensors, which is described by the following state-space and measurement models, respectively, and the related systems are modeled in \cite{Bib11,Bib22,Bib34}:
\begin{shrinkeq}{-1ex}
\begin{equation}\label{eq:57}
\small
\begin{gathered}
  {x_{k + 1}} = \left( {{A_k} + {\mathcal{F}_k}{F_k}{E_k} + {A_1}{\varpi _{1,k}}} \right){x_k} \hfill \\
  \;\; + {B_k}{w_k} + {G_k}{v_k},\;\;\;\;\;\;\;\;k = 1,2, \cdots  \hfill \\
\end{gathered}
\end{equation}
\begin{equation}\label{eq:58}
\small
z_k^i = \left( {C_k^i + \mathcal{H}_k^i{F_k}E_k^i} \right){x_k} + B_k^i{w_k} + G_k^i{v_k}\;,\;\;\;\;\;i = 1,\;2,\;3
\end{equation}
\end{shrinkeq}
where the system parameters are set as follows:
\begin{shrinkeq}{-1.5ex}
\begin{small}
\[\begin{gathered}
  {A_k} = \left[ \begin{gathered}
  0.9\;\;\;\;T\;\;\;\;{{{T^2}} \mathord{\left/
 {\vphantom {{{T^2}} 2}} \right.
 \kern-\nulldelimiterspace} 2} \hfill \\
  0\;\;\;\;\;0.9\;\;\;\;\;T \hfill \\
  0\;\;\;\;\;\;0\;\;\;\;\;\;0.9 \hfill \\
\end{gathered}  \right],\;{A_1} = \left[ \begin{gathered}
  0.02\;\;\;0.03\;\;\;0.01 \hfill \\
  0.06\;\;\;0.05\;\;\;0.02 \hfill \\
  0.05\;\;\;0.03\;\;\;0.01 \hfill \\
\end{gathered}  \right], \hfill \\
  {B_k} = \left[ \begin{gathered}
  0.1 \hfill \\
  0.3 \hfill \\
  0.2 \hfill \\
\end{gathered}  \right],\;B_k^1 = \left[ \begin{gathered}
  0.1 \hfill \\
  0.3 \hfill \\
  0.2 \hfill \\
\end{gathered}  \right],\;B_k^2 = \left[ \begin{gathered}
  0.2 \hfill \\
  0.4 \hfill \\
  0.5 \hfill \\
\end{gathered}  \right],B_k^3 = \left[ \begin{gathered}
  0.4 \hfill \\
  0.5 \hfill \\
  0.2 \hfill \\
\end{gathered}  \right] \hfill \\
  {G_k} = col\left\{ {{{{T^2}} \mathord{\left/
 {\vphantom {{{T^2}} 2}} \right.
 \kern-\nulldelimiterspace} 2},\;T,1} \right\},\;G_k^i = B_k^i\left( {i = 1,\;2,\;3} \right), \hfill \\
  {\mathcal{F}_k} = {\left[ {0.1\;\;0.1\;\;0.1} \right]^T},\;{E_k} = \left[ {0.02\;\;0.02\;\;0.02} \right], \hfill \\
  C_k^1 = \left[ {0.6\;\;\;0.8\;\;\;1} \right],\;C_k^2 = \left[ {1\;\;\;0.8\;\;\;0.5} \right]\;, \hfill \\
  C_k^3 = \left[ {0.3\;\;\;1\;\;\;0.7} \right],\;\mathcal{H}_k^1 = \mathcal{H}_k^2 = \mathcal{H}_k^3 = 0.8\;. \hfill \\
\end{gathered} \]
\end{small}

\noindent
Symbol $T$ denotes the sample period and is set to be $0.1s$. The defined state ${x_k} \triangleq {\left( {{s_k}\;\;\;{{\dot s}_k}\;\;\;{{\ddot s}_k}} \right)^T}$ includes the position, velocity and acceleration, respectively, of the moving target at the time instant $kT$, and the maximum of the transmission delay is set $N = 5$. The parameter uncertainty ${F_k} = \sin \left( {0.6k} \right)$ is a time-varying matrix. ${v_k}$ and ${\varpi _{1,k}}$ are the Gaussian white noise with covariance ${R_k} = 0.09$ and $\theta _1^L \leqslant {\theta _{1,k}} \leqslant \theta _1^U$, where the covariance satisfies $\theta _1^L = 0.01$ and $\theta _1^U = 0.02$. The DALB ${\gamma _1} = {\gamma _2} = {\gamma _3} = 0.2$, and the energy bounded noise ${w_k}$ is given by ${w_k} = 2\cos \left( {0.6k} \right)$.
\end{shrinkeq}

Without losing generality, to solve Theorem \ref{Theorem:3} with appropriate filter parameters, the initial values are set as ${\hat x_{0|0}} = {\mu _0} = E\left( {{x_0}} \right) = {\left[ {1\;\;\;\;1\;\;\;\;1} \right]^T}$, ${P_{0|0}} = 0.01{I_3}$ as well as ${\alpha _k} = 3$. Fig.4 demonstrates the traces for the covariance matrices of the estimation errors. The estimated state ${\hat x_{k|k}}$ using the proposed RFHDFE with delay-free is compared against the improved robust finite-horizon Kalman filtering (IRFHKF) approach \cite{Bib22}. From Fig.4 (a), the proposed RFHDFE approach has a less upper bound than IRFHKF for the covariance matrices of the estimation errors. Note that due to the weighted fusion criterion, the fused upper bound for the cross-covariance matrix of the error is much smaller than that of the other covariance matrices, and the fusion estimation criterion is optimal. Comparing the estimation accuracy of the both methods shown in Figs.4 (b)-(d), it implies that the proposed RFHDFE method results in an appropriate estimator, which is suitable for deriving the optimal filter parameters and obtaining the precise estimation. Taking the energy bound noise and unknown state-dependent noise into account, the dynamic tracking trajectory is very close to the actual one.

The comparison of the mean square error values (MSEs) \cite{Bib11,Bib40} are provided by Monte-Carlo shown in Table 1, to indicate the accuracy improvement of the proposed RFHDFE approach with network-induced complexity and multiple noise in different noise environments. Note that the comparison of MSEs includes the states of position, velocity and acceleration, as well as their estimated values and runtime for each method, respectively. It shows that the MSEs of the fusion estimation are smaller than each subsystem, and also smaller than the IRFHKF method. It means that the RFHDFE approach has the capacity for obtaining better estimation performance than each one and the IRFHKF method in the following aspects: smaller MSEs and less time consuming.

\begin{shrinkeq}{-1ex}
\begin{figure}[H]
\setcounter{subfigure}{0}
{\centering
\subfigure[Upper bound for trace of error covariance matrices]{\includegraphics[width=0.8\linewidth]{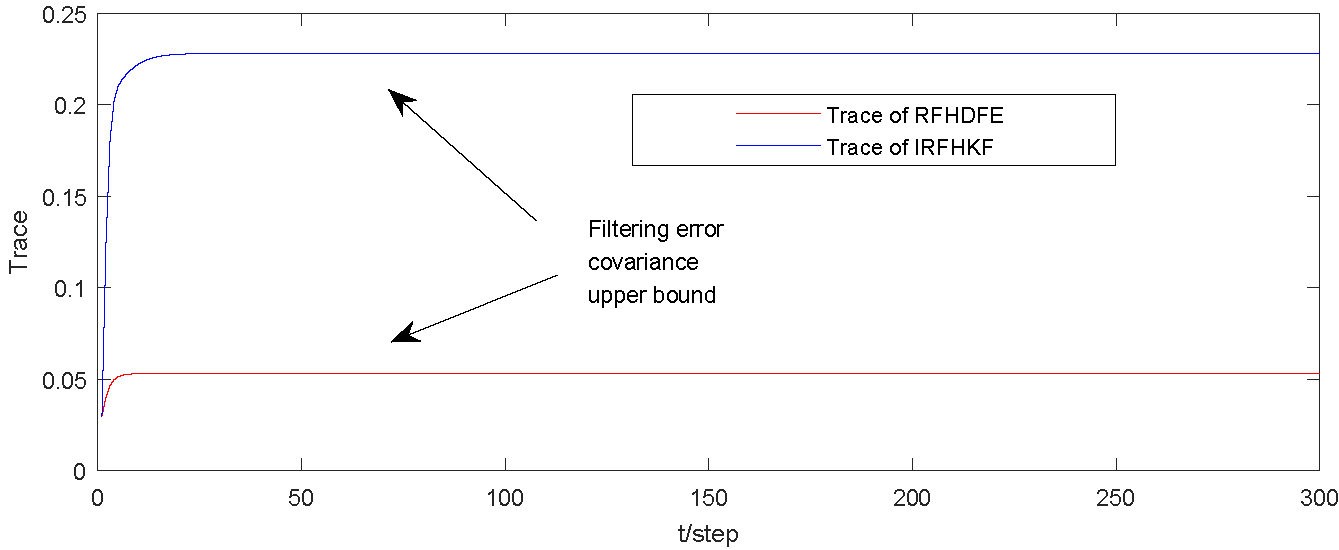}}
\\
\subfigure[State estimation of position]{\includegraphics[width=0.8\linewidth]{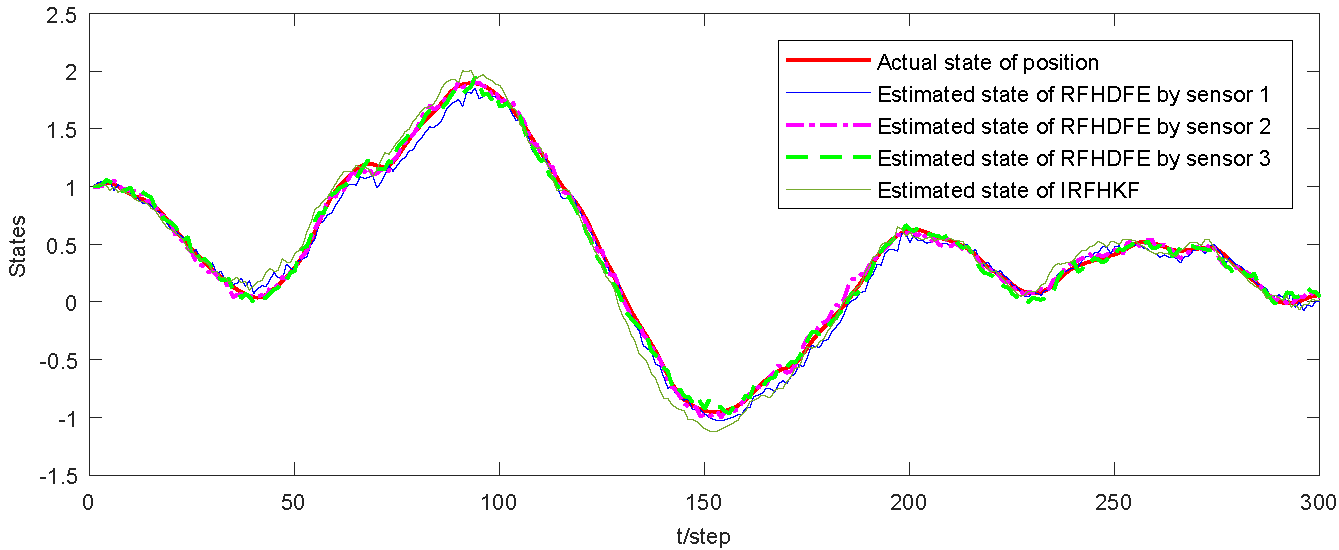}}
\\
\subfigure[State estimation of velocity]{\includegraphics[width=0.8\linewidth]{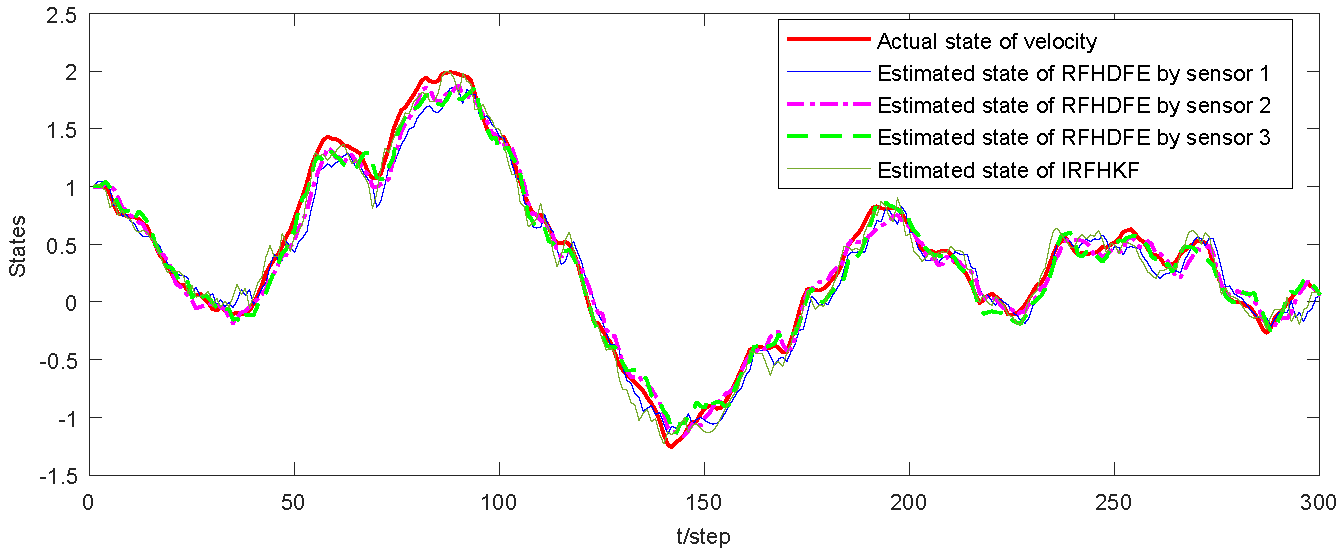}}
\\
\subfigure[State estimation of acceleration]{\includegraphics[width=0.8\linewidth]{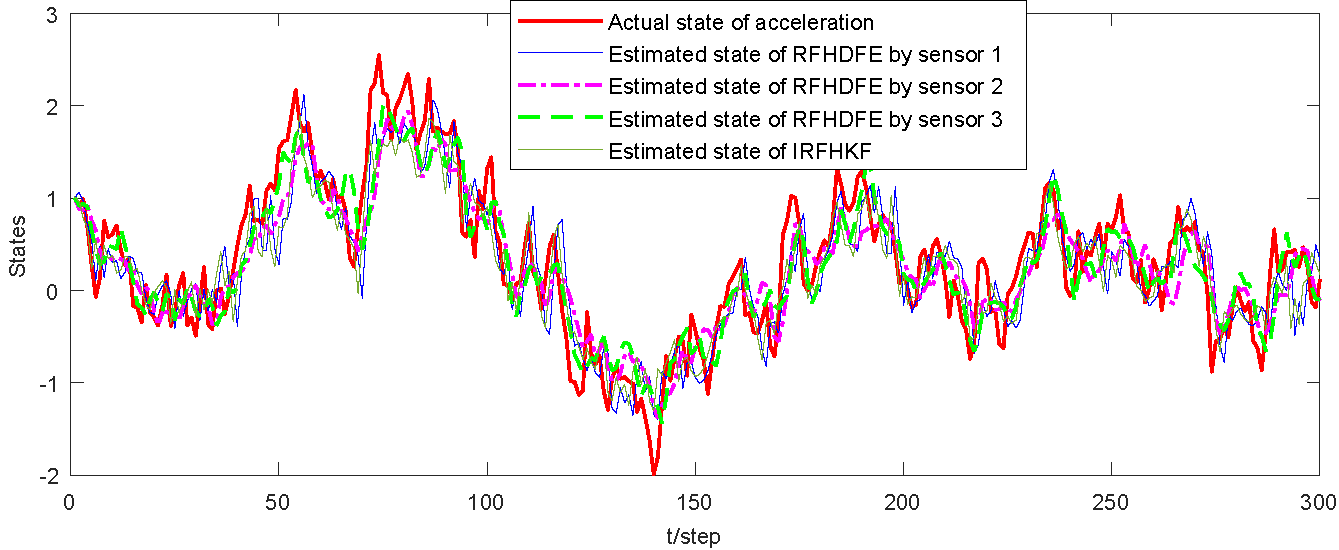}}
\\
\vspace{-4mm}
\caption{Comparison of covariance matrices between RFHDFE and IRFHKF}
}
\label{Fig 4}
\end{figure}
\end{shrinkeq}

\renewcommand{\baselinestretch}{0.48}
\begin{center}\label{Tab 1}
\vbox{\centering{\small
Table 1\quad Comparison of mean square error values}
\vskip2mm
{\footnotesize\centerline{\tabcolsep=5.2pt\begin{tabular*}{0.48\textwidth}{cccccc}
\toprule
\multicolumn{1}{c}{\multirow {2}{*}{Scheme}} & \multicolumn{1}{c}{\multirow {2}{*}{Subsystem}} &\multicolumn{3}{c}{MSEs}\\\cline{3-5}
  \multicolumn{1}{c}{}&\multicolumn{1}{c}{}&\multicolumn{1}{c}{Position}&
\multicolumn{1}{c}{Velocity} & \multicolumn{1}{c}{Acceleration} & \multicolumn{1}{c}{\multirow {-2}{*}{Time (s)}}\\
\hline
\multirow {4}{*}{RFHDFE}&
    Sensor 1  & 0.0041 & 0.0073 & 0.0335 & 0.0624 \\
  & Sensor 2  & 0.0025 & 0.0066 & 0.0289 & 0.1560 \\
  & Sensor 3	& 0.0018 & 0.0047 &	0.0245 & 0.0936 \\
  & Fusion	& 0.0004 & 0.0006 &	0.0005 & 0.3432 \\
\hline
\multirow{1}{*}{IRFHKF}  &	& 0.0124 & 0.0148 &	0.0322 & 1.8564 \\
\bottomrule
\end{tabular*}}}}
\end{center}

To illustrate the effectiveness of the system, the corresponding tracking results of the estimation for the moving target are shown in Fig.5. The simulation results come from the proposed (RFHDFE) approach with packet disorders and the IRFHKF method, which are obtained from Eqs.(\ref{eq:57}) and (\ref{eq:58}). More importantly, the actual and the estimated states with or without packet disorders are shown in Figs.5 (a)-(c), and it is a further verification that using the logic ZOH is capable of identifying and discarding packet disorders induced from random transmission delays. Meanwhile, Fig.5 (d) verifies the system stability by showing the traces for the covariance matrices of the estimation errors depending on each filter and the fusion estimation using the proposed RFHDFE method and the IRFHKF method.

\begin{shrinkeq}{-1ex}
\begin{figure}[H]
\setcounter{subfigure}{0}
{\centering
\subfigure[Estimated state for position]{\includegraphics[width=0.8\linewidth]{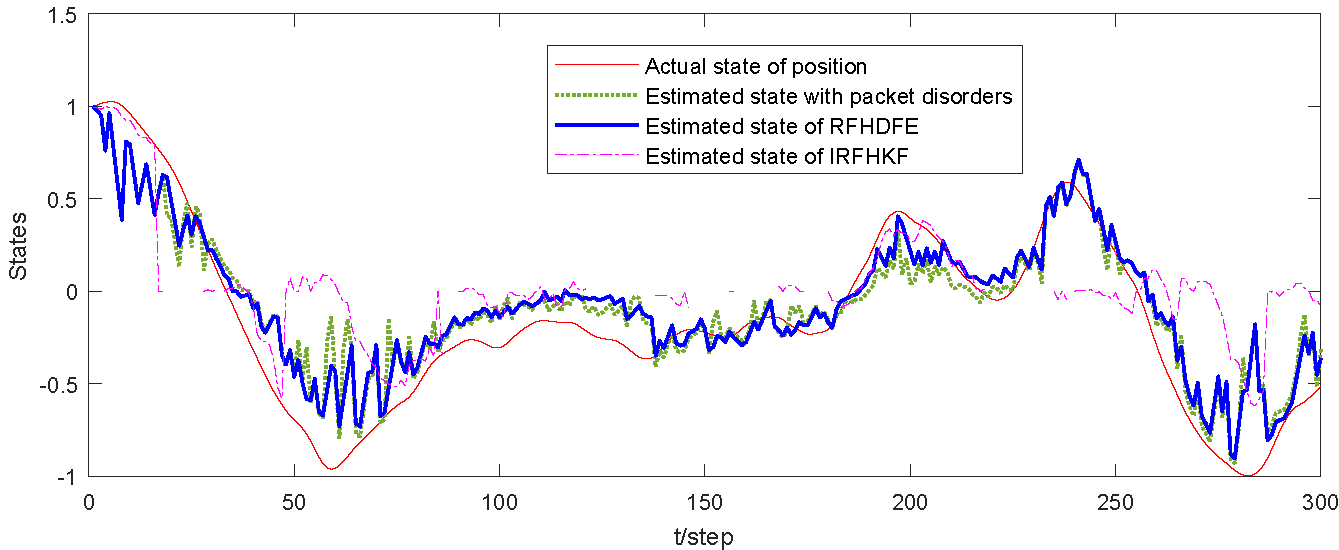}}
\\
\subfigure[Estimated state for velocity]{\includegraphics[width=0.8\linewidth]{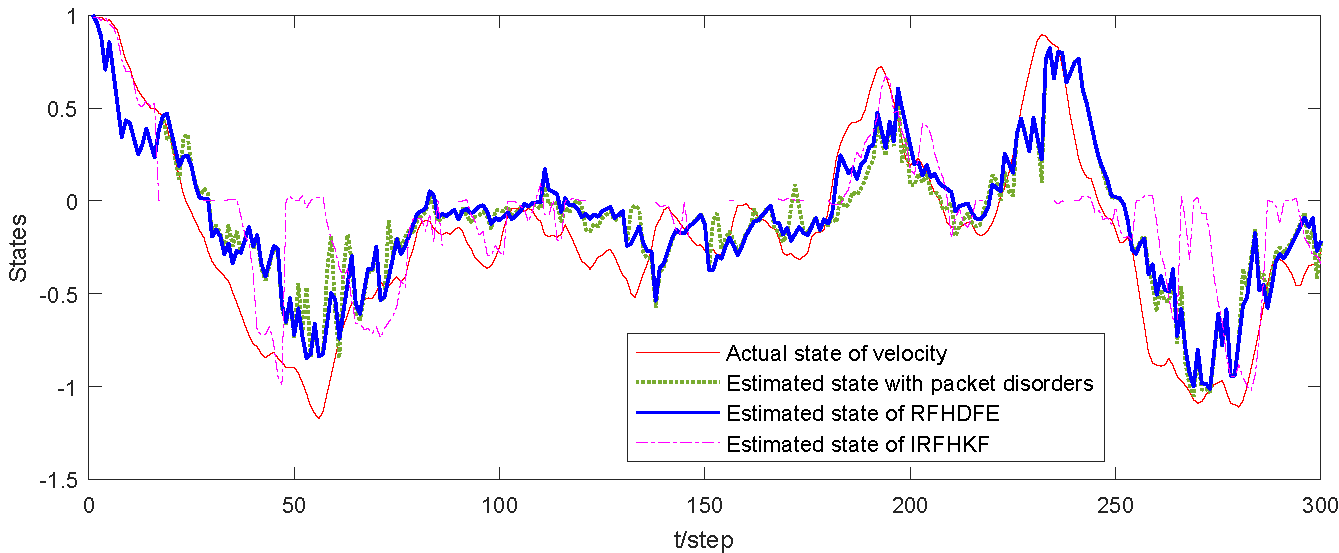}}
\\
\subfigure[Estimated state for acceleration]{\includegraphics[width=0.8\linewidth]{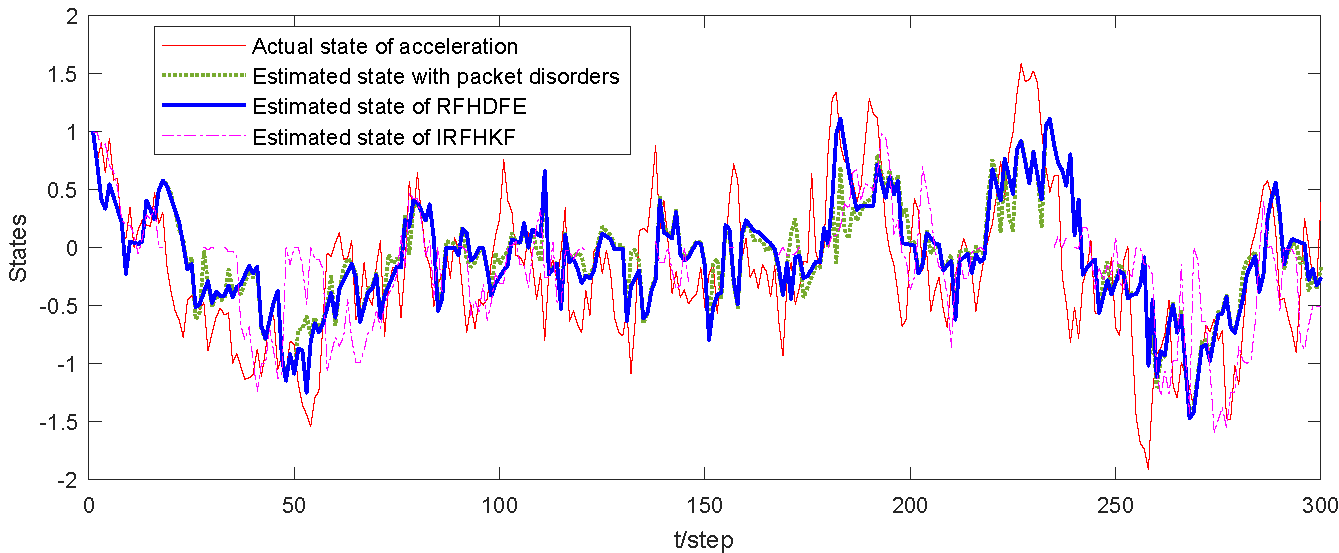}}
\\
\subfigure[Trace of error covariance matrices]{\includegraphics[width=0.8\linewidth]{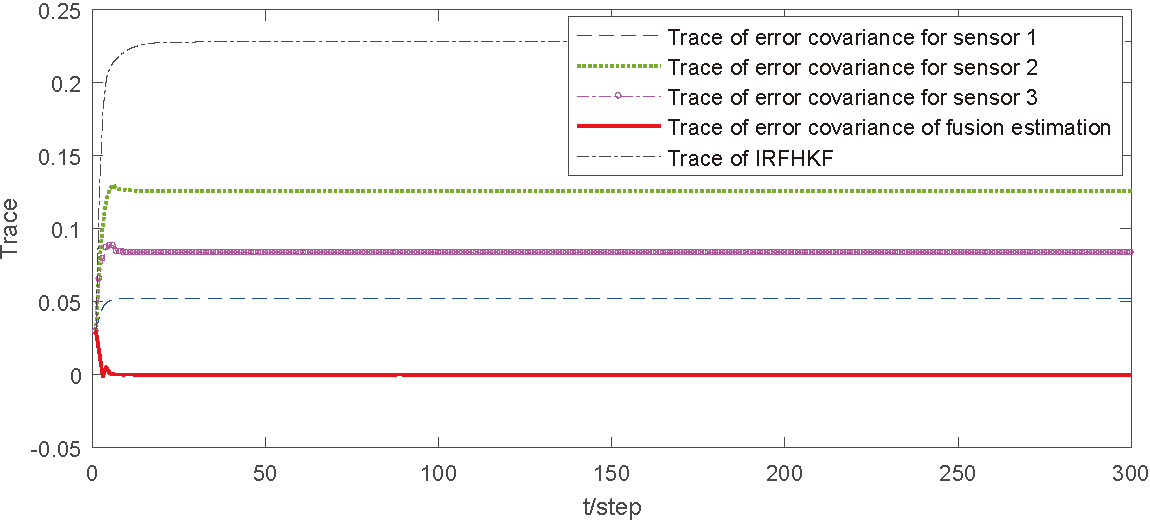}}
\\
\vspace{-4mm}
\caption{Estimated results with or without packet disorders}}
\label{Fig 5}
\end{figure}
\end{shrinkeq}

The range of the performance indicator is presented in Table E.1 of Appendix E, which means that for the proposed approach, the upper bounds for the covariance matrices of the estimation errors are less than those of the IRFHKF method. Note that the estimation performance is improved, and the estimation accuracy is higher if the upper bound for error covariance is close to the lower bound.

The distributed estimation results for the states are shown in Fig. E.1. In the simulations, the re-ordering error covariance criteria are used to compare the dynamic tracking performance between the RFHDFE and IRFHKF methods. It suggests that the proposed RFHDFE approach has better estimation accuracy than the IRFHKF method with the random transmission delays, packet dropouts and disorders simultaneously. Since the energy bound noise, unknown state-dependent noise, and the actual estimation errors are lower than the upper bound, for the actual estimated values, the proposed RFHDFE method has better performance including accurate system states and rapid convergence to the steady-state.

\section{Conclusion}
In this paper, the distributed fusion estimation problem considering a class of stochastic uncertain systems with network-induced complexity, parameter uncertainty and mixed noise disturbance have been investigated. To handle packet dropouts and packet disorders generated from random transmission delays, the data packets selection and system modelling have been implemented using the logic ZOH. For the admissible uncertainty, the problem of the local estimation has been investigated by analyzing the mean-square errors and the mix ${H_2}/{H_\infty }$ performance by means of the Schur complement lemma. To estimate the state with packet dropouts, a linear compensation method has been proposed, and the optimal state estimation has also been presented to overcome the influence of the limited communication capacity and suppress the computational burden. In addition, the upper bound of the estimation error has been designed developing the local finite horizon filtering. Subsequently, the distributed state estimation approach has been integrated with the weighted fusion criterion. Moreover, the proposed distributed fusion estimation has the advantage of higher accuracy than each local subsystem by the performance analysis of MSEs. Finally, the numerical simulations for the target tracking systems have been given to illustrate the advantages and effectiveness of the proposed RFHDFE method, in particular at the system state and the higher accuracy of the estimated value.

\section{Acknowledgements}

\vspace{-2mm}
This work was supported by National Natural Science Foundation of China (61903172, 61877065, 61872170, 52077213, 62003332, 61633016, 61673200), the Major Basic Research Project of the Natural Science Foundation of Shandong Province of China (ZR2018ZC0438), Natural Science Foundation of Guangdong (2018A030310671), and Outstanding Young Researcher Innovation Fund of Shenzhen Institute of Advanced Technology, Chinese Academy of Sciences (201822).

\appendix

\section{Description of a typical scenario.}
\setcounter{equation}{0}
\renewcommand{\theequation}{A.\arabic{equation}}
This section gives the description of a typical scenario in subsection 2.2.

Fig.2 shows a typical scenario to describe the network-induced complexity (i.e. packet dropouts and packet disorders generated from the random transmission delays).

For the $i^{th}$ subsystem, it is assumed that the upper bound of the transmission delays is not more than 5 sample periods (i.e. ${N_k} \leqslant 5$, and the delay for each step satisfies ${N_k} \leqslant k$). $\eta \left( {{t_k}} \right)$ represents the transmission delay, and the sample period is $T$, while $t \in \left\{ {kT,\;k \in \mathbb{N}} \right\}$ denotes the sampling time instant. Depending on the role of the logic ZOH, the packet disorders come from the signals before being transmitted, such as $z_{\text{3}}^i$ and $z_{\text{2}}^i$, $z_{\text{8}}^i$ and  $z_{\text{7}}^i$, as well as $z_{\text{11}}^i$, $z_{\text{10}}^i$ and $z_{\text{9}}^i$, and then $z_{\text{3}}^i$, $z_{\text{8}}^i$ and $z_{\text{11}}^i$ are held at time instant $k = 5$, $k = 9$ and $k = 12$, respectively.

\section{Augmented state-space model.}
\setcounter{equation}{0}
\renewcommand{\theequation}{B.\arabic{equation}}
This section gives the definitions and derivations of the augmented state-space model shown in subsection 3.1.

An augmented state-space model combining the systems shown in Eqs.(\ref{eq:1}) and (\ref{eq:17})-(\ref{eq:19}) is represented as follows:
\begin{sequation}\label{eq:B.1}
\Psi _t^i = \left( {A_{t1}^i + H_{t1}^i{F_t}E_{t1}^i} \right)\tilde \Psi _t^i + B_{t1}^i{w_t} + G_{t1}^i{v_t}\;,
\end{sequation}
and
\begin{sequation}\label{eq:B.2}
\begin{gathered}
  \tilde \Psi _{t + 1}^i = \left( {A_{t2}^i + H_{t2}^i{F_t}E_{t2}^i + \sum\limits_{\vartheta  = 1}^\hbar  {{A_{\vartheta ,t2}}{\varpi _{\vartheta ,t}}} } \right)\tilde \Psi _t^i \hfill \\
  \;\; + B_{t2}^i{w_t} + G_{t2}^i{v_t}\;, \hfill \\
\end{gathered}
\end{sequation}
with
\[\begin{gathered}
  A_{t1}^i = \left[ \begin{gathered}
  I - K_t^iC_t^i\;\;\;\;\;\;\;\;K_t^i\left( {\hat C_t^i - C_t^i} \right) \hfill \\
  K_t^iC_t^i\;\;\;\;\;\;\;\;I + K_t^i\left( {C_t^i - \hat C_t^i} \right) \hfill \\
\end{gathered}  \right]\;,\; \hfill \\
  H_{t1}^i = \left[ \begin{gathered}
   - K_t^i\mathcal{H}_t^i \hfill \\
  K_t^i\mathcal{H}_t^i \hfill \\
\end{gathered}  \right]\;,\;E_{t1}^i = E_{t2}^i = \left[ {E_t^i\;\;\;E_t^i} \right],\; \hfill \\
  B_{t1}^i = \left[ \begin{gathered}
   - K_t^iB_t^i \hfill \\
  K_t^iB_t^i \hfill \\
\end{gathered}  \right],\;G_{t1}^i = \left[ \begin{gathered}
   - K_t^iG_t^i \hfill \\
  K_t^iG_t^i \hfill \\
\end{gathered}  \right], \hfill \\
  A_{t2}^i = \left[ \begin{gathered}
  {A_t} - L_t^iC_t^i\;\;\;\;{A_t} - \hat A_t^i + L_t^i\left( {\hat C_t^i - C_t^i} \right) \hfill \\
  L_t^iC_t^i\;\;\;\;\;\;\;\;\;\;\;\;\hat A_t^i + L_t^i\left( {C_t^i - \hat C_t^i} \right) \hfill \\
\end{gathered}  \right]\;, \hfill \\
\end{gathered} \]

\begin{sequation}\label{eq:B.3}
\begin{gathered}
  H_{t2}^i = \left[ \begin{gathered}
  {\mathcal{F}_t} - L_t^i\mathcal{H}_t^i \hfill \\
  L_t^i\mathcal{H}_t^i \hfill \\
\end{gathered}  \right],\;{A_{\vartheta ,t2}} = {A_\vartheta }\left[ \begin{gathered}
  I\;\;\;\;I \hfill \\
  0\;\;\;\;0 \hfill \\
\end{gathered}  \right],\; \hfill \\
  B_{t2}^i = \left[ \begin{gathered}
  {B_t} - L_t^iB_t^i \hfill \\
  L_t^iB_t^i \hfill \\
\end{gathered}  \right],\;G_{t2}^i = \left[ \begin{gathered}
  {G_t} - L_t^iG_t^i \hfill \\
  L_t^iG_t^i \hfill \\
\end{gathered}  \right]\;. \hfill \\
\end{gathered}
\end{sequation}

Based on the augmented system from Eqs.(\ref{eq:B.1}) and (\ref{eq:B.2}), we set the covariance matrices to be $\tilde \Sigma _t^i = E\left( {\tilde \Psi _t^i{{\left( {\tilde \Psi _t^i} \right)}^T}} \right)$ and $\tilde \Theta _t^i = E\left( {\Psi _t^i{{\left( {\Psi _t^i} \right)}^T}} \right)$ under Eqs.(\ref{eq:B.1})-(\ref{eq:B.3}). Then, the Riccati-like equations for the covariance matrices of the estimation errors are derived as follows:

\begin{shrinkeq}{-1.5ex}
\begin{sequation}\label{eq:B.4}
\begin{gathered}
  \tilde \Theta _t^i = \left( {A_{t1}^i + H_{t1}^i{F_t}E_{t1}^i} \right)\tilde \Sigma _t^i{\left( {A_{t1}^i + H_{t1}^i{F_t}E_{t1}^i} \right)^T} \hfill \\
  \;\; + B_{t1}^iE\left( {{w_t}w_t^T} \right){\left( {B_{t1}^i} \right)^T} + G_{t1}^i{R_t}{\left( {G_{t1}^i} \right)^T}\;, \hfill \\
\end{gathered}
\end{sequation}
and
\begin{sequation}\label{eq:B.5}
\begin{gathered}
  \tilde \Sigma _{t + 1}^i = \left( {A_{t2}^i + H_{t2}^i{F_t}E_{t2}^i} \right)\tilde \Sigma _t^i{\left( {A_{t2}^i + H_{t2}^i{F_t}E_{t2}^i} \right)^T} \hfill \\
  \;\; + \sum\limits_{\vartheta  = 1}^\hbar  {{A_{\vartheta ,t2}}\tilde \Sigma _t^iA_{\vartheta ,t2}^T{\theta _{\vartheta ,k}}}  + B_{t2}^iE\left( {{w_t}w_t^T} \right){\left( {B_{t2}^i} \right)^T} \hfill \\
  \;\; + G_{t2}^i{R_t}{\left( {G_{t2}^i} \right)^T}\;. \hfill \\
\end{gathered}
\end{sequation}
\end{shrinkeq}

\section{Proof of Theorem 1 and derivation of the convex optimization problem.}
\setcounter{equation}{0}
\setcounter{remark}{0}
\renewcommand{\theequation}{C.\arabic{equation}}
\renewcommand{\theremark}{C.\arabic{remark}}
This section gives the proof of Theorem 1 shown in Section 3.2.

For the $i^{th}$ subsystem in Eq.(\ref{eq:20}) with ${v_k} = 0$, it is derived from Lemma 1 that the requirements (i.e. Conditions 1 and 2) are held, if and only if there exists a matrix ${{\rm X}_i} > 0$ such that
\begin{equation}\label{eq:C.1}
\begin{gathered}
  {\left( {A_{t3}^i + \Delta A_{t3}^i} \right)^T}{{\rm X}_i}\left( {A_{t3}^i + \Delta A_{t3}^i} \right) + {\left( {D_{t3}^i} \right)^T}D_{t3}^i - {{\rm X}_i} \hfill \\
  \;\; + \sum\limits_{\vartheta  = 1}^\hbar  {{\varpi _{\vartheta ,k}}A_{\vartheta ,t3}^T{{\rm X}_i}{A_{\vartheta ,t3}}}  \hfill \\
  \;\; + {\left( {A_{t3}^i + \Delta A_{t3}^i} \right)^T}{{\rm X}_i}B_{t3}^i{\left( {\gamma _i^2 - {{\left( {B_{t3}^i} \right)}^T}{{\rm X}_i}B_{t3}^i} \right)^{ - 1}} \hfill \\
  \;\; \times {\left( {B_{t3}^i} \right)^T}{{\rm X}_i}\left( {A_{t3}^i + \Delta A_{t3}^i} \right)\; < 0\;. \hfill \\
\end{gathered}
\end{equation}

Thus, the inequality (\ref{eq:27}) is obtained from Eq.(\ref{eq:C.1}) using the Schur complement lemma \cite{Bib41}. On the other hand, when the system following Eq.(\ref{eq:20}) is mean-square stable, the ${H_2}$ performance ${J^i}$ can be expressed as follows \cite{Bib42}:
\begin{equation}\label{eq:C.2}
{J^i} = \mathop {\lim }\limits_{t \to \infty } E\left( {{{\left( {\tilde e_{k + 1}^i} \right)}^T}\tilde e_{k + 1}^i} \right) = Tr\left( {{R_k}{{\left( {G_{t3}^i} \right)}^T}{{\bar {\rm X}}_i}G_{t3}^i} \right)
\end{equation}

\noindent
where ${\bar {\rm X}_i}$ is the solution of the following Lyapunov equation:
\begin{equation}\label{eq:C.3}
\begin{gathered}
  {\left( {A_{t3}^i + \Delta A_{t3}^i} \right)^T}{{\bar {\rm X}}_i}\left( {A_{t3}^i + \Delta A_{t3}^i} \right) - {{\bar {\rm X}}_i} \hfill \\
  \;\; + {\left( {D_{t3}^i} \right)^T}D_{t3}^i + \sum\limits_{\vartheta  = 1}^\hbar  {{\varpi _{\vartheta ,k}}A_{\vartheta ,t3}^T{{\bar {\rm X}}_i}{A_{\vartheta ,t3}}}  = 0\;. \hfill \\
\end{gathered}
\end{equation}

Meanwhile, it is known from Eq.(\ref{eq:27}) that
\begin{equation}\label{eq:C.4}
\begin{gathered}
  {\left( {A_{t3}^i + \Delta A_{t3}^i} \right)^T}{{\rm X}_i}\left( {A_{t3}^i + \Delta A_{t3}^i} \right) - {{\rm X}_i} \hfill \\
  \;\; + {\left( {D_{t3}^i} \right)^T}D_{t3}^i + \sum\limits_{\vartheta  = 1}^\hbar  {{\varpi _{\vartheta ,k}}A_{\vartheta ,t3}^T{{\rm X}_i}{A_{\vartheta ,t3}}}  < 0\;. \hfill \\
\end{gathered}
\end{equation}

Then, it is concluded that ${\bar {\rm X}_i} \leqslant {{\rm X}_i}$. In this case, the upper bound of ${H_2}$ performance ${J^i}$ can be treated as $Tr\left( {{R_k}{{\left( {G_{t3}^i} \right)}^T}{{\rm X}_i}G_{t3}^i} \right)$, where ${{\rm X}_i}$ is the solution to the matrix inequality based on Eq.(\ref{eq:27}).

Complete the proof of Theorem 1.

Based on Lemma 2, we define $W_i^T \triangleq \left[ \begin{gathered}
  {W_{i,1}}\;\;\;\;\;0 \hfill \\
  {W_{i,3}}\;\;{W_{i,4}} \hfill \\
\end{gathered}  \right]$, and the inequality from Eq.(\ref{eq:27}) will hold. If there exists a matrix ${W_i}$, meanwhile, the following inequality
\begin{sequation}\label{eq:C.5}
\left[ \begin{gathered}
  {{\rm X}_{{W_i}}}\;\;\;\;{A_{{W_i}}} + {F_{{W_i}}}{F_k}{E_{{W_i}}}\;\;\;\;\;{B_{{W_i}}} \hfill \\
  *\;\;\;{\left( {D_{t3}^i} \right)^T}D_{t3}^i + {\Upsilon _{t3}} - {{\rm X}_i}\;\;\;0 \hfill \\
  *\;\;\;\;\;\;\;\;\;\;\;\;\;\;*\;\;\;\;\;\;\;\;\;\;\;\;\;\; - \gamma _i^2I \hfill \\
\end{gathered}  \right] < 0
\end{sequation}

\noindent
holds, and the form of the inequality is similar to Eq.(\ref{eq:27}). Note that parameters are defined as follows:

${\Upsilon _{t3}}$ is defined in Eq.(\ref{eq:27}), and
\begin{sequation}\label{eq:C.6}
\begin{gathered}
  {{\rm X}_{{W_i}}} = {{\rm X}_i} - {W_i} - W_i^T\;, \hfill \\
  {A_{{W_i}}} = \left[ \begin{gathered}
  {W_{i,1}}{A_k}\;\;\;\;\;\;\;\;\;\;\;\;\;\;\;\;\;\;\;\;\;0 \hfill \\
  {A_{{W_{i,3}}}}\;\;\;\;{W_{i,4}}\left( {\hat A_k^i - L_k^i\hat C_k^i} \right) \hfill \\
\end{gathered}  \right]\;, \hfill \\
  {B_{{W_i}}} = \left[ \begin{gathered}
  {W_{i,1}}{B_k} \hfill \\
  \left( {{W_{i,3}} + {W_{i,4}}} \right){B_k} - {W_{i,4}}L_k^iB_k^i \hfill \\
\end{gathered}  \right]\;, \hfill \\
  {F_{{W_i}}} = \left[ \begin{gathered}
  {W_{i,1}}{\mathcal{F}_k} \hfill \\
  {W_{i,3}}{\mathcal{F}_k} + {W_{i,4}}\left( {{\mathcal{F}_k} - L_k^i\mathcal{H}_k^i} \right) \hfill \\
\end{gathered}  \right]\;,\;{E_{{W_i}}} = \left[ {{E_k}\;\;0} \right]\;, \hfill \\
  {A_{{W_{i,3}}}} = \left( {{W_{i,3}} + {W_{i,3}}} \right){A_k} + {W_{i,4}}\left( {L_k^i\left( {\hat C_k^i - C_k^i} \right) - \hat A_k^i} \right)\;. \hfill \\
\end{gathered}
\end{sequation}

From Theorem 1, the ${H_2}$ performance in Eq.(\ref{eq:20}) satisfies ${J^i} \leqslant Tr\left( {{R_k}{{\left( {G_{t3}^i} \right)}^T}{{\rm X}_i}G_{t3}^i} \right)$. Then, the upper bound of a symmetric matrix ${\rho _{i,0}}$ is introduced to conform to ${R_k}{\left( {G_{t3}^i} \right)^T}{{\rm X}_i}G_{t3}^i \leqslant {\rho _{i,0}}$. Thus, the inequality ${\left( {G_{t3}^i} \right)^T}{{\rm X}_i}G_{t3}^i \leqslant R_k^{ - 1}{\rho _{i,0}}$ is satisfied. Then, we define ${\rho _i} \triangleq R_k^{ - 1}{\rho _{i,0}}$, based on the Schur complement lemma, the inequality ${\left( {G_{t3}^i} \right)^T}{{\rm X}_i}G_{t3}^i \leqslant {\rho _i}$ is equivalent to
\begin{sequation}\label{eq:C.7}
\left[ \begin{gathered}
   - {{\rm X}_i}\;\;\;{{\rm X}_i}G_{t3}^i \hfill \\
   * \;\;\;\;\;\;\;\;\; - {\rho _i} \hfill \\
\end{gathered}  \right] < 0\;.
\end{sequation}

\noindent
It means that the inequality given in Eq.(\ref{eq:C.7}) is held, if there exists a matrix ${W_i}$, which meets the following linear matrix inequalities (LMIs) condition:
\begin{sequation}\label{eq:C.8}
\left[ \begin{gathered}
  {{\rm X}_{{W_i}}}\;\;\;\;{G_{{W_i}}} \hfill \\
  *\;\;\;\;\;\;\; - {\rho _i} \hfill \\
\end{gathered}  \right] < 0\;,
\end{sequation}

\noindent
where ${{\rm X}_{{W_i}}}$ is given in Eq.(\ref{eq:C.6}), while ${G_{{W_i}}} = \left[ \begin{gathered}
  {W_{i,1}}{G_k} \hfill \\
  \left( {{W_{i,3}} + {W_{i,4}}} \right){G_k} - {W_{i,4}}L_k^iG_k^i \hfill \\
\end{gathered}  \right]$.

\begin{remark}\label{Remark C.1}
Note that for the matrix ${W_i}$, if there is no structural constraint, the inequality in Eq.(\ref{eq:C.5}) is equivalent to Eq.(\ref{eq:27}), and the inequality in Eq.(\ref{eq:C.8}) will be equivalent to Eq.(\ref{eq:C.7}). However, the nonlinear terms in Eqs.(\ref{eq:27}) and (\ref{eq:C.7}) are unable to be eliminated in this case. For this reason, an equivalent LMI will be given, which is used to represent the inequality in Eq.(\ref{eq:C.5}), and then the local estimation parameters will be obtained by solving a convex optimization problem.
\end{remark}

\section{Proof of Theorem 3.}
\setcounter{equation}{0}
\renewcommand{\theequation}{D.\arabic{equation}}
This section gives the proof of Theorem 3 shown in Section 4.1.

The solutions of $\bar \Theta _t^i$ and $\bar \Sigma _{t + 1}^i$ are derived from $\bar \Sigma _t^i$ in Eqs.(\ref{eq:35}) and (\ref{eq:36}), so that the upper bound of $\Sigma _t^i$ under Eq.(\ref{eq:34}) can be represented as follows \cite{Bib19,Bib22}:
\begin{equation}\label{eq:D.1}
\Sigma _t^i = \left[ \begin{gathered}
  {\Sigma _{11,\;t}}\;\;\;{\Sigma _{12,\;t}} \hfill \\
  {\Sigma _{21,\;t}}\;\;\;{\Sigma _{22,\;t}} \hfill \\
\end{gathered}  \right] = \left[ \begin{gathered}
  \bar \Sigma _t^i\;\;\;\;\;\;\;\;0 \hfill \\
  0\;\;\;\;{P_t} - \bar \Sigma _t^i \hfill \\
\end{gathered}  \right]\;.
\end{equation}

To estimate the filter parameters $\hat C_t^i$, $K_t^i$, $\hat A_t^i$ and $L_t^i$, considering the given recursive equations for $\bar \Sigma _{t + 1}^i$ and ${P_{t + 1}}$ in Eqs.(\ref{eq:43}) and (\ref{eq:44}), the approach of optimizing measurement and filtering error covariance matrices is developed based on the local estimators $\hat x_{t|t}^i$ and $\hat x_{t + 1|t}^i$ in Eqs.(\ref{eq:17}) and (\ref{eq:18}), respectively. Therefore, the derivation process is shown as follows:

\textbf{Step 1:} Solve the filter parameter $\hat C_t^i$.

Due to setting $t = k - {\tau ^i}\left( {{k_1}} \right)$, the measurement error $\tilde y_k^i$ is defined as:
\begin{equation}\label{eq:D.2}
\begin{gathered}
  \tilde y_k^i = y_k^i - \hat y_k^i \hfill \\
   = \left( {A_{t4}^i + H_{t4}^i{F_t}E_{t4}^i} \right)\tilde \Psi _t^i + B_t^i{w_t} + G_t^i{v_t}\;, \hfill \\
\end{gathered}
\end{equation}
where
\begin{equation}\label{eq:D.3}
A_{t4}^i = \left[ {C_t^i\;\;\;C_t^i - \hat C_t^i} \right]\;,\;H_{t4}^i = \mathcal{H}_t^i,\;E_{t4}^i = \left[ {E_t^i\;\;\;E_t^i} \right]\;.
\end{equation}

Next, obtain the upper bound for the covariance of the measurement error from Eq.(\ref{eq:D.1}), and Lemmas 3 and 4:
\begin{equation}\label{eq:D.4}
\begin{gathered}
  E\left( {\tilde y_k^i{{\left( {\tilde y_k^i} \right)}^T}} \right) \hfill \\
   \leqslant A_{t4}^i\Sigma _t^i{\left( {A_{t4}^i} \right)^T} + \alpha _t^{ - 1}H_{t4}^i{\left( {H_{t4}^i} \right)^T} \hfill \\
  \;\; + B_t^iE\left( {{w_t}w_t^T} \right) + G_t^i{R_t}{\left( {G_t^i} \right)^T} \hfill \\
  \;\; + A_{t4}^i\Sigma _t^i{\left( {E_{t4}^i} \right)^T}{\left( {\alpha _t^{ - 1}I - E_{t4}^i\Sigma _t^i{{\left( {E_{t4}^i} \right)}^T}} \right)^{ - 1}}E_{t4}^i\Sigma _t^i{\left( {A_{t4}^i} \right)^T} \hfill \\
   = \bar \Pi _t^i \hfill \\
   = C_t^i\bar \Sigma _t^i{\left( {C_t^i} \right)^T} + \left( {C_t^i - \hat C_t^i} \right)\left( {{P_t} - \bar \Sigma _t^i} \right){\left( {C_t^i - \hat C_t^i} \right)^T} \hfill \\
  \;\; + \alpha _t^{ - 1}\mathcal{H}_t^i{\left( {\mathcal{H}_t^i} \right)^T} + B_t^iE\left( {{w_t}w_t^T} \right) + G_t^i{R_t}{\left( {G_t^i} \right)^T} \hfill \\
  \;\; + \left( {C_t^i\bar \Sigma _t^i{{\left( {E_t^i} \right)}^T} + \left( {C_t^i - \hat C_t^i} \right)\left( {{P_t} - \bar \Sigma _t^i} \right){{\left( {E_t^i} \right)}^T}} \right) \hfill \\
  \;\; \times {\left( {\alpha _t^{ - 1}I - E_t^i{P_t}{{\left( {E_t^i} \right)}^T}} \right)^{ - 1}} \hfill \\
  \;\; \times {\left( {C_t^i\bar \Sigma _t^i{{\left( {E_t^i} \right)}^T} + \left( {C_t^i - \hat C_t^i} \right)\left( {{P_t} - \bar \Sigma _t^i} \right){{\left( {E_t^i} \right)}^T}} \right)^T}\;. \hfill \\
\end{gathered}
\end{equation}

\noindent
Therefore, we use the first order derivative $\displaystyle\frac{{\partial \bar \Pi _t^i}}{{\partial \hat C_t^i}} = 0$ to obtain $\hat C_t^i$:
\begin{equation}\label{eq:D.5}
\hat C_t^i = C_t^i\left( {I + \bar \Sigma _t^i{{\left( {E_t^i} \right)}^T}{{\left( {M_t^i} \right)}^{ - 1}}E_t^i} \right)\;,
\end{equation}
where $M_t^i = \alpha _t^{ - 1}I - E_t^i\bar \Sigma _t^i{\left( {E_t^i} \right)^T} > 0$.

Finally, similar to the derivation of $\hat C_t^i$, the other filter parameters such as $K_t^i$, $\hat A_t^i$ and $L_t^i$ are generated.

\textbf{Step 2:} Derive the error covariance matrices $\bar \Theta _t^i$, $\bar \Sigma _{t + 1}^i$ and ${P_{t + 1}}$, respectively.

Firstly, Theorem 2 defines the solutions of $\Theta _t^i$ and $\Sigma _{t + 1}^i$ in Eqs.(\ref{eq:32}) and (\ref{eq:33}). Subsequently, the upper bounds for the covariance matrices of the estimation errors $\bar \Theta _t^i$ and $\bar \Sigma _{t + 1}^i$ from Eqs.(\ref{eq:35}) and (\ref{eq:36}) are derived as follows:
\begin{equation}\label{eq:D.6}
\begin{gathered}
  \bar \Theta _t^i = \left( {I - K_t^iC_t^i} \right)\bar \Sigma _t^i\;{\left( {I - K_t^iC_t^i} \right)^T} \hfill \\
  \;\; + K_t^i\left( {\hat C_t^i - C_t^i} \right)\left( {{P_t} - \bar \Sigma _t^i} \right){\left( {K_t^i\left( {\hat C_t^i - C_t^i} \right)} \right)^T} \hfill \\
  \;\; + \;\alpha _t^{ - 1}K_t^i\mathcal{H}_t^i{\left( {K_t^i\mathcal{H}_t^i} \right)^T} + K_t^iG_t^i{R_t}{\left( {K_t^iG_t^i} \right)^T} \hfill \\
  \;\; + K_t^iB_t^iE\left( {{w_t}w_t^T} \right){\left( {K_t^iB_t^i} \right)^T} \hfill \\
  \;\; + \left( {\bar \Sigma _t^i + K_t^i\left( {\hat C_t^i\left( {{P_t} - \bar \Sigma _t^i} \right) - C_t^i{P_t}} \right)} \right) \hfill \\
  \;\; \times {\left( {E_t^i} \right)^T}{\left( {\tilde M_t^i} \right)^{ - 1}}E_t^i \hfill \\
  \;\; \times {\left( {\bar \Sigma _t^i + K_t^i\left( {\hat C_t^i\left( {{P_t} - \bar \Sigma _t^i} \right) - C_t^i{P_t}} \right)} \right)^T}, \hfill \\
\end{gathered}
\end{equation}
and
\begin{equation}\label{eq:D.7}
\begin{gathered}
  \bar \Sigma _{t + 1}^i = \left( {{A_t} - L_t^iC_t^i} \right)\bar \Sigma _t^i{\left( {{A_t} - L_t^iC_t^i} \right)^T} \hfill \\
  \;\; + \left( {{A_t} - \hat A_t^i + L_t^i\left( {\hat C_t^i - C_t^i} \right)} \right) \hfill \\
  \;\; \times \left( {{P_t} - \bar \Sigma _t^i} \right){\left( {{A_t} - \hat A_t^i + L_t^i\left( {\hat C_t^i - C_t^i} \right)} \right)^T} \hfill \\
  \;\; + \alpha _t^{ - 1}\left( {{\mathcal{F}_t} - L_t^i\mathcal{H}_t^i} \right){\left( {{\mathcal{F}_t} - L_t^i\mathcal{H}_t^i} \right)^T} \hfill \\
  \;\; + \sum\limits_{\vartheta  = 1}^\hbar  {{A_\vartheta }} {P_t}\left( {A_\vartheta ^T} \right){\theta _{\vartheta ,t}} \hfill \\
  \;\; + \left( {\left( {{A_t} - L_t^iC_t^i} \right)\bar \Sigma _t^i} \right. \hfill \\
  \;\;\;\;\;\left. { + \left( {{A_t} - \hat A_t^i + L_t^i\left( {\hat C_t^i - C_t^i} \right)} \right)\left( {{P_t} - \bar \Sigma _t^i} \right)} \right) \hfill \\
  \;\; \times {\left( {E_t^i} \right)^T}{\left( {\tilde M_t^i} \right)^{ - 1}}E_t^i \hfill \\
  \;\; \times \left( {\left( {{A_t} - L_t^iC_t^i} \right)\bar \Sigma _t^i} \right. \hfill \\
  \;\;\;\;\;{\left. { + \left( {{A_t} - \hat A_t^i + L_t^i\left( {\hat C_t^i - C_t^i} \right)} \right)\left( {{P_t} - \bar \Sigma _t^i} \right)} \right)^T} \hfill \\
  \;\; + \left( {{B_t} - L_t^iB_t^i} \right)E\left( {{w_t}w_t^T} \right){\left( {{B_t} - L_t^iB_t^i} \right)^T} \hfill \\
  \;\; + \left( {{G_t} - L_t^iG_t^i} \right){R_t}{\left( {{G_t} - L_t^iG_t^i} \right)^T}\;. \hfill \\
\end{gathered}
\end{equation}
in which $\tilde M_t^i = \alpha _t^{ - 1}I - E_t^i{P_t}{\left( {E_t^i} \right)^T}$.

According to the above derivation, introducing filter parameters $\hat C_t^i$, $K_t^i$, $\hat A_t^i$ and $L_t^i$ derived from Eqs.(\ref{eq:45})-(\ref{eq:48}), and they are substituted into the upper bounds $\bar \Theta _t^i$ and $\bar \Sigma _{t + 1}^i$ from Eqs.(\ref{eq:D.6}) and (\ref{eq:D.7}), respectively. Therefore, the error covariance matrices $\bar \Theta _t^i$ and $\bar \Sigma _{t + 1}^i$ are rewritten as:
\begin{equation}\label{eq:D.8}
\bar \Theta _t^i = \bar \Sigma _t^i + \bar \Sigma _t^i{\left( {E_t^i} \right)^T}{\left( {\tilde M_t^i} \right)^{ - 1}}E_t^i\bar \Sigma _t^i - \Lambda _t^i{\left( {\Xi _t^i} \right)^{ - 1}}{\left( {\Lambda _t^i} \right)^T}\;,
\end{equation}
and
\begin{equation}\label{eq:D.9}
\bar \Sigma _{t + 1}^i = \tilde \Delta _t^i - L_t^i\tilde \nabla _t^i\;,
\end{equation}
where
\[\Lambda _t^i = \left( {I + \bar \Sigma _t^i{{\left( {E_t^i} \right)}^T}{{\left( {M_t^i} \right)}^{ - 1}}E_t^i} \right)\bar \Sigma _t^i{\left( {C_t^i} \right)^T}\;,\]

\[\begin{gathered}
  \Xi _t^i = C_t^i\bar \Sigma _t^i\;\left( {I + {{\left( {E_t^i} \right)}^T}{{\left( {M_t^i} \right)}^{ - 1}}E_t^i\bar \Sigma _t^i} \right){\left( {C_t^i} \right)^T} \hfill \\
  \;\; + \alpha _t^{ - 1}\mathcal{H}_t^i{\left( {\mathcal{H}_t^i} \right)^T} + B_t^iE\left( {{w_t}w_t^T} \right){\left( {B_t^i} \right)^T} + G_t^i{R_t}{\left( {G_t^i} \right)^T}\;, \hfill \\
\end{gathered} \]

\[\begin{gathered}
  \tilde \Delta _t^i = {A_t}\left( {I + \bar \Sigma _t^i{{\left( {E_t^i} \right)}^T}{{\left( {M_t^i} \right)}^{ - 1}}E_t^i} \right)\bar \Sigma _t^iA_t^T + \alpha _t^{ - 1}{\mathcal{F}_t}\mathcal{F}_t^T \hfill \\
  \;\; + {B_t}E\left( {{w_t}w_t^T} \right)B_t^T + {G_t}{R_t}G_t^T + \sum\limits_{\vartheta  = 1}^\hbar  {{A_\vartheta }} {P_t}\left( {A_\vartheta ^T} \right){\theta _{\vartheta ,t}}\;, \hfill \\
\end{gathered} \]
and \[\begin{gathered}
  \tilde \nabla _t^i = C_t^i\left( {I + \bar \Sigma _t^i{{\left( {E_t^i} \right)}^T}{{\left( {M_t^i} \right)}^{ - 1}}E_t^i} \right)\bar \Sigma _t^iA_t^T \hfill \\
  \;\; + \alpha _t^{ - 1}\mathcal{H}_t^i\mathcal{F}_t^T + B_t^iE\left( {{w_t}w_t^T} \right)B_t^T + G_t^i{R_t}G_t^T\;. \hfill \\
\end{gathered} \]

Subsequently, the covariance for the state with time-varying parametric uncertainty is denoted as follows:
\begin{equation}\label{eq:D.10}
\begin{gathered}
  {{\tilde P}_{t + 1}} = E\left( {{x_{t + 1}}x_{t + 1}^T} \right) \hfill \\
   = \left( {{A_t} + {\mathcal{F}_t}{F_t}{E_t}} \right){{\tilde P}_t}{\left( {{A_t} + {\mathcal{F}_t}{F_t}{E_t}} \right)^T} \hfill \\
  \;\; + {\theta _{\vartheta ,k}}\sum\limits_{\vartheta  = 1}^\hbar  {{A_\vartheta }{{\tilde P}_t}A_\vartheta ^T}  \hfill \\
  \;\; + {B_t}E\left( {{w_t}w_t^T} \right)B_t^T + {G_t}{R_t}G_t^T\;. \hfill \\
\end{gathered}
\end{equation}

Afterwards, the upper bound for the covariance matrix of the state is obtained from
\begin{equation}\label{eq:D.11}
\begin{gathered}
  {{\tilde P}_{t + 1}} \leqslant {A_t}{\left( {P_t^{ - 1} - {\alpha _t}E_t^T{E_t}} \right)^{ - 1}}A_t^T \hfill \\
  \;\; + \alpha _t^{ - 1}{\mathcal{F}_t}\mathcal{F}_t^T + {\theta _{\vartheta ,k}}\sum\limits_{\vartheta  = 1}^\hbar  {{A_\vartheta }{P_t}A_\vartheta ^T}  \hfill \\
  \;\; + {B_t}E\left( {{w_t}w_t^T} \right)B_t^T + {G_t}{R_t}G_t^T \hfill \\
   = {P_{t + 1}} \hfill \\
\end{gathered}
\end{equation}
with the initial value ${P_0} = {x_0}x_0^T + {P_0}$, which is similarly calculated using the method reported in \cite{Bib12}.

\section{Shown and described for the figures and tables of the numerical simulation.}
\setcounter{equation}{0}
\renewcommand{\theequation}{I.\arabic{equation}}

This section shows and describes the distributed estimation results using the logic ZOH presented in Section 5.

The range of the performance indicator is presented in Table E.1.

\begin{center}\label{Tab E.1}
\vbox{\centering{\small
Table E.1\quad Comparison of the upper and lower bounds for error covariance}
\vskip2mm
\renewcommand{\baselinestretch}{0.5}
{\footnotesize\centerline{\tabcolsep=4pt\begin{tabular*}{0.5\textwidth}{ccccc}
\toprule
Method & Position & Velocity & Acceleration & Trace\\
\hline
    RFHDFE 1 & 0.0096-0.0150 & 0.0016-0.0100 & 0.0100-0.0372 & 0.0300-0.0533\\
    RFHDFE 2 & 0.0056-0.0100 & 0.0033-0.0100 & 0.0100-0.1258 & 0.0291-0.1313\\
    RFHDFE 3 & 0.0084-0.0109 & 0.0004-0.0100 & 0.0100-0.0823 & 0.0279-0.0883\\
    IRFHKF   & 0.0100-0.0267 & 0.0100-0.0235 & 0.0100-0.1779 & 0.0300-0.2280\\
\bottomrule
\end{tabular*}}}}
\end{center}

The distributed estimation results for the estimated states are shown in Figs. E.1 (a)-(c), which are obtained from Eqs.(\ref{eq:17}) and (\ref{eq:18}18), as well as the filter parameters given in Theorems 3. Meanwhile, for the established system model using logic ZOH, the linear compensation method for the packet dropouts, and the weighted fusion criteria obtained from the local estimation are investigated. Based on the preceding discussion, the proposed RFHDFE approach possesses the advantage of the better system performance for target tracking and computational efficiency than the state with packet disorders and one-step prediction estimation schemes.

\begin{figure}[H]
\setcounter{subfigure}{0}
{\centering
\subfigure[State estimation for position]{\includegraphics[width=0.8\linewidth]{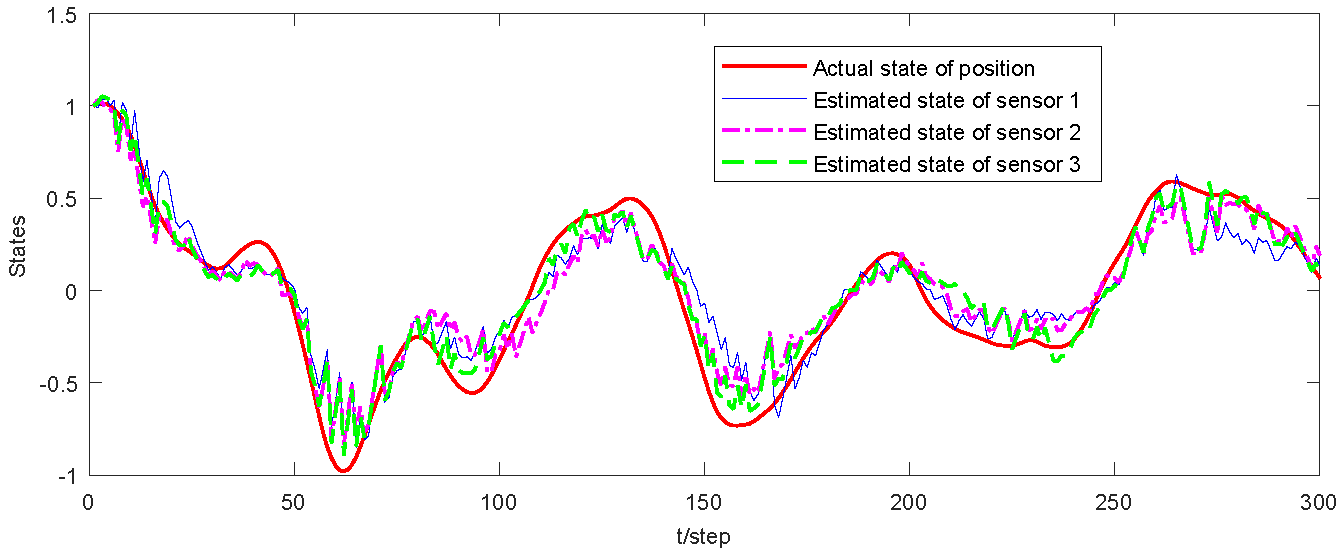}}
\\
\subfigure[State estimation for velocity]{\includegraphics[width=0.8\linewidth]{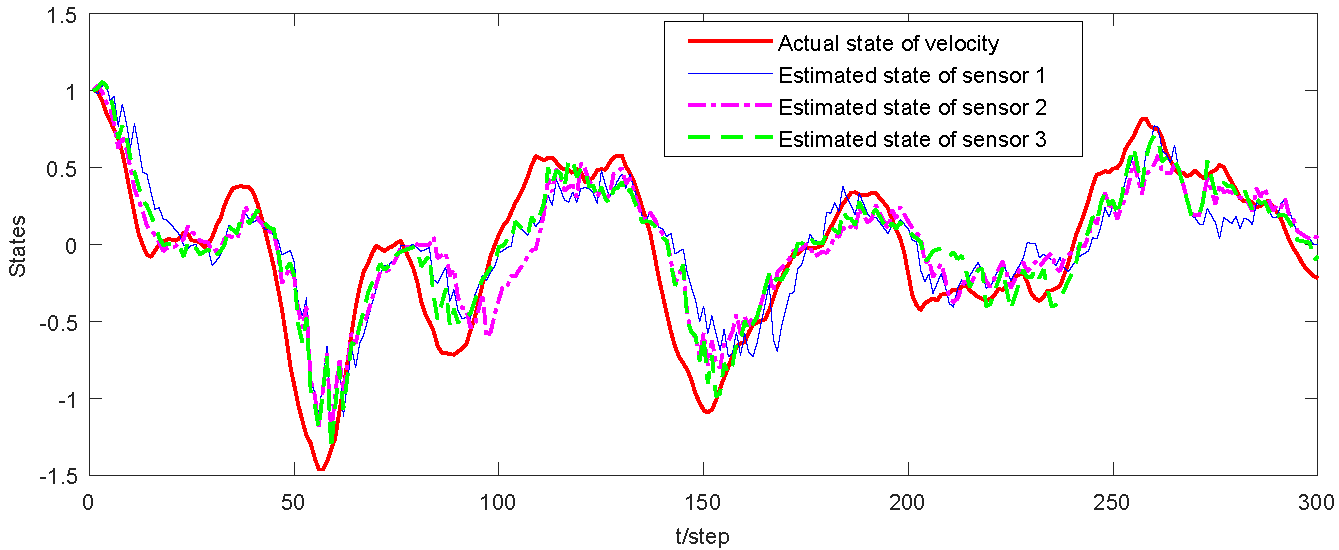}}
\\
\subfigure[State estimation for acceleration]{\includegraphics[width=0.8\linewidth]{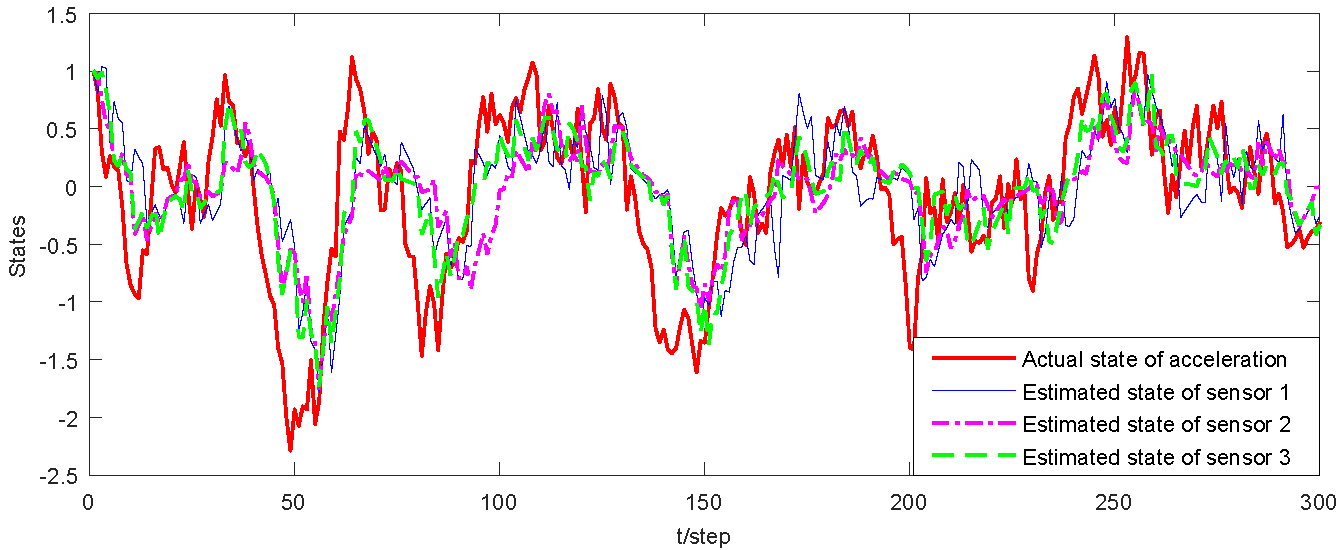}}
\\
\vspace{-4mm}
\caption{Distributed estimation results using RFHDFE}}
\label{Fig E.1}
\end{figure}
%
\bibliographystyle{ieee_fullname}

\end{document}